\crefname{prop_ind_1}{}{}
\crefname{prop_ind_2}{}{}
\newenvironment{tightcenter}
 {\parskip=0pt\par\nopagebreak\centering}
 {\par\noindent\ignorespacesafterend}
\newlength{\RoundedBoxWidth}
\newsavebox{\GrayRoundedBox}
\newenvironment{GrayBox}[1]%
   {\setlength{\RoundedBoxWidth}{\textwidth-4.5ex}
    \def\boxheading{#1}
    \begin{lrbox}{\GrayRoundedBox}
       \begin{minipage}{\RoundedBoxWidth}%
   }{%
       \end{minipage}
    \end{lrbox}%
    \begin{tightcenter}%
    \begin{tikzpicture}%
       \node(Text)[draw=black!20,fill=white,rounded corners,%
             inner sep=2ex,text width=\RoundedBoxWidth]%
             {\usebox{\GrayRoundedBox}};
        \coordinate(x) at (current bounding box.north west);
        \node [draw=white,rectangle,inner sep=3pt,anchor=north west,fill=white]
        at ($(x)+(6pt,.75em)$) {\boxheading};
    \end{tikzpicture}
    \end{tightcenter}\vspace{0pt}%
    \ignorespacesafterend
}
\newenvironment{XProblem}[2][]{\noindent\ignorespaces%
                                \FrameSep=6pt%
                                \parindent=0pt%
                \vspace*{-.5em}
                \ifthenelse{\isempty{#1}}{%
                  \begin{GrayBox}{\textsc{#2}}%
                }{%
                  \begin{GrayBox}{\textsc{#2} parametrised by~{#1}}%
                }
                \newcommand\Prob{Task:}%
                \newcommand\Input{Input:}%
                \begin{tabular*}{\textwidth}{@{\hspace{.1em}} >{\itshape} p{1.6cm} p{0.8\textwidth} @{}}%
            }{
                \end{tabular*}%
                \end{GrayBox}%
                \vspace*{-.5em}
                \ignorespacesafterend
            }
\newcommand{\wf}{\mathcal{C}}
\newcommand{\GF}{\text{GF}}
\newcommand{\thenet}{\textnormal{\textsf{Net}}}
\newcommand{\thenetg}{\textnormal{\textsf{Net graph}}}
\newcommand{\thenetgs}{\textnormal{\textsf{Net graphs}}}
\newtheorem{claim}{Claim}
\newtheorem{corollary}{Corollary}
\newtheorem{lemma}{Lemma}
\newtheorem{proposition}{Proposition}
\newtheorem{observation}{Observation}
\newcommand{\name}[1]{\textnormal{\textsc{#1}}}
\newcommand{\pname}{\textsc}
\newcommand{\ProblemFormat}[1]{\pname{#1}}
\newcommand{\ProblemIndex}[1]{\index{problem!\ProblemFormat{#1}}}
\newcommand{\ProblemName}[1]{\ProblemFormat{#1}\ProblemIndex{#1}{}\xspace}
\newcommand{\probSP}{\ProblemName{Xor Constrained Shortest Path}}
\newcommand{\probSC}{\ProblemName{Xor Constrained Shortest Cycle}}
\newcommand{\probCutUncut}{\ProblemName{Two-Sets Cut-Uncut}}
\newcommand{\probND}{\ProblemName{Network Diversion}}
\newcommand{\probGND}{\ProblemName{Generalized Network Diversion}}
\newcommand{\probTwoDCS}{\ProblemName{2-Disjoint Connected Subgraphs}}
\DeclareMathOperator{\cut}{cut}
\DeclareMathOperator{\operatorClassP}{P}
\newcommand{\classP}{\ensuremath{\operatorClassP}}
\DeclareMathOperator{\operatorClassNP}{NP}
\newcommand{\classNP}{\ensuremath{\operatorClassNP}}
\DeclareMathOperator{\operatorClassFPT}{FPT\xspace}
\newcommand{\classFPT}{\ensuremath{\operatorClassFPT}\xspace}
\DeclareMathOperator{\operatorClassW}{W}
\newcommand{\classW}[1]{\ensuremath{\operatorClassW[#1]}}
\newcommand{\Oh}{\mathcal{O}}
\title{Two-sets cut-uncut on planar graphs\thanks{The research leading to these results has received funding from the Research Council of Norway via the project BWCA (grant no. 314528). Matthias Bentert is supported by the European Research Council (ERC) under the European Union’s Horizon 2020 research and innovation programme (grant agreement No. 819416).}}
\author{Matthias Bentert\thanks{
Department of Informatics, University of Bergen, Norway. Emails:
\texttt{matthias.bentert@uib.no},
\texttt{pal.drange@uib.no},
\texttt{fomin@ii.uib.no},
\texttt{petr.golovach@uib.no},
\texttt{tuukka.korhonen@uib.no}
}
\and
Pål Grønås Drange\addtocounter{footnote}{-1}\footnotemark{}
\and
Fedor V.\ Fomin\addtocounter{footnote}{-1}\footnotemark{}
\and Petr A.\ Golovach\addtocounter{footnote}{-1}\footnotemark{}
\and Tuukka Korhonen\addtocounter{footnote}{-1}\footnotemark{}}
\date{}
\begin{document}
\maketitle

\begin{abstract}
  We study the following \probCutUncut problem on planar graphs.
  Therein, one is given an undirected planar graph $G$ and two sets of vertices~$S$ and~$T$.
  The question is, what is the minimum number of edges to remove from $G$,
  such that we separate all of $S$ from all of $T$, while maintaining that every vertex in $S$, and respectively in $T$, stays in the same connected component.
  %
  We show that this problem can be solved in time~$2^{|S|+|T|} n^{\Oh{(1)}}$ with a one-sided error randomized algorithm.
  Our algorithm implies a polynomial-time algorithm for the network diversion problem on planar graphs, which resolves an open question from the literature.
  More generally,  we show that  \probCutUncut remains fixed-parameter tractable even when parameterized by the number $r$ of faces 
  in the plane graph covering  the terminals $S \cup T$, by providing an algorithm of running time  $4^{r + \Oh(\sqrt r)} n^{\Oh(1)}$.
  
  
\end{abstract}

\section{Introduction}
\label{sec:introduction}

We consider the following variant of the \emph{cut-uncut problem}.
A \emph{cut} in a graph $G=(V,E)$ is a partitioning~$(A,B)$ of $V$, and we denote by $\cut_G(A)$ the \emph{cut-set}, that is, the set of edges with one endpoint in $A$ and the other in $B=V\setminus A$.
For two disjoint sets of vertices $S$ and $T$, $(A,B)$ is an \emph{$S$-$T$-cut} if $S\subseteq A$ and~$T\subseteq B$.

\begin{XProblem}{\probCutUncut}
  \Input & A graph $G$, two disjoint terminal sets $S,T\subseteq V(G)$, and an integer $k\geq 0$. \\
  \Prob &  Decide whether there exists an {$S$-$T$-cut} $(A,B)$ of $G$ with
$\left| \cut_G(A) \right| \leq k$
such that the vertices of  $S$ are in the same connected component of $G[A]$ and the vertices of $T$ are in the same connected component of $G[B]$.
\end{XProblem}

Our interest in \probCutUncut is two-fold. First, \probCutUncut is a natural optimization variant of the \probTwoDCS{}
problem that received considerable attention from the graph-algorithms and computational-geometry communities \cite{CyganPPW14a,GrayKLS12,KammerT12,PaulusmaR11,DBLP:conf/wg/TelleV13,HofPW09}. In this problem, one asks whether, for two given disjoint sets~$S,T\subseteq V(G)$, one can find disjoint sets~$A_1\supseteq S$ and~$A_2\supseteq T$ such that the subgraphs of $G$ induced by $A_i$, $i=1,2$, are connected.
In   \probCutUncut we not only want to decide whether there are disjoint connected sets containing terminal sets~$S$ and~$T$, but also minimize the size of the corresponding cut (if it exists).  Van 't Hof et al.~\cite{HofPW09} showed that \probTwoDCS{} is \classNP-complete in general graphs, even if $|S| = 2$, and Gray et al.~\cite{GrayKLS12} proved that the problem is \classNP-complete on planar graphs. This implies that  \probCutUncut is also \classNP-complete on planar graphs. 

%
%

Second,  \probCutUncut is closely related to the \probND problem, which has been studied extensively by the operations research and networks communities~\cite{cintron-arias2001networkdiversion,cullenbine2013theoreticalcomputational,curet2001networkdiversion,duan2014connectivitypreserving,erkenozgur2002branchandboundalgorithm,kallemyn2015modelingnetwork,lee2019combinatorial}.
In this problem, we are given an undirected graph $G$, two terminal vertices $s$ and $t$,  an edge $b=uv$, and an integer $k$. The task is
to decide whether it is possible to delete at most $k$ edges such that the edge $b$ will become a bridge with $s$ on one side and $t$ on the other.  Equivalently, the task is to decide whether there exists a \emph{minimal} $s$-$t$-cut of size at most~$k+1$ containing $b$.
While this problem seems very similar to the classic $s$-$t$-\ProblemName{Minimum Cut} problem, the complexity status of this problem (\classP{} vs.\ \classNP) is widely open.
Let us observe that a polynomial-time algorithm for the special case of   \probCutUncut with~$|S|=|T|=2$ implies a polynomial time algorithm for  \probND:
There are two cases, either~$s$ is in the same component as $u$, or $s$ is in the same component as $v$, and these correspond to instances of \probCutUncut with $S=\{s,u\}$ and $T=\{t,v\}$, and respectively, $S=\{s,v\}$ and~$T=\{t,u\}$.

\probND has important applications in transportation networks and has
therefore also been studied on planar graphs.
Cullenbine et al.~\cite{cullenbine2013theoreticalcomputational} gave a polynomial time algorithm for \probND on planar graphs for the special case when both terminals $s$ and $t$ are located on the same face.
They posed as an open problem whether this polynomial-time algorithm can be generalized to work on arbitrary planar graphs~\cite{cullenbine2013theoreticalcomputational}.
Duan et al.\ put out a preprint~\cite{duan2015ddosattack}, which among
other results, claims an algorithm resolving \probND on planar graphs in
polynomial time, but without a description of the algorithm.
We were not able to verify the correctness of the result due to
several missing details.
%
%
%
The result, however, is an immediate consequence of our main contribution, \Cref{thm:face-cover-param},  establishing the fixed-parameter tractability of  \probCutUncut{} on planar graphs parameterized by $|S|+|T|$.  \Cref{thm:face-cover-param}  also establishes a  more general result about fixed-parameter tractability of the problem parameterized by the minimum number of faces $r$ of the graph containing all terminals. (Notice that $r$ never exceeds $|S|+|T|$.) 


%
%

\begin{restatable}{theorem}{maintheoremfacecover}\label{thm:face-cover-param}
  There is a one-sided error randomized algorithm solving
  \probCutUncut{} on planar graphs in time $2^{|S|+|T|} \cdot n^{\Oh(1)}$.
  %
  %
  Moreover, there is a one-sided error randomized algorithm solving the
  problem in time $4^{r+\Oh(\sqrt{r})}\cdot n^{\Oh(1)}$, where $r$ is
  the number of faces needed to cover $S \cup T$ in a given plane
  graph.
\end{restatable}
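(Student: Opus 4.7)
The plan is to reduce \probCutUncut to a shortest-cycle-with-prescribed-group-label problem in the planar dual, and then apply a randomized algorithm (in the spirit of Iwata--Yamaguchi) for such cycle problems.

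By planar duality, an edge cut $C$ in $G$ corresponds to an Eulerian subgraph $C^{*}$ in the dual $G^{*}$, and for a minimum cut satisfying the uncut property, $C^{*}$ can be chosen to be a single simple cycle after suitable contractions or a planarity-preserving routing of terminals. To encode the separation of $S$ from $T$ and the uncut condition as a label on cycles, I would fix a root $s_0 \in S$ and, for every other terminal $v \in (S \cup T) \setminus \{s_0\}$, pick a path $P_v$ in $G$ from $s_0$ to $v$. Label each dual edge $e^{*} \in E(G^{*})$ by a vector in $\Gamma = \mathbb{Z}_2^{|S|+|T|-1}$ whose $v$-th coordinate is $1$ iff the primal edge $e$ lies on $P_v$. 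Then an edge set $C$ separates $S$ from $T$ with $s_0 \in A$ exactly when $C^{*}$ has total $\Gamma$-label equal to the target vector with $1$'s at coordinates $v \in T$ and $0$'s at coordinates $v \in S \setminus \{s_0\}$; a minimum-weight cycle in the dual with this label should then realise the uncut condition by minimality.

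Given this reduction, the first bound $2^{|S|+|T|} n^{\Oh(1)}$ follows from a randomized algorithm for finding shortest cycles of prescribed label in a planar graph labelled by an abelian group $\Gamma$, whose running time is $|\Gamma| \cdot n^{\Oh(1)}$ with one-sided error. For the stronger bound $4^{r + \Oh(\sqrt{r})} n^{\Oh(1)}$ in terms of the number $r$ of faces needed to cover $S \cup T$, I would exploit that the paths $P_v$ can be routed through only the $r$ terminal faces, effectively reducing the essential structure of the labelled dual to a graph admitting a sphere-cut decomposition of width $\Oh(\sqrt{r})$ (the standard planar-separator-type width bound over $r$ points). A dynamic program over this decomposition, tracking on each bag a constant-size state per face (the side of the cut together with a parity bit, giving the base $4$), yields the claimed factor.

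The main obstacle I anticipate is establishing the correctness of the reduction: that a minimum-weight cycle in $G^{*}$ with the prescribed $\Gamma$-label actually corresponds to a valid \probCutUncut solution, and, in particular, that the uncut (connectivity) condition on $S$ and on $T$ is automatically satisfied at the minimum. I would handle this via an exchange argument: if the uncut condition failed, one could remove edges from $C$ to reconnect the split components of $S$ or of $T$, obtaining a smaller cut with the same $\Gamma$-label and contradicting minimality; making this precise while respecting the dual-cycle structure is the delicate part.
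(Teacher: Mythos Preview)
Your reduction for the $2^{|S|+|T|} n^{\Oh(1)}$ bound is the paper's: label dual edges by $\mathbb{Z}_2^{|S|+|T|-1}$ parity vectors over fixed terminal paths and seek a shortest cycle in $G^{*}$ with the prescribed label; the minimality argument you anticipate for the uncut condition is exactly \Cref{obs:mincut}. One caveat: the algorithmic primitive you invoke is not Iwata--Yamaguchi, whose result handles shortest \emph{non-zero} paths rather than paths of a \emph{prescribed} label (the paper explicitly flags this distinction). The needed primitive is developed separately as \Cref{thm:group-labeled-path}, an extension of the Bj\"orklund--Husfeldt--Taslaman $T$-cycle technique over $\GF(2^q)$, and this is where the bulk of the technical work lies; you should not treat it as off-the-shelf.

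For the $4^{r+\Oh(\sqrt r)}$ bound your plan diverges from the paper and, as written, has a real gap: it is not clear how to compress the (arbitrarily large) dual $G^{*}$ to something of branchwidth $\Oh(\sqrt r)$ while preserving the shortest-labelled-cycle problem, and no such decomposition appears in the paper. Instead the paper (i) finds a minimum face cover in $2^{\Oh(\sqrt r)} n^{\Oh(1)}$ time via planar \textsc{Red-Blue Dominating Set}---this is the sole source of the $\Oh(\sqrt r)$ term; (ii) for each covering face, analyses how a solution cycle may cross its boundary (\Cref{lem:sepsets,lem:nonsepset}) and modifies $G^{*}$ accordingly, deleting forbidden dual edges at faces meeting both $S$ and $T$, and \emph{splitting} the dual vertex into one copy per boundary arc at faces meeting only one of $S,T$; and (iii) keeps at most two representative terminals per face, so the label dimension drops to $d\le 2r-1$. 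The vertex splits could let a cycle revisit a face, so the paper imposes the ``visit each $X_f$ at most once'' constraint---precisely condition~\cref{xorp4} of \probSP---contributing $p\le r$ to the exponent and yielding $2^{d+p}\le 4^{r}$. Your accounting of the base $4$ as ``side of the cut plus a parity bit per face'' does not match this mechanism, and the sphere-cut/DP route would need substantial new ideas to be made rigorous.
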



\Cref{thm:face-cover-param} provides the first polynomial time algorithm for \probCutUncut on planar graphs for non-singleton $S$ and $T$.
Duan and Xu~\cite{duan2014connectivitypreserving} showed how to solve \probCutUncut on planar graphs for~$|S|=1$ and $|T|=2$.
This was later extended by Bezáková and Langley~\cite{bezakova2014minimumplanar}, who present an~$O(n^4)$-time algorithm for~$|S|=1$ and arbitrary~$T$ on planar graphs.
However, the polynomial time solvability of the case~$|S|=|T|=2$
(which is a generalization of \probND) remained open.


The main tool we develop for showing \Cref{thm:face-cover-param} is a new algorithmic result about computing shortest paths in group-labeled graphs.
We believe that this new result is interesting on its own.
The group that we consider is the \emph{Boolean group} $(\mathbb{Z}_2^d,+)$, consisting of length-$d$ binary vectors, where the operation $+$ is the component-wise exclusive or (xor).
Our algorithm finds a shortest $s$-$t$-path in a graph, whose edges are labeled by elements of $(\mathbb{Z}_2^d,+)$
%
%
%
%
such that the sum of the labels assigned to the edges of the path equals a given value. Furthermore, we impose the constraint that the path can visit certain sets of vertices only once. Formally, we consider the following problem.

\begin{XProblem}
  {\probSP}
  \Input & A graph $G$, two vertices $s$ and $t$, an edge labeling function $g\colon E(G)\rightarrow \mathbb{Z}_2^d$, a value $c\in \mathbb{Z}_2^d$, and $p$ sets of vertices $X_1,\ldots,X_p \subseteq V$. \\
  \Prob & Find an $s$-$t$-path $P$ in $G$ that satisfies 
\begin{itemize}
\item[(i)]\phantomsection\label[prop_ind_1]{xorp1} $\sum_{e\in E(P)}g(e)=c$, and
\item[(ii)]\phantomsection\label[prop_ind_2]{xorp2} for each $i\in[p]$, $|V(P)\cap X_i|\leq 1$,
\end{itemize}
and among such paths minimizes the length.
\end{XProblem}


%
%

In Section~\ref{sec:group-labeled-path}, we give an algorithm for
\probSP, that in fact works for general graphs instead of only planar
graphs.  The result is the following theorem.

\begin{restatable}{theorem}{theoremxorpath}\label{thm:group-labeled-path}
\probSP can be solved in $2^{d+p}\cdot (n+m)^{\Oh(1)}$ time by a one-sided error randomized algorithm.
\end{restatable}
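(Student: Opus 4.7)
My plan is to reduce \probSP to its restriction with $p=0$, where one seeks the shortest $s$-$t$-path in a $\mathbb{Z}_2^d$-labeled graph whose edge-labels sum to a prescribed element. Assuming this restricted problem can be solved in $2^d \cdot (n+m)^{O(1)}$ randomized time by an algebraic argument (a natural extension of the kind of techniques used by Iwata--Yamaguchi for related shortest-path problems on group-labeled graphs), I would fold each of the $p$ set-constraints into an additional $\mathbb{Z}_2$-coordinate, pushing the running time up by a factor of $2^p$.

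Concretely, I would start by orienting the graph via vertex splitting: each $v \in V(G)$ becomes $v^{\mathrm{in}}, v^{\mathrm{out}}$ joined by a directed edge, and each undirected $uv \in E(G)$ becomes two directed edges $u^{\mathrm{out}} \to v^{\mathrm{in}}$ and $v^{\mathrm{out}} \to u^{\mathrm{in}}$ carrying the original label $g(uv)$ (padded to $\mathbb{Z}_2^{d+p}$ by zeros). Undirected simple $s$-$t$-paths in $G$ then correspond bijectively to simple directed $s^{\mathrm{in}}$-to-$t^{\mathrm{out}}$ paths in the split graph. If $v \in X_{i_1} \cap \cdots \cap X_{i_k}$, I would further subdivide its internal edge into a chain whose successive sub-edges carry $\mathbf{e}_{d+i_1},\ldots,\mathbf{e}_{d+i_k}$, so that the last $p$ coordinates of the group sum of any simple path equal $|V(P)\cap X_i|\bmod 2$ for each $i\in[p]$.

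To convert this parity information into the genuine ``at most one visit'' condition, I would then contract, for each $i \in [p]$, all in-endpoints of $\mathbf{e}_{d+i}$-labeled sub-edges (one per $v \in X_i$) into a single bottleneck vertex $x_i^{\mathrm{in}}$, and the corresponding out-endpoints into $x_i^{\mathrm{out}}$; the sub-edges become parallel edges $x_i^{\mathrm{in}} \to x_i^{\mathrm{out}}$. Because the graph is directed, a simple $s^{\mathrm{in}}$-to-$t^{\mathrm{out}}$ path can enter $x_i^{\mathrm{in}}$ at most once, must take exactly one of the parallel internal edges (thereby singling out a unique $v \in X_i$), and leaves $x_i^{\mathrm{out}}$ at most once---so it lifts to a path in $G$ using at most one vertex of each $X_i$. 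Applying the base algorithm to this contracted graph with group $\mathbb{Z}_2^{d+p}$ and taking the minimum length over the $2^p$ possible values $y \in \mathbb{Z}_2^p$ for the last $p$ coordinates of the target yields the answer in total time $2^{d+p}\cdot(n+m)^{\Oh(1)}$.

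The hard part is establishing the faithfulness of the lifting: one must rule out ``phantom'' paths that enter $x_i^{\mathrm{in}}$ through an external edge originally attached to some $v \in X_i$ but exit $x_i^{\mathrm{out}}$ through an external edge attached to a different $v' \in X_i$. Ensuring that every simple path in the contracted graph actually corresponds to a valid path in $G$ requires either a more intricate gadget that keeps each original vertex's entry and exit edges tied together through its bottleneck, or a randomized argument (for instance, assigning fresh random labels within each $X_i$ to its external edges, extending the group once more, and demanding that these new coordinates sum to zero, so that phantom paths are rejected with high probability)---this is the plausible source of the one-sided error stated in the theorem.
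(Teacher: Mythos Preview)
Your proposal has two substantial gaps.

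First, you assume the $p=0$ base case (shortest $s$-$t$-path in a $\mathbb{Z}_2^d$-labeled graph with a \emph{prescribed} label sum, in $2^d\cdot(n+m)^{\Oh(1)}$ time) as a black box, attributing it to ``Iwata--Yamaguchi-like techniques.'' But Iwata--Yamaguchi solve the \emph{non-zero} shortest path problem; finding a shortest path whose sum equals a \emph{specific} element is a different (and harder) question, and it is exactly the technical core of the paper. The paper proves this via a Bj{\"o}rklund--Husfeldt--Taslaman style argument: it defines a polynomial over $\GF(2^{\Theta(\log n)})$ summing monomials over all \emph{walks} (not paths) that satisfy both constraints, evaluates it by a $2^{d+p}$-state dynamic program, and shows non-path walks cancel via a fixed-point-free involution that reverses a carefully chosen non-palindromic closed subwalk. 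Without this, you have no base case.

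Second, even granting the base case, your reduction does not work. You correctly identify the phantom-path obstacle after contraction (entering the $X_i$-bottleneck via edges of one $v\in X_i$ and exiting via edges of another $v'$), and you propose two fixes, neither of which is carried through. The gadget fix is unspecified. The random-label fix would need roughly $\Theta(\log n)$ extra coordinates per constraint to succeed with high probability, and you do not verify that this stays within $2^{d+p}\cdot(n+m)^{\Oh(1)}$; more seriously, your construction yields a \emph{directed} graph (the $v^{\mathrm{in}}\to v^{\mathrm{out}}$ edges are one-way), and the involution that makes the base algorithm work relies on reversing subwalks---which is not generally valid in a directed graph. Vertices lying in several $X_i$ simultaneously also make the chain-plus-contraction interaction delicate in ways you do not address. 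By contrast, the paper avoids all of this by tracking, inside the DP state, the subset $T\subseteq[p]$ of sets already hit exactly once; this costs a factor $2^p$ in the state space and needs no reduction, no directed graph, and no gadgetry.
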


We call the problem variant where we replace path by cycle in the above problem definition \probSC.
Observe that \Cref{thm:group-labeled-path} directly implies also an algorithm for solving \probSC within the same running time.

The proof of \Cref{thm:group-labeled-path} is based on enhancing the technique introduced by  Bj{\"{o}}rklund, Husfeldt, and Taslaman~\cite{bjorklund2012shortestcycle} for the \textsc{$T$-cycle} problem.
In \textsc{$T$-cycle}, the task is to find a shortest cycle that visits a list of specified vertices $T \subseteq V(G)$\footnote{This implies that the algorithm can also take a list of edges by subdividing each target edge and add the new vertex to~$T$.}, and Bj{\"{o}}rklund et al. gave a $2^{|T|} n^{\Oh(1)}$-time algorithm for it.
Our algorithm generalizes the algorithm of Bj{\"{o}}rklund et al., because \textsc{$T$-cycle} can be reduced to \probSC with $d = |T|$ and $p = 0$ as follows.
We assign each vertex~$v \in T$ to one dimension of~$\mathbb{Z}_2^d$, and to enforce that the cycle passes through $v$, we add a true twin~$u$ of~$v$ to the graph and assign the edge~$uv$ the vector in $\mathbb{Z}_2^d$ that has $1$ at only the dimension assigned to $v$.
All other edges are assigned the zero vector $\mathbf{0}$.
Clearly, a cycle evaluating to the all-one vector corresponds to a cycle that visits all vertices in $T$.


%

\medskip%
\noindent%
\textbf{Related work.}
Besides the closely related work on \probND{},
\probTwoDCS{}, and \probCutUncut{} that we already have mentioned above, let us briefly go through other relevant work. 

 \probCutUncut  is a special case of \textsc{Multiway Cut-Uncut}, where for a given equivalence relation on the set of terminals, the task is to find a cut (or node-cut) separating terminals according to the relation. This problem is well-studied in parameterized complexity~\cite{ChitnisCHPP16,CyganKLPPSW21,001RS16}. However, all the previous work in parameterized algorithms on \textsc{Multiway Cut-Uncut} was focused on parameterization by the size of the cut. 
\textsc{Multiway Cut} is also one of the closest relatives of our problem. Here for a given set of $k$ terminals, one looks for a minimum number of edges separating \emph{all} terminals. On planar graph, the seminal paper of Dahlhaus et al.  \cite{DahlhausJPSY94}
 provides an algorithm of running time $n^{\Oh(k)}$. Klein and Marx in \cite{KleinM12} improve the running time to 
  $n^{\Oh(\sqrt{k})}$ and Marx in \cite{Marx12} shows that this running time (assuming the Exponential Time Hypothesis (ETH)) is optimal. 
  
  The second part of \Cref{thm:face-cover-param} concerns the parameterization by the number of faces covering the terminal vertices. Such parameterization comes naturally for optimization problems about connecting or separating terminals in planar graphs. In particular, parameterization by the face cover was investigated for  \textsc{Multiway Cut}~\cite{pandey2022planarmultiway}, \textsc{Steiner Tree}~\cite{bern1990fasterexact,kisfaludibak2020nearly}, and various  flow problems 
~\cite{erickson1987send,filtser2019facecover,krauthgamer2019flow}.
  

Our \Cref{thm:group-labeled-path} belongs to the intersection of two areas around paths in graphs. 
The first area is about polynomial time algorithms computing shortest paths in group-labeled graphs~\cite{derigs1985efficient,grotschel1981weakly,kobayashi2017finding}. 
Recently, Iwata and Yamaguchi~\cite{iwata2022findingshortest}  
gave an algorithm for shortest \emph{non-zero} paths in arbitrary
group-labeled paths. However, for our purposes, we need an algorithm computing a shortest path whose labels sum to a \emph{specific} element of the group.


The second area is about \classFPT algorithms for finding paths in graphs satisfying certain properties \cite{DBLP:journals/jcss/BjorklundHKK17,bjorklund2012shortestcycle,DBLP:conf/soda/FominGKSS23,DBLP:conf/icalp/Koutis08,Williams09}. 
As we already mentioned, our algorithm for \probSP could be seen as an extension of the algorithm of Bj{\"{o}}rklund, Husfeldt, and Taslaman~\cite{bjorklund2012shortestcycle} for the $T$-cycle problem to a group labeled setting.

\bigskip%
\noindent%
\textbf{Organization.} 
The remainder of the article is organized as follows.
We start with a general overview of how we achieve our two main results.
We then present some notation and necessary definitions in Section~\ref{sec:preliminaries}.
Afterwards, we we show how to solve \probSP{} in Section~\ref{sec:group-labeled-path}.
In Section~\ref{sec:cut-uncut-param}, we apply this algorithm to develop a (randomized) FPT-time algorithm for \probCutUncut{} parameterized by the minimum number of faces such that each terminal vertex is incident to at least one such face.
Section~\ref{sec:hardness} is devoted to showing that the W[1]-hardness of \probCutUncut{} parameterized by the number of terminals.
Finally, we present two applications of our FPT-time algorithm to generalize known results from the literature in Section~\ref{sec:applications} and conclude with some remaining open problems in Section \ref{sec:conclusion}.



\subsection{Outline of \Cref{thm:group-labeled-path,thm:face-cover-param}}\label{sec:methods}
We outline the proofs of \Cref{thm:group-labeled-path,thm:face-cover-param}.
We first outline the~$2^{|S|+|T|} \cdot n^{\Oh(1)}$-time algorithm for planar \probCutUncut, and then discuss the setting when $S \cup T$ can be covered by at most $r$ faces.

We observe that any optimal solution to \probCutUncut is an (inclusion-wise) minimal cut in the graph $G$.
Our algorithm is based on the correspondence between minimal cuts in a planar graph and cycles in its dual graph (see \Cref{fig:dualgraph}).
In particular, a set of edges $C \subseteq E(G)$ is a cut-set of minimal cut in $G$ if and only if in the dual graph $G^*$,  the corresponding set $C^* \subseteq E(G^*)$ is a cycle.
Now, to translate \probCutUncut into a problem about finding a cycle $C^*$ in $G^*$, we wish to understand, based on $C^*$, whether two terminal vertices $u$ and $v$ are on the same side of the cut $C$ in $G$ or on different sides.
For this, we observe that if $P \subseteq E(G)$ is the set of edges of an (arbitrary) $u$-$v$-path in $G$ and $P^*$ is the corresponding set of edges in $G^*$, then $u$ and $v$ are on different sides of $C$ if and only if $|C^* \cap P^*|$ is odd.

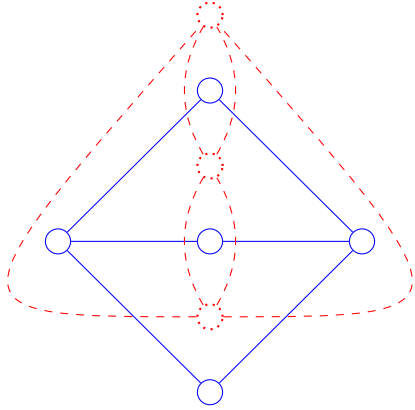
\begin{figure}[t]
\centering
\begin{tikzpicture}
  \tikzstyle{planeblue}=[circle,draw,blue]
  \tikzstyle{planered}=[circle,draw,red,dotted,thick]
  \tikzstyle{planeblueedge}=[blue]
  \tikzstyle{planerededge}=[red,dashed]

  \node[planeblue] at(0,0) (u) {};
  \node[planeblue] at(2,-2) (w1) {} edge[planeblueedge] (u);
  \node[planeblue] at(2,0) (w2) {} edge[planeblueedge] (u);
  \node[planeblue] at(2,2) (w3) {} edge[planeblueedge] (u);
  \node[planeblue] at(4,0) (v) {} edge[planeblueedge] (w1) edge[planeblueedge] (w2) edge[planeblueedge] (w3);

  \node[planered] at(2,3) (a) {};
  \node[planered] at(2,1) (b) {} edge[bend left=30,planerededge] (a) edge[bend right=30,planerededge] (a);
  \node[planered] at(2,-1) (c) {} edge[bend left=30,planerededge] (b) edge[bend right=30,planerededge] (b);
  \draw[planerededge] (c) .. controls (5.5,-1)  ..  (a);
  \draw[planerededge] (c) .. controls (-1.5,-1) .. (a);
\end{tikzpicture}
\caption{An example of a plane graph (blue) and its dual (multi)graph (dashed/red).
  Notice that there are bijections between the faces and the vertices,
  and also between the edges, that is, there is exactly one blue vertex
  in each red face, one red vertex in each blue face, and each red edge
  intersects exactly one blue edge, and vice versa.
}
\label{fig:dualgraph}
\end{figure}

It follows that a constraint stating that $u_i$ and $v_i$ should be on the same/different side of the cut $C$ in $G$ can be expressed as a constraint stating that $|C^* \cap P^*_i|$ should be even/odd for some $P^*_i \subseteq E(G^*)$.
By selecting one vertex $v$ in the set of terminals $S$ and writing a ``same side'' constraint with every other terminal vertex in $S$ and a ``different side'' constraint with every terminal vertex in $T$, the \probCutUncut problem reduces to the problem of finding a shortest cycle $C^*$ in $G^*$ that satisfies $|S|+|T|-1$ given constraints, each requiring that $|C^* \cap P^*_i| \equiv b_i\ (\textrm{mod}\ 2)$ for some $P^*_i \subseteq E(G^*)$ and $b_i \in \{0,1\}$.

This problem can be equivalently phrased as the \probSC problem with $d = |S|+|T|-1$, and therefore \Cref{thm:group-labeled-path} indeed implies a $2^{|S|+|T|}\cdot n^{\Oh(1)}$ time algorithm for \probCutUncut on planar graphs.
Note that here we did not use the condition~\cref{xorp2} in the statement of the \probSP problem; this condition will be used only for the algorithm parameterized by face cover.

We then outline the algorithm of \Cref{thm:group-labeled-path} for \probSP.
This algorithm works not only on planar graphs but also on general graphs, and it is a generalization of the algorithm by Bj{\"{o}}rklund, Husfeldt, and Taslaman~\cite{bjorklund2012shortestcycle} for the $T$-cycle problem.
Our algorithm, like many previous parameterized algorithms for finding paths in graphs~\cite{DBLP:journals/jcss/BjorklundHKK17,bjorklund2012shortestcycle,DBLP:conf/soda/FominGKSS23,DBLP:conf/icalp/Koutis08,Williams09} exploits the cancellation of monomials in polynomials over fields of characteristic two and randomized polynomial identity testing~\cite{Schwartz80,Zippel79}.

The idea of our algorithm is to associate with the input a polynomial over a finite field of characteristic two, and argue that (1) this polynomial is non-zero if and only if a solution exists, and (2) given an assignment of values to variables of the polynomial, the value of the polynomial can be evaluated in time $2^{d+p}\cdot n^{\Oh(1)}$\footnote{Recall that $p$ is the number of constraints for the condition~\cref{xorp2} in the \probSP problem.}.
By the DeMillo--Lipton--Schwartz--Zippel lemma~\cite{Schwartz80,Zippel79}, the problem can then be solved in time $2^{d+p}\cdot n^{\Oh(1)}$ by evaluating the polynomial for a random assignment of values.
Note that solving the decision version also allows to recover the solution by self-reduction.

In more detail, the polynomial associated with the input is defined as follows.
Let us assume that the input graph is a simple graph, as the problem on multigraphs can be easily reduced to simple graphs.
For each edge $e \in E(G)$ of the input graph we associate a variable $f(e)$, and then for an $s$-$t$-walk\footnote{A walk is like a path, but it can contain repeated vertices and edges.} $W = e_1,e_2,\ldots,e_\ell$ of length $\ell$ we associate a monomial $f(W) = \prod_{i=1}^{\ell} f(e_i)$.
Then, for an integer $\ell$, we let $\wf_\ell$ denote the set of all $s$-$t$-walks of length $\ell$ that satisfy the conditions~\cref{xorp1}~and~\cref{xorp2} of the statement of \probSP, and finally let $f(\wf_\ell) = \sum_{W \in \wf_\ell} f(W)$ be the polynomial associated with the input.
As the monomials of $f(\wf_\ell)$ correspond to walks instead of paths, it is not complicated to design a $2^{d+p}\cdot n^{\Oh(1)}$ time dynamic programming algorithm for evaluating the value of $f(\wf_\ell)$.
A more technical part of the proof is to argue that the polynomial $f(\wf_\ell)$ is non-zero if and only if a solution exists, in particular, that monomials corresponding to walks that are not paths cancel each other out.
This argument is a generalization of the argument used by Bj{\"{o}}rklund et al.~\cite{bjorklund2012shortestcycle}.

We then turn to the setting when $S \cup T$ can be covered by $r$ faces in a given plane embedding of $G$. First, we observe that it can be assumed without loss of generality that the input graph is 2-connected. 
This assumption simplifies arguments because the boundary of each face of a plane 2-connected graph is a cycle~\cite{Diestel12}.  Given a plane graph $G$ with terminal sets $S$ and $T$, we can find a minimum face cover of $S\cup T$  by reducing the task to solving an auxiliary instance of the   \textsc{Red-Blue Dominating Set} problem which can be solved in parameterized subexponential time on planar graphs by the results of Alber et al.~\cite{AlberBFKN02}.

Suppose that $f$ is a face of $G$ that covers some terminals and let $C'$ be the cycle forming the frontier of $f$. We use the following crucial observation: for the cut-set  $C \subseteq E(G)$  of any minimal cut in $G$ separating $S$ and $T$, it holds that (i) if $C'$ contains vertices of both sets of terminals, then $C\cap E(C')$ separates $C'$ into two connected components (paths) such that each component contains the vertices of exactly one set of terminals, and (ii) if $C'$ contains vertices of one set, then either $C\cap E(C')=\emptyset$  or $C\cap E(C')$ separates $C'$ into two connected components (paths) such that the terminals are in the same component. We use this observation to restrict the behavior of the cycle $C^*$ in $G^*$ corresponding to a potential solution cut-set $C$. In case (i), we simply delete the edges of $G^*$ that correspond to the edges of $C'$ that should not participate in $C$  (see \Cref{fig:F1}). Case (ii) is more complicated. Suppose that $C'$ contains $q$ terminals.  We find $q$ internally vertex disjoint paths $P_1,\ldots,P_q$ in $C'$ whose end-vertices are the terminals. Then we ``split'' the vertex $f$ of $G^*$ into $q$ vertices $f_1,\ldots,f_q$ in such a way  that each $f_i$ is incident to the edges of $G^*$ corresponding to the edges of $P_i$ (see \Cref{fig:F2}). However, this splitting would allow a cycle in the dual graph to visit the face $f$ several times. To forbid it, we define $X_f=\{f_1,\ldots,f_q\}$ that will be used in constraint~\cref{xorp2} of \probSP and this is the reason why we need constraint~\cref{xorp2} in the problem.

We perform the modifications of  $G^*$ for all the faces in the cover. This allows us to restrict the number of terminals that we should separate. We pick representatives  for each face $f$ in the cover. If the frontier cycle $C'$ of $f$ contains terminals from both sets, we chose one representative from each set from the terminals on $C'$. If $C'$ contains terminals from one set, we choose one representative.  Then we apply the same algorithm as for the parameterization by $|S|+|T|$. The difference is that we work only with the representatives and add constraint~\cref{xorp2} to the auxiliary instance of \probSP given by the sets constructed for the faces from the cover.


\section{Preliminaries}
\label{sec:preliminaries}

For integers $a$ and $b$, we use~$[a,b]$ to denote the set~$\{a, a+1, \ldots, b\}$ and $[b]$ to denote the set $[1, b]$.

\paragraph{Graphs.}
In this paper, we consider undirected multigraphs, that is, we allow multiple edges and self-loops.
We use the standard graph-theoretic notation and refer to~\cite{Diestel12} for undefined notions.
Let~$G=(V,E)$ be an undirected graph. We use $V(G)$ and $E(G)$ to denote the set of vertices and the set of edges of $G$, respectively.
We use~$n$ and~$m$ to denote the number of vertices and edges in~$G$, respectively.
A \emph{path}~$P$ is a graph with vertex set~$\{v_0,v_1,\ldots,v_\ell\}$ and edge set~$\{v_{i-1}v_i \mid i \in [\ell]\}$.
The vertices~$v_0$ and~$v_\ell$ are called the endpoints of~$P$.
A \emph{cycle}~$C$ is a path with an additional edge between the two endpoints.
The length of a path or a cycle in an edge-weighted graph is the sum of weights of its edges and in unweighted graphs the length is the number of edges in it.
For a vertex subset~$U \subseteq V$, we use~$G[U]$ to denote the subgraph of~$G$ induced by the vertices in~$U$ and~$G - U$ to denote~$G[V \setminus V']$.
For a set of edges $S\subseteq E$, we write $G-S$ to denote the graph obtained from $G$ by deleting the edges of $S$.


We are mostly interested in \emph{planar} input graphs.
We refer the reader to the textbooks of Diestel~\cite{Diestel12}
and Agnarsson and Greenlaw~\cite{agnarsson2006graph} for rigorous introductions.
Informally speaking, a graph is planar if it can be drawn on the plane such that its edges do not cross each other.
Such a drawing is called a \emph{planar embedding} of the graph and a planar graph with a planar embedding is called a \emph{plane} graph.
We note that the  planarity check and finding a planar embedding can be done in linear time by the classical algorithm of Hopcroft and Tarjan~\cite{HopcroftT74}.
The \emph{faces} of a plane graph are the regions bounded by a set of edges and that do not contain any other vertices or edges.
The vertices and edges on the boundary of a face form its \emph{frontier}.

Given a plane graph~$G=(V,E)$ with faces~$F$, its \emph{dual} graph~$G^*=(F,E^*)$ (see  \Cref{fig:dualgraph}) is defined as follows.
The vertices of $G^*$ are the faces of~$G$ and for each $e\in E(G)$, $G^*$ has the \emph{dual} edge~$e^*$ whose endpoints are either two faces having~$e$ on their frontiers or~$e^*$ is a self-loop at~$f$ if~$e$ is in the frontier of exactly one face~$f$ (i.e., $e$ is a bridge of $G$).
Observe that $G^*$ is not necessarily simple even if $G$ is a simple graph as the example in \Cref{fig:dualgraph} shows.
We note that $G^*$ is a planar graph that has a plane embedding where each vertex of $G^*$ corresponding to a face $f$ of $G$ is drawn inside $f$ and each dual edge~$e^*$ intersects~$e$ only once and $e^*$ does not intersect any other edge of $G$. Throughout this paper, we assume that $G^*$ has such an embedding.


It is crucial for our results that for  a connected plane graph $G$, each minimal cut in~$G$ has a one-to-one correspondence to a cycle in~$G^*$. To be more precise, recall that each cycle on the plane has exactly two faces.
Then $(A,B)$ is a minimal cut of a plane graph $G$ if and only if there is a cycle $C^*$ in $G^*$ such that the vertices of $A$ are inside one face of $C^*$ and the vertices of $B$ are inside the other face.
Furthermore, $C^*$ is formed by the edges $e^*$ that are dual to the edges $e\in \cut(A)$ and the length of~$C^*$ is~$|\cut(A)|$.

Let $G$ be a plane graph and let $G^*$ be its dual. We say that a path $P$ (a cycle $C$) in $G$ \emph{crosses} a cycle $C^*$ of $G^*$ in~$e\in E(P)$ ($e\in E(C)$, respectively) if $C^*$ contains the edge $e^*\in E^*$ that is dual to~$e$. The \emph{number of crosses} of $P$ and $C^*$ is the number of edges of $P$ where $P$ and $C^*$ cross. 
We use the following observation.

\begin{observation}\label{obs:cycle-sep}
Let~$G$ be a plane graph, let~$s,t\in V$, and let $P$ be an $s$-$t$-path. For any cycle~$C^*$ of~$G^*$, $s$ and $t$ are in distinct faces of $C^*$ if and only if  the number of crosses of $P$ and $C^*$ is odd.
\end{observation}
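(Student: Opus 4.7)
The plan is to give a standard Jordan curve argument, using the fact (stated just before the observation) that in the chosen embedding every dual edge $e^*$ meets $G$ only at the single point where it crosses its primal edge $e$.

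First I would note that $C^*$ is a simple cycle in a plane embedding, so as a closed curve in the plane it divides the plane into exactly two open regions $R_0$ and $R_1$ by the Jordan curve theorem. Since the vertices of $G^*$ on $C^*$ are drawn inside faces of $G$ and the edges of $C^*$ meet $G$ only at their dual intersections with primal edges, no vertex of $G$ lies on the curve $C^*$. In particular, every vertex of the path $P$, including $s$ and $t$, lies in $R_0 \cup R_1$.

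Next I would analyze how $P$ interacts with $C^*$ edge by edge. Write $P = v_0 v_1 \cdots v_\ell$ with $v_0 = s$ and $v_\ell = t$, and let $e_i = v_{i-1} v_i$. As a curve in the plane, $e_i$ meets $C^*$ only at intersections with edges of $G^*$; by the embedding assumption the only edge of $G^*$ that $e_i$ meets is its dual $e_i^*$, and it meets $e_i^*$ in exactly one point. Hence $e_i$ crosses the curve $C^*$ exactly once if $e_i^* \in E(C^*)$ (i.e.\ $P$ crosses $C^*$ at $e_i$), and otherwise does not meet $C^*$ at all. It follows that $v_{i-1}$ and $v_i$ lie in the same region of $\{R_0, R_1\}$ if and only if $e_i$ is not a cross between $P$ and $C^*$.

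Finally I would conclude by induction along the path: define $b_i \in \{0,1\}$ to be the region index containing $v_i$; then $b_i - b_{i-1} \equiv [e_i \text{ is a cross}] \pmod 2$, so $b_\ell - b_0$ equals the total number of crosses modulo $2$. Therefore $s$ and $t$ lie in different faces of $C^*$ exactly when the number of crosses of $P$ and $C^*$ is odd, which is the claim. The only real subtlety is the transversality of each crossing, but this is handed to us by the dual embedding convention, so there is no serious obstacle; the argument is a direct application of the Jordan curve theorem combined with the primal/dual intersection property.
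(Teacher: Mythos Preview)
Your argument is correct: it is the standard Jordan-curve parity argument, and you use exactly the embedding convention the paper sets up (each dual edge meets its primal edge transversally once and meets no other primal edge). The paper itself states this result as an observation with no proof, so there is nothing to compare against; your write-up simply supplies the routine details the paper leaves to the reader.
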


Lastly, given a subset~$U \subseteq V$ of vertices in a plane graph~$G$ with faces~$F$, a \emph{face cover} of~$U$ is a subset~$F' \subseteq F$ of faces such that each vertex in~$U$ is on the frontier of some face in~$F'$.

\bigskip%
\noindent%
\textbf{Groups.}
The group $(\mathbb{Z}_2^d,+)$ consists of the set of all length-$d$
binary strings, and the sum of two strings is defined as the bitwise
xor of the strings (or addition without carry).  In this regards, it can be
seen as the $d$-dimensional \emph{bitwise xor vector space}
$\mathbb{F}_2^d$.  It is easy to see that this is indeed an (abelian) group:
(1)~The closure property is trivial, since it by definition contains every length-$d$ binary string.
(2)~Associativity can be seen by a simple case analysis, i.e., $(a \oplus b) \oplus c = a \oplus (b \oplus c)$ is bitwise 1 if and only if there is an odd number of 1s in the bit's position.
(3)~The identity element is the all $\mathbf{0}$ vector, i.e.\
$a \oplus \mathbf{0} = a$.
(4)~The inverse element of $a$ is $a$ itself, i.e.,
$a \oplus a = \mathbf{0}$.





\section{Shortest paths under xor constraints}
\label{sec:group-labeled-path}
In the \probSP problem, we are given a graph $G$, two vertices $s$ and $t$, an edge labeling function $g\colon E(G)\rightarrow \mathbb{Z}_2^d$, a value $c\in \mathbb{Z}_2^d$, and $p$ sets of vertices $X_1,\ldots,X_p \subseteq V(G)$.
The problem is to find an $s$-$t$-path $P$ that satisfies (i)\phantomsection\label[prop_ind_1]{xorp3} $\sum_{e \in E(P)} g(e) = c$ and (ii)\phantomsection\label[prop_ind_2]{xorp4} for each $i \in [p]$, $|V(P) \cap X_i| \le 1$, and among such paths minimizes the number of edges in $P$.

In this section we prove \Cref{thm:group-labeled-path}, which we restate here.

\theoremxorpath*




As a corollary, we obtain an algorithm for \probSC{} by guessing one vertex~$v$ in a solution, adding a false twin~$u$ of~$v$ to the input graph such that all edges incident to~$u$ are assinged value zero, and then asking for a shortest~$u$-$v$-path satisfying conditions~\cref{xorp3} and~\cref{xorp4}.

\begin{corollary}\label{cor:group-labeled-cycle}
\probSC can be solved in $2^{d+p}\cdot (n+m)^{\Oh(1)}$ time by a one-sided error randomized algorithm. 
\end{corollary}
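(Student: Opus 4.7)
The plan is to reduce \probSC{} to $n$ instances of \probSP{} by guessing a vertex $v$ of the sought cycle and splitting it into two copies. For each candidate $v \in V(G)$, I build an auxiliary graph $G_v$ by adding a new vertex $u$ that is a false twin of $v$: for every edge $vw \in E(G)$ I add a new edge $uw$ and set $g(uw) := g(vw)$, so the copied edge carries the same group element as the original. The sets $X_1,\ldots,X_p$ are kept unchanged, so in particular $u$ lies in none of them. I then invoke \Cref{thm:group-labeled-path} on $(G_v, u, v, c, X_1,\ldots,X_p)$ and return the shortest path found over all $n$ choices of $v$, re-interpreted as a cycle through $v$.

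For correctness I will exhibit a length- and label-sum preserving bijection between simple cycles $C$ through $v$ in $G$ of length at least $3$ and simple $u$-$v$-paths $P$ in $G_v$ of length at least $3$: a cycle $C = v\, w_1 \cdots w_{k-1}\, v$ maps to $P = u\, w_1 \cdots w_{k-1}\, v$, and conversely. By construction $g(uw_1) = g(vw_1)$, so $\sum_{e \in E(P)} g(e) = \sum_{e \in E(C)} g(e)$. Moreover $V(P) \setminus \{u\} = V(C)$ and $u \notin X_i$ for every $i$, so condition~\cref{xorp4} on $P$ is equivalent to the analogous condition on $C$. Hence minimizing the length of a valid $u$-$v$-path in $G_v$ over $v \in V(G)$ yields a shortest valid cycle in $G$.

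The running time is $n$ applications of \Cref{thm:group-labeled-path}, each in $2^{d+p}\cdot (n+m)^{\Oh(1)}$ time, totaling $2^{d+p}\cdot (n+m)^{\Oh(1)}$; the one-sided randomization is inherited directly. The only subtlety I anticipate is the degenerate length-$2$ path $u\, w\, v$, which in a simple graph corresponds to traversing $vw$ twice (and hence is not a cycle at all) and must be discarded from the output, and in a multigraph corresponds to a genuine two-edge cycle formed by parallel $vw$-edges. Both cases are handled in polynomial time by preprocessing out cycles of length at most $2$ separately---in the multigraph case by scanning pairs of parallel edges against the label-sum and $X_i$ constraints---so no deeper obstacle is expected beyond this bookkeeping.
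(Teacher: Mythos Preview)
Your reduction via a false twin of a guessed cycle vertex is exactly the paper's approach, and the argument is correct. Your label assignment $g(uw):=g(vw)$ is in fact more accurate than the paper's own one-line sketch (which assigns the new edges value zero---a choice that would not preserve the label sum), and your explicit treatment of the length-$\le 2$ degeneracy is a detail the paper omits.
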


\subsection{The algorithm}
In the remainder of this section we assume that the input graph $G$ is a simple graph.
Note that an input $n$-vertex $m$-edge multigraph can be turned into an $(n+m)$-vertex $2m$-edge simple graph by first removing self-loops, and then subdividing each edge once, giving the label of the edge to one of the subdivision edges and labeling the other subdivision edge with zero.
This exactly doubles the length of the solution.
We also assume without loss of generality that $s \neq t$.

Let us next introduce some notation.
We say that a sequence~$(v_0, v_1, \ldots, v_{\ell-1},v_{\ell})$ of~$\ell+1$ vertices is an~$s$-$t$-walk of length $\ell$ if $v_0 = s$, $v_{\ell} = t$, and $v_{i-1} v_{i} \in E(G)$ for each~$i \in [\ell]$.
Note that unlike a path, a walk can contain a vertex more than once.
We say that an $s$-$t$-walk is \emph{feasible} if it satisfies analogies of the contraints~\cref{xorp3} and~\cref{xorp4}, in particular, if
\begin{enumerate}
\item\label{walkcond1} $\sum_{i=1}^{\ell} g(v_{i-1} v_{i}) = c$, and
\item\label{walkcond2} for each $i \in [p]$, there is at most one $j \in [0,\ell]$ such that $v_j \in X_i$.
\end{enumerate}
For an integer $\ell \ge 1$, let $\wf_\ell$ denote the set of all feasible $s$-$t$-walks of length (exactly)~$\ell$.
We associate with~$\wf_\ell$ a polynomial as follows.

Let $q = 2^{\lceil \log_2 n \rceil + 1}$, and recall that $\GF(q)$ is a finite field of characteristic 2 and order $q$.
We define a polynomial over $\GF(q)$ as follows.
For each edge $uv \in E(G)$ we associate a variable $f(uv)$.
Then, for an $s$-$t$-walk $W = (v_0, \ldots, v_{\ell})$ of length $\ell$, we associate the monomial
\begin{align}
f(W) = \prod_{i=1}^{\ell} f(v_{i-1} v_{i})\label{eq:walkmono},
\end{align}
and for the set $\wf_\ell$ of all feasible $s$-$t$-walks of length~$\ell$, we associate the polynomial
\[f(\wf_\ell) = \sum_{W \in \wf_\ell} f(W).\]
Note that the degree of $f(\wf_\ell)$ is $\ell$.
Now, our algorithm will be based on the following lemma, which will be proved in \Cref{subsec:algebr:corr}.

\begin{lemma}
\label{lem:algebr:corr}
The length of the shortest $s$-$t$-path satisfying \cref{xorp3}~and~\cref{xorp4} is equal to the smallest integer $\ell$ such that $f(\wf_\ell)$ is a non-zero polynomial.
If no such $\ell$ exists, then no such $s$-$t$-path exists.
\end{lemma}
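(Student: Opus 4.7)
The plan is to prove \Cref{lem:algebr:corr} via the characteristic-two monomial-cancellation framework of Björklund, Husfeldt, and Taslaman~\cite{bjorklund2012shortestcycle}, adapted to the extra constraints (i) and (ii). The proof splits into two directions. For the \emph{easy} direction, suppose a feasible $s$-$t$-path $P$ of length $\ell$ exists. Then $f(P) = \prod_{e \in E(P)} f(e)$ is a squarefree monomial of degree $\ell$ in the edge variables, and I will argue that no other walk $W' \in \wf_\ell$ contributes the same monomial: any such $W'$ must traverse the edges of $P$ each exactly once (an Eulerian $s$-$t$-walk on the subgraph $P$), and since $P$ is a simple path, the only such walk is the forward traversal of $P$ itself. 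Hence $f(P)$ survives in $f(\wf_\ell)$ with coefficient $1$, so $f(\wf_\ell) \neq 0$ whenever a feasible path of that length exists.

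For the \emph{hard} direction, I will show that if no feasible $s$-$t$-path of length $\ell$ exists, then $f(\wf_\ell)$ is identically zero. Since $\mathrm{char}(\GF(q)) = 2$, it suffices to exhibit a fixed-point-free involution $\sigma\colon \wf_\ell \to \wf_\ell$ satisfying $f(\sigma(W)) = f(W)$ for every $W \in \wf_\ell$: each pair $\{W, \sigma(W)\}$ contributes $2 f(W) = 0$. The construction follows the BHT recipe. Given a walk $W = (v_0, \ldots, v_\ell)$ that is not a path, locate the smallest index $k$ such that $v_k$ coincides with some earlier vertex; let $j < k$ be the corresponding earlier occurrence; then define $\sigma(W)$ by reversing the closed segment from position $j$ to position $k$. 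This operation preserves the multiset of traversed edges (so condition~\cref{walkcond1}, the group sum $\sum g(e)$, is maintained) and the multiset of visited vertices (so condition~\cref{walkcond2}, the $X_i$-visit constraint, is maintained); hence $\sigma(W) \in \wf_\ell$. The monomial equality $f(\sigma(W)) = f(W)$ is immediate from $f(uv) = f(vu)$.

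\textbf{Main obstacle.} The principal technical difficulty is ensuring that $\sigma$ is genuinely fixed-point free: the naive ``reverse the first closed subwalk'' operation fixes any palindromic inner segment, in particular every length-two backtrack $v_j, v_{j+1}, v_j$. To eliminate this, I will refine $\sigma$ following~\cite{bjorklund2012shortestcycle} by attaching a canonical tie-breaker that must flip under the involution. Concretely, I will fix a total order on $V(G)$, and at the first repetition I will choose the reversal direction so that the step taken out of $v_j$ is the one whose other endpoint is smallest under the order; if the original walk already exits $v_j$ via the smallest neighbor used in the closed segment, then the involution reverses so as to exit via the other endpoint, and vice versa. This makes $\sigma(W) \neq W$ while keeping $\sigma \circ \sigma = \mathrm{id}$, the monomial invariant, and feasibility intact. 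Combining both directions gives that the smallest $\ell$ with $f(\wf_\ell) \neq 0$ is exactly the length of the shortest feasible $s$-$t$-path (bounded by $n-1$ if one exists), completing the lemma.
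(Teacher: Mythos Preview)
Your forward direction is correct and matches the paper. The backward direction has the right architecture (a fixed-point-free monomial-preserving involution on $\wf_\ell$), but your handling of palindromic closed subwalks does not work. If the first closed segment $v_j,\ldots,v_k$ (with $v_j=v_k$) is a palindrome, then in particular $v_{j+1}=v_{k-1}$, so the two ``exit neighbours'' your tie-breaker compares are the \emph{same vertex}; more fundamentally, reversing a palindromic segment returns the identical sequence, so no order-based rule can distinguish $W$ from $\sigma(W)$---they are literally equal. Concretely, on the graph with edges $sa$, $ab$, $at$ and $d=p=0$, the three walks in $\wf_4$ are $(s,a,s,a,t)$, $(s,a,b,a,t)$, $(s,a,t,a,t)$; each has a palindromic first closed subwalk and is fixed by your $\sigma$. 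This example also shows that your stated backward direction is too strong: there is no feasible path of length exactly~$4$, yet $f(\wf_4)=f(sa)^3 f(at)+f(sa)f(ab)^2 f(at)+f(sa)f(at)^3\neq 0$.

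The paper's fix is not a tie-breaker but an iteration: when the current closed segment is a palindrome, skip past its right end and search for the next closed segment, repeating until a non-palindromic one is found; reverse that one. The substantive claim is that this process always terminates with a non-palindromic segment under the hypothesis that no feasible $s$-$t$-path of length \emph{at most} $\ell$ exists. Indeed, if every segment encountered were a palindrome, contracting them all would leave an $s$-$t$-walk with no repeated vertex---a genuine path---of length at most $\ell$, still satisfying condition~\cref{xorp3} (each palindrome contributes $0$ in $\mathbb{Z}_2^d$ since every edge occurs an even number of times) and condition~\cref{xorp4} (positions are only deleted), contradicting the hypothesis. This contraction argument is where the group structure is actually used in the backward direction, and it is why the correct hypothesis is ``$\le \ell$'' rather than ``$=\ell$''.
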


Given \Cref{lem:algebr:corr}, it remains to design an algorithm for testing if $f(\wf_\ell)$ is a non-zero polynomial.
For this, we use the DeMillo--Lipton--Schwartz--Zippel lemma.

\begin{lemma}[\cite{Schwartz80,Zippel79}]
\label{lem:schwartzzippel}
Let $p(x_1, \ldots, x_n)$ be a non-zero polynomial of degree $d$ over a field $\mathbb{F}$, and let $S$ be a subset of $\mathbb{F}$.
If each $x_i$ is independently assigned a uniformly random value from $S$, then~$p(x_1, \ldots, x_n) = 0$ with probability at most $d/|S|$.
\end{lemma}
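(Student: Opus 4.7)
The plan is to prove this classical bound by induction on the number of variables $n$, peeling off one variable at a time. The base case relies on the univariate root-count bound, and the inductive step on a union bound over two bad events governed by the coefficient of the highest power of the peeled variable.

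For the base case $n = 1$, I would use the standard fact that a non-zero univariate polynomial of degree at most $d$ over a field has at most $d$ roots, because each root in $\mathbb{F}$ yields a distinct linear factor via polynomial division. Hence at most $d$ elements of $S$ are roots, and since $x_1$ is drawn uniformly from $S$, we get $\Pr[p(x_1) = 0] \le d/|S|$.

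For the inductive step, I would regard $p(x_1, \ldots, x_n)$ as a polynomial in $x_n$ with coefficients in $\mathbb{F}[x_1, \ldots, x_{n-1}]$ and write
\[
p(x_1, \ldots, x_n) = \sum_{i=0}^{k} x_n^{i}\, q_i(x_1, \ldots, x_{n-1}),
\]
where $k$ is the largest exponent of $x_n$ that actually occurs in $p$, so that $q_k$ is non-zero. Since the total degree of $x_n^{k} q_k$ is at most $d$, the polynomial $q_k$ has degree at most $d - k$. The inductive hypothesis applied to $q_k$ yields $\Pr[q_k(x_1, \ldots, x_{n-1}) = 0] \le (d-k)/|S|$. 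Conditioned on the complementary event that $q_k$ evaluates to a non-zero field element, and using that $x_n$ is drawn independently of $x_1, \ldots, x_{n-1}$, the substitution leaves a non-zero univariate polynomial in $x_n$ of degree at most $k$; the base case then gives conditional vanishing probability at most $k/|S|$. A union bound over the event $\{q_k = 0\}$ and the conditional vanishing event produces $(d-k)/|S| + k/|S| = d/|S|$, closing the induction.

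The main subtlety, such as it is, is conceptual rather than technical: the conditional probability step needs the independence of the coordinates, so that fixing any values of $x_1, \ldots, x_{n-1}$ still leaves $x_n$ uniform on $S$, and the choice of $k$ as the largest exponent of $x_n$ appearing in $p$ ensures that, whenever $q_k$ does not vanish at $(x_1, \ldots, x_{n-1})$, the residual polynomial in $x_n$ is genuinely non-zero of degree exactly $k$. Once these two observations are in place, the rest is elementary degree bookkeeping.
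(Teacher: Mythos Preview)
Your argument is correct and is the standard inductive proof of the DeMillo--Lipton--Schwartz--Zippel lemma. Note that the paper does not supply its own proof of this statement at all: it is quoted as a black-box lemma with citations to~\cite{Schwartz80,Zippel79}, so there is no in-paper argument to compare against. Your write-up would serve perfectly well as a self-contained proof if one were desired.
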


By \Cref{lem:schwartzzippel} to probabilistically test if $f(\wf_\ell)$ is non-zero it suffices to evaluate $f(\wf_\ell)$ on a random assignment of values from~$\GF(q)$ to the variables $f(uv)$.
Because the degree of $f(\wf_\ell)$ is $\ell \le n$ and the order of~$\GF(q)$ is $q \ge 2n$, this test is correct with probability at least $0.5$ whenever $f(\wf_\ell)$ is non-zero.
Note that if $f(\wf_\ell)$ is the zero polynomial, this test is always correct.
Next we show that this evaluation can be done efficiently.

\begin{lemma}
\label{lem:algalgeval}
Given an assignment of values to the variables $f(uv)$ for all $uv \in E(G)$, the value of the polynomial $f(\wf_\ell)$ can be evaluated in time $\Oh(2^{d+p} n^2 \ell)$.
\end{lemma}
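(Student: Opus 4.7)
The plan is to compute $f(\wf_\ell)$ by dynamic programming over growing walks from $s$. Writing $T_v := \{i \in [p] : v \in X_i\}$ for the indices of the $X$-sets containing $v$, I would maintain a table $A[k][v][c'][S]$ indexed by a length $k \in [0,\ell]$, a vertex $v \in V(G)$, a partial xor-sum $c' \in \mathbb{Z}_2^d$, and a subset $S \subseteq [p]$. Its intended value is the sum of $f(W)$ over all $s$-to-$v$ walks $W = (s = v_0, v_1, \ldots, v_k = v)$ of length exactly $k$ for which $\sum_{j=1}^{k} g(v_{j-1}v_j) = c'$, the cumulative footprint $\bigcup_{j=0}^{k} T_{v_j}$ equals $S$, and the truncated form of constraint~\cref{xorp4} holds, i.e., no two distinct indices $j, j'$ have $v_j, v_{j'}$ in a common $X_i$.

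The base case would be $A[0][s][\mathbf{0}][T_s] = 1$, with all other $k=0$ entries zero. For $k \ge 1$ and $T_v \subseteq S$ the recurrence is
\[
A[k][v][c'][S] \;=\; \sum_{u \,:\, uv \in E(G)} f(uv) \cdot A[k-1][u][c' \oplus g(uv)][S \setminus T_v],
\]
and $A[k][v][c'][S] = 0$ whenever $T_v \not\subseteq S$; the output is $f(\wf_\ell) = \sum_{S \subseteq [p]} A[\ell][t][c][S]$. Correctness will follow by induction on $k$ together with the multiplicativity of $f(W)$ under splitting off the last edge: the reason the previous used-set must be exactly $S \setminus T_v$ is that appending $v$ preserves feasibility precisely when no earlier vertex of the walk lies in any $X_i$ with $i \in T_v$, which forces the previous footprint to omit all of $T_v$. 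This simultaneously forbids revisiting $v$ itself whenever $T_v \neq \emptyset$, while leaving vertices outside $\bigcup_i X_i$ free to be revisited, matching the walk-level version of~\cref{xorp4}.

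For the running time, the table has $(\ell+1) \cdot n \cdot 2^d \cdot 2^p$ entries, and evaluating one entry at $v$ costs $\deg(v)$ field operations in $\GF(q)$; since $\sum_v \deg(v) = 2m \le n^2$, the total number of arithmetic operations is $\Oh(2^{d+p} \cdot n^2 \cdot \ell)$, each of cost polynomial in $\log q = \Oh(\log n)$ and absorbed in the stated bound. The main thing I expect to have to verify carefully is the recurrence itself, namely that the bookkeeping by $S \setminus T_v$ realizes a bijection between length-$k$ feasible walks ending at $v$ with parameters $(c',S)$ and pairs consisting of a length-$(k{-}1)$ feasible walk ending at a neighbor $u$ with parameters $(c' \oplus g(uv), S \setminus T_v)$ together with the last edge $uv$. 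The reduction to a simple graph (which eliminates self-loops) ensures $u \neq v$ in this pairing, so no walk is counted twice, and this is where I expect the only real subtlety to sit.
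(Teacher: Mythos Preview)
Your proposal is correct and takes essentially the same approach as the paper: your table $A[k][v][c'][S]$ coincides with the paper's $f(\wf(v,k,c',S))$, and the base case, the recurrence via $S \setminus T_v$ with the guard $T_v \subseteq S$, the output formula $\sum_{S} A[\ell][t][c][S]$, and the $\Oh(2^{d+p} n^2 \ell)$ accounting are all identical up to notation.
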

\begin{proof}
We evaluate the polynomial by dynamic programming on walks.
For $u \in V(G)$, $l \in [0,\ell]$, $y \in \mathbb{Z}_2^d$, and $T \subseteq [p]$, let us denote by $\wf(u, l, y, T)$ the set of $s$-$u$-walks $(s = v_0, v_1, \ldots, v_{l} = u)$ of length~$l$ that have 
\begin{itemize}
\item $\sum_{i=1}^{l} g(v_{i-1} v_{i}) = y$, 
\item for each $i \in [p] \setminus T$, it holds that $\{v_0,v_1, \ldots, v_l\} \cap X_i = \emptyset$, and
\item for each $i \in T$, there exists exactly one $j \in [0,l]$ so that $v_j \in X_i$.
\end{itemize}
Then, we denote by $f(\wf(u, l, y, T))$ the value $\sum_{W \in \wf(u, l, y, T)} f(W)$, where $f(W)$ is defined as in \Cref{eq:walkmono}, with the empty product interpreted as being equal to $1$.
Now, we have that $f(\wf_\ell) = \sum_{T \subseteq [p]} f(\wf(t, \ell, c, T))$.
It remains to show that the values $f(\wf(u, l, y, T))$ can be computed by dynamic programming.

Let~$T_v = \{i \in [p] \mid v \in X_i\}$ for each~$v \in V(G)$.
Then, the values for $l = 0$ are computed by setting $f(\wf(s, 0, 0, T_s)) = 1$ and all other values with $l=0$ to $0$.
Then, when $l \ge 1$, the values $f(\wf(u, l, y, T))$ are computed by dynamic programming from the values for smaller $l$ as follows.
\begin{itemize}
\item If $T_u \subseteq T$, then 
$f(\wf(u, l, y, T)) = \sum\limits_{uw \in E(G)} f(uw) \cdot f(\wf(w, l-1, y-g(\{u,w\}), T \setminus T_u))$.
\item Otherwise, $f(\wf(u, l, y, T)) = 0$.
\end{itemize}
This clearly computes the values correctly, and runs in overall~$\Oh(2^{d+p} n^2 \ell)$~time. 
\end{proof}
     
Now, our algorithm works by using \Cref{lem:algalgeval} to evaluate $f(\wf_\ell)$ for random assignments of values to variables $f(uv)$ for increasing values of $\ell \le n$, and once it evaluates to non-zero, reports that $\ell$ is the length of the shortest $s$-$t$-path satisfying \cref{xorp3}~and~\cref{xorp4}.
If no such~$\ell \le n$ is found, the algorithm reports that no such $s$-$t$-path exists. 
Note that the correctness of the algorithm depends only on the randomness on the evaluation with the correct $\ell$, and therefore the algorithm is correct with probability at least~$0.5$, and never reports a length shorter than the length of a shortest solution.
This probability can be exponentially improved by running the algorithm multiple times.
To recover the solution, it suffices to use the algorithm to test which edges can be removed from the graph $G$ until $G$ turns into an $s$-$t$-path.
Clearly, to both recover the solution and to have an exponentially small error probability it suffices to run the algorithm a polynomial number of times, so this finishes the proof of \Cref{thm:group-labeled-path}, modulo the proof of \Cref{lem:algebr:corr} that will be given in the next subsection.

\subsection{Proof of correctness}
\label{subsec:algebr:corr}
This section is devoted to the proof of \Cref{lem:algebr:corr}.
We first prove the direction that the existence of a solution of length $\ell$ implies that $f(\wf_\ell)$ is non-zero.

\begin{lemma}
\label{lem:algebralgcorrfpfi:forw}
If an $s$-$t$-path of length $\ell$ satisfying \cref{xorp3}~and~\cref{xorp4} exists, then $f(\wf_\ell)$ is a non-zero polynomial.
\end{lemma}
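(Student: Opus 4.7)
The plan is to exhibit the monomial of the witness path $P$ itself as a nonzero summand of $f(\wf_\ell)$ whose coefficient no other walk can cancel. More precisely, I will show that $P \in \wf_\ell$ and that $f(P) = \prod_{e \in E(P)} f(e)$ appears with coefficient $1$ in $f(\wf_\ell)$, which is already enough to conclude that $f(\wf_\ell)$ is a nonzero polynomial. That $P$ lies in $\wf_\ell$ is immediate: every $s$-$t$-path is an $s$-$t$-walk, and because $P$ visits each vertex at most once, the path-feasibility conditions \cref{xorp3}~and~\cref{xorp4} translate directly into the walk-feasibility conditions defining $\wf_\ell$. Moreover, $f(P)$ is multilinear, since $P$ is simple and hence each variable $f(e)$ appears to the power exactly $1$.

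The core step is to show that no other walk in $\wf_\ell$ produces the same monomial. Two walks produce the same monomial precisely when they use the same multiset of edges, so any walk $W$ with $f(W) = f(P)$ must satisfy $E(W) = E(P)$ as multisets. Since $P$ is simple, each edge of $E(P)$ appears exactly once, and hence $W$ is an Eulerian $s$-$t$-trail in the subgraph $H$ whose edge set is $E(P)$. But $H$ is just the simple $s$-$t$-path $P$, and a simple $s$-$t$-path admits a unique Eulerian $s$-$t$-trail (traverse it from $s$ to $t$ in order). Hence $W = P$, the coefficient of $f(P)$ in $f(\wf_\ell)$ equals $1$ in $\GF(q)$, and therefore $f(\wf_\ell)$ is not the zero polynomial.

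I do not anticipate any essential obstacle in this direction: once one notes that a simple $s$-$t$-path has a unique Eulerian $s$-$t$-trail, the argument is a short bookkeeping check. The genuinely delicate half of \Cref{lem:algebr:corr} is the converse direction, where one has to show that non-path walks cancel in pairs; there, a Bj\"orklund--Husfeldt--Taslaman-style involution (for example, reversing the first simple cycle that appears in a walk) is the natural tool, together with careful handling of length-two ``backtrack'' subwalks and of the constraints imposed by the sets $X_i$. That is the part of the correctness proof of \Cref{lem:algebr:corr} where I would expect real work to be required, not the forward implication treated by the present lemma.
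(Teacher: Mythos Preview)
Your proof is correct and follows essentially the same approach as the paper: both argue that the path $P$ itself lies in $\wf_\ell$ and that its monomial $f(P)$ appears with coefficient~$1$ because $P$ is the unique walk in $\wf_\ell$ whose edge multiset equals $E(P)$. Your Eulerian-trail phrasing is just a slightly more explicit justification of what the paper asserts in one line (``$W$ can be determined uniquely from its set of edges'').
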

\begin{proof}
Let $W=(s = v_0, v_1, \ldots, v_{\ell} = t)$ be the sequence of vertices on an $s$-$t$-path of length~$\ell$ satisfying conditions~\cref{xorp3}~and~\cref{xorp4}.
Note that~$W$ is a feasible $s$-$t$-walk and~$W \in \wf_\ell$.
Because each vertex occurs in the walk~$W$ at most once, we observe that $W$ can be determined uniquely from its set of edges, and therefore $W$ is the only walk in~$\wf_\ell$ with the monomial $f(W) = \prod_{i=1}^{\ell} f(v_{i-1} v_{i})$.
Thus, the monomial~$f(W)$ occurs in the polynomial~$f(\wf_\ell)$ with coefficient $1$, and therefore $f(\wf_\ell)$ is non-zero.
\end{proof}

It remains to prove that if no solutions of length at most $\ell$ exists, then $f(\wf_\ell)$ is the zero polynomial.
For this, let us state our main lemma, but delay its proof until the end of this subsection.

\begin{lemma}
\label{lem:algebralgcorrfpfi}
If no $s$-$t$-path of length at most~$\ell$ satisfying conditions~\cref{xorp3}~and~\cref{xorp4} exists, then there exists a function~$\phi : \wf_\ell \rightarrow \wf_\ell$ such that for every $W \in \wf_\ell$ it holds that
\begin{enumerate}
\item \label{lem:algebralgcorrfpfi:pinvo} $\phi(\phi(W)) = W$,
\item \label{lem:algebralgcorrfpfi:fpf} $\phi(W) \neq W$, and
\item \label{lem:algebralgcorrfpfi:pf} $f(\phi(W)) = f(W)$.
\end{enumerate}
\end{lemma}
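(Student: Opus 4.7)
The plan is to construct a fixed-point-free involution $\phi:\wf_\ell\to\wf_\ell$ satisfying $f(\phi(W))=f(W)$; since we work over a field of characteristic~$2$, this forces $f(\wf_\ell)$ to be the zero polynomial. The hypothesis guarantees that every $W=(v_0,\ldots,v_\ell)\in\wf_\ell$ must contain a repeated vertex---otherwise $W$ would itself be a valid $s$-$t$-path of length $\ell$ satisfying \cref{xorp3}~and~\cref{xorp4}---so I can canonically locate a ``first self-intersection''.

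Following the Bj\"orklund--Husfeldt--Taslaman scheme for the $T$-cycle problem, my candidate involution takes the smallest index $j$ with $v_j=v_i$ for some earlier $i$, and reverses the segment $(v_i,v_{i+1},\ldots,v_j)$. Reversal preserves the multiset of traversed edges (hence $f(W)$), the xor sum, and the multiset of vertex visits (hence condition~\cref{xorp4}); moreover the pair $(i,j)$ remains the first self-intersection of $\phi(W)$, so applying $\phi$ a second time reverses the same segment back to $W$. The scheme works whenever the segment is not a palindrome, and since the graph is simple (after the reduction removing self-loops and subdividing multi-edges), a short case analysis rules out palindromes of length $\ge 3$: length-odd palindromes would force a self-loop at the midpoint, while length-even palindromes of length $\ge 4$ would force an earlier repeat contradicting the minimality of $j$. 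The only palindromic first self-intersections that survive are therefore length-$2$ ``U-turns'' $(v_i,v_{i+1},v_i)$.

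The main obstacle is this U-turn case, where reversal is the identity. To deal with it, I will exploit that cancelling a U-turn produces a walk of length $\ell-2$ with the same xor sum (because $2g(v_iv_{i+1})=\mathbf{0}$ in $\mathbb{Z}_2^d$) and that the cancellation only removes vertex visits from the walk, so condition~\cref{xorp4} is preserved. Iteratively cancelling all U-turns yields a non-backtracking walk $\tilde W$ of length $\ell-2k$ for some $k\ge 0$; by hypothesis $\tilde W$ cannot be a simple $s$-$t$-path, so its first self-intersection is necessarily non-palindromic and the reversal scheme applies. I will then define $\phi(W)$ by performing this reversal inside $\tilde W$ and re-inserting the $k$ cancelled U-turns at their original positions. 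Showing that this lifted map is a well-defined fixed-point-free involution on $\wf_\ell$---in particular, tracking the positions of the cancelled U-turns through the reversal, and verifying that ``cancel-reverse-reinsert'' applied to $\phi(W)$ produces the same non-backtracking skeleton as $W$ and returns the original U-turn placements---is the technically delicate step I expect to consume the bulk of the proof.
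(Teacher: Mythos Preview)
Your plan for the non-palindrome case is sound: reversing the first self-intersection $(i,j)$ indeed preserves the pair $(i,j)$, so you get a fixed-point-free involution whenever the segment is not a palindrome, and your case analysis correctly reduces all palindromic first self-intersections to length-$2$ U-turns. The difficulty, as you anticipate, is entirely in the U-turn case, and here your ``reduce to non-backtracking $\tilde W$, reverse, re-insert'' scheme has a real obstruction rather than a merely technical one.

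The problem is that reversing the first self-intersection of a non-backtracking $\tilde W$ can \emph{create} a backtrack in $\tilde W'$: if the first self-intersection of $\tilde W$ is at $(i',j')$ and it happens that $v_{i'+1}=v_{j'+1}$, then positions $j'{-}1,j',j'{+}1$ of $\tilde W'$ read $v_{i'+1},v_{i'},v_{i'+1}$. When you then apply your map to $\phi(W)$, the reduction step collapses this new backtrack too, so the non-backtracking skeleton of $\phi(W)$ is strictly shorter than $\tilde W$, the first self-intersection you locate there need not be $(i',j')$, and the re-insertion does not return $W$. A concrete instance where the hypothesis holds: take vertices $\{s,a,b,c,d,e,t\}$, edges $\{sa,ab,bc,ca,bd,de,es,st\}$, label $g(sa)=1\in\mathbb{Z}_2$ and all other edges $0$, target $c=1$, $p=0$, $\ell=9$. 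The only simple $s$-$t$-path is $(s,t)$ with xor sum $0\neq 1$, so the hypothesis holds. The walk $\tilde W=(s,a,b,c,a,b,d,e,s,t)\in\wf_9$ is non-backtracking with first self-intersection $(1,4)$; reversing gives $\tilde W'=(s,a,c,b,a,b,d,e,s,t)$, which has a backtrack at positions $3,4,5$. Reducing $\tilde W'$ yields $(s,a,c,b,d,e,s,t)$, whose first self-intersection is $(0,6)$; reversing there gives $(s,e,d,b,c,a,s,t)$, and no re-insertion of the single cancelled U-turn recovers $\tilde W$ (which begins $s,a,b,\ldots$). Hence $\phi(\phi(\tilde W))\neq\tilde W$.

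The paper sidesteps this by a different choice of segment and a different treatment of palindromes. Instead of the smallest $j$ with $v_j$ repeating an earlier vertex, it takes the smallest $i$ such that $v_i$ occurs again later, paired with the \emph{last} occurrence $j$ of $v_i$. If that segment is a palindrome, the process does not cancel it but simply moves past $j$ and repeats the search on the suffix. If every segment so encountered is a palindrome, contracting them all yields a feasible simple $s$-$t$-path of length at most $\ell$, contradicting the hypothesis; so some segment is non-palindromic, and one reverses the first such. The involution follows because this reversal leaves the walk unchanged up to index $i$, leaves the multiset of vertices beyond the previously-skipped palindromes unchanged, and keeps the last occurrence of $v_i$ at $j$; hence the selection process runs identically on $\phi(W)$ and locates the same $(i,j)$. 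No reduction or re-insertion is needed.
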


Now, assuming \Cref{lem:algebralgcorrfpfi}, the proof of \Cref{lem:algebr:corr} can be finished as follows.

\begin{lemma}
\label{lem:algebralgcorrfpfi:backw}
If no $s$-$t$-path of length at most $\ell$ satisfying conditions~\cref{xorp3}~and~\cref{xorp4} exists, then $f(\wf_\ell)$ is the zero polynomial.
\end{lemma}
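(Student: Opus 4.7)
The plan is to leverage \Cref{lem:algebralgcorrfpfi} to pair up the walks in $\wf_\ell$ so that the paired monomials cancel in the characteristic-two field $\GF(q)$. The key observation is that properties~(\ref{lem:algebralgcorrfpfi:pinvo}) and~(\ref{lem:algebralgcorrfpfi:fpf}) of the function $\phi$ guarantee that $\phi$ acts as a fixed-point-free involution on $\wf_\ell$. This means the set $\wf_\ell$ naturally partitions into orbits of size exactly two of the form $\{W, \phi(W)\}$, and these orbits are pairwise disjoint.

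The computation then proceeds as follows. First, I would pick an arbitrary system of orbit representatives $R \subseteq \wf_\ell$ so that $\wf_\ell = \bigsqcup_{W \in R} \{W, \phi(W)\}$. Rewriting the defining sum,
\[
f(\wf_\ell) = \sum_{W \in \wf_\ell} f(W) = \sum_{W \in R} \bigl(f(W) + f(\phi(W))\bigr).
\]
Now I would invoke property~(\ref{lem:algebralgcorrfpfi:pf}), which asserts $f(\phi(W)) = f(W)$, to simplify each bracket to $2 \cdot f(W)$. Because $\GF(q)$ has characteristic two, this equals zero, so every term in the outer sum vanishes and $f(\wf_\ell) = 0$ as a polynomial identity.

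The only subtlety to flag is that I should clarify at the start why each orbit has cardinality exactly $2$ rather than possibly $1$: the involution property forces orbits of size at most $2$, while the fixed-point-free property rules out singleton orbits. Beyond this, the argument is essentially mechanical, since all the difficulty of the correctness proof has been absorbed into \Cref{lem:algebralgcorrfpfi}, which is the genuine obstacle — constructing a pairing of infeasible non-path walks that preserves the monomial is the heart of the Björklund--Husfeldt--Taslaman style argument. Once that pairing is in hand, the present lemma is a one-line consequence of characteristic-two cancellation.
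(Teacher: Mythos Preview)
Your proposal is correct and essentially identical to the paper's own proof: both invoke \Cref{lem:algebralgcorrfpfi}, use properties~(\ref{lem:algebralgcorrfpfi:pinvo}) and~(\ref{lem:algebralgcorrfpfi:fpf}) to partition $\wf_\ell$ into pairs $\{W,\phi(W)\}$, and then cancel each pair via property~(\ref{lem:algebralgcorrfpfi:pf}) and characteristic two. The only cosmetic difference is that you explicitly name a set of orbit representatives $R$, whereas the paper leaves this implicit.
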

\begin{proof}
Let $\phi$ be the function given by \Cref{lem:algebralgcorrfpfi}.
By properties \ref{lem:algebralgcorrfpfi:pinvo}~and~\ref{lem:algebralgcorrfpfi:fpf}, the set $\wf_\ell$ can be partitioned into pairs $\{W, \phi(W)\}$.
Now, property~\ref{lem:algebralgcorrfpfi:pf} states that~$f(W) = f(\phi(W))$ and since~$\GF(q)$ is a field of characteristic~2, it holds that $f(W) + f(\phi(W)) = 0$.
Thus, $\sum_{W \in \wf_\ell} f(W) = 0$.
\end{proof}

Putting \Cref{lem:algebralgcorrfpfi:forw,lem:algebralgcorrfpfi:backw} together implies \Cref{lem:algebr:corr}.
It remains to prove \Cref{lem:algebralgcorrfpfi}.

\begin{proof}[Proof of \Cref{lem:algebralgcorrfpfi}]
Assume that no $s$-$t$-path of length at most $\ell$ satisfying \cref{xorp3}~and~\cref{xorp4} exists.
We will define the function $\phi : \wf_\ell \rightarrow \wf_\ell$ explicitly and show that it satisfies all of the required properties.
Let $W = (v_0,v_1, \ldots, v_{\ell})$ be an $s$-$t$-walk in $\wf_\ell$.
The idea of the definition of $\phi$ will be to locate a \emph{subwalk}~$(v_i, v_{i+1}, \ldots, v_{j-1}, v_j)$ of~$W$ where $0 \le i < j \le \ell$, and reverse the subwalk, i.e., map the walk 
\[W = (v_0,v_1,\ldots,v_{i-1},v_i,v_{i+1},\ldots,v_{j-1},v_j,v_{j+1}, \ldots, v_{\ell})\]
into the walk
\[W\overleftarrow{[i,j]} = (v_0,v_1, \ldots, v_{i-1}, v_j, v_{j-1}, \ldots, v_{i+1}, v_i, v_{j+1}, \ldots, v_{\ell}).\]
In particular, we will have that $\phi(W) = W\overleftarrow{[i,j]}$ for a carefully chosen pair $i,j$ with $0 \le i < j \le \ell$.
This pair will be chosen so that $v_i = v_j$, which ensures that~$W\overleftarrow{[i,j]} \in \wf_\ell$ and $f(W\overleftarrow{[i,j]}) = f(W)$ since the multiset of pairs of adjacent vertices in the walk does not change.

It remains to argue that such a pair $i,j$ can be chosen so that the properties $\phi(\phi(W)) = W$ and~$\phi(W) \neq W$ hold.
Observe that the property $\phi(W) \neq W$ holds if and only if the subwalk from $i$ to~$j$ is not a \emph{palindrome}, i.e., a sequence that is the same when reversed.
Now we define a process that outputs a pair $i,j$ so that $0 \le i<j \le \ell$, $v_i = v_j$, and the subwalk from $i$ to~$j$ is not a palindrome.

The process starts by setting $i=j=0$.
Then, it repeats the following: It first selects $i$ to be the smallest integer $i>j$ so that the vertex $v_i$ occurs in the walk in the indices greater than $j$ more than once.
If no such $i$ exists, it outputs $\textsf{FAIL}$.
Then, it sets $j$ to be the largest integer so that $v_i = v_j$, in particular, the index of the last occurrence of $v_i$ in the walk.
At this point it is guaranteed that $0 \le i < j \le \ell$ and $v_i = v_j$.
Now, if the subwalk from $i$ to $j$ is not a palindrome, it outputs the pair $i,j$.
Otherwise, the process repeats.

Observe that the process always outputs either \textsf{FAIL} or a pair~$i,j$ with~$0 \le i < j \le \ell$ and~$v_i = v_j$ such that the subwalk from~$i$ to~$j$ is not a palindrome.
We prove that it actually never outputs \textsf{FAIL}.
\begin{claim}
The process defined above never outputs \textsf{FAIL}.
\end{claim}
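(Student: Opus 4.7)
The plan is to argue by contradiction. I will assume that the process outputs \textsf{FAIL} on some walk $W = (v_0, v_1, \ldots, v_\ell) \in \wf_\ell$ and extract from $W$ a valid $s$-$t$-path of length at most $\ell$ satisfying \cref{xorp3} and \cref{xorp4}, contradicting the hypothesis of \Cref{lem:algebralgcorrfpfi}. Let $[i_1, j_1], \ldots, [i_k, j_k]$ be the palindromic subwalks identified by the process before \textsf{FAIL} was triggered, so $0 \le i_1 < j_1 < i_2 < j_2 < \ldots < i_k < j_k \le \ell$ and each $(v_{i_m}, \ldots, v_{j_m})$ is a palindrome with $v_{i_m} = v_{j_m}$.

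Define the \emph{reduced walk}
\[ R := (v_0, v_1, \ldots, v_{i_1}, v_{j_1+1}, \ldots, v_{i_2}, v_{j_2+1}, \ldots, v_{i_k}, v_{j_k+1}, \ldots, v_\ell), \]
obtained from $W$ by contracting each palindromic subwalk to its shared endpoint. Three easy checks show that $R$ is feasible and no longer than $W$. Its length is $\ell - \sum_m (j_m - i_m) \le \ell$. Each palindromic subwalk traverses every one of its edges an even number of times, so its xor-contribution vanishes in $\mathbb{Z}_2^d$ and $R$ inherits the xor-sum $c$ of $W$, giving \cref{xorp3}. Finally, the vertex sequence of $R$ is a subsequence of that of $W$, so each set $X_i$ is visited at most once in $R$, giving \cref{xorp4}.

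The heart of the argument is to show that $R$ is a \emph{simple} path. I would partition the indices surviving in $R$ into segments $S_0 = \{0, 1, \ldots, i_1\}$, $S_m = \{j_m+1, \ldots, i_{m+1}\}$ for $1 \le m < k$, and $S_k = \{j_k+1, \ldots, \ell\}$. The minimality of $i_{m+1}$ as the smallest index past $j_m$ whose vertex repeats in the suffix past $j_m$ implies that every interior vertex $v_a$ with $a \in \{j_m+1, \ldots, i_{m+1}-1\}$ appears exactly once in that suffix, hence is distinct from every vertex in $S_{m+1}, \ldots, S_k$. Each segment-endpoint $v_{i_{m+1}} = v_{j_{m+1}}$ is by definition the last occurrence in $W$ of its vertex, so it too fails to appear in any $S_{m'}$ with $m' > m$. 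Finally, the \textsf{FAIL} condition itself asserts that no vertex of $(v_{j_k+1}, \ldots, v_\ell)$ repeats in the suffix past $j_k$, so $S_k$ is internally vertex-disjoint. Combining these observations yields that $R$ has no repeated vertex.

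The main obstacle I expect is the boundary case of the starting vertex $v_0 = s$: I must rule out that $s$ reappears at some later index surviving in $R$. This is delicate because the first iteration starts with $j=0$ and searches for the smallest $i>0$ whose vertex repeats in indices past $0$, so a lone extra occurrence of $s$ at some index $b>0$ is not detected as a suffix repetition. I would resolve this by interpreting the first iteration as allowing $i=0$ (equivalently, initialising $j=-1$), under which if $s$ reappears at some index $b>0$ in $W$ then the pair $(0, \max\{b' : v_{b'}=s\})$ is examined: either the corresponding subwalk is a palindrome and the process contracts it, removing every later occurrence of $s$ from $R$, or it is not a palindrome and the process outputs this pair and never reaches \textsf{FAIL}. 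In either case, the assumed \textsf{FAIL} forces the reduced walk $R$ to be a genuine simple $s$-$t$-path of length at most $\ell$ satisfying \cref{xorp3} and \cref{xorp4}, contradicting the hypothesis of \Cref{lem:algebralgcorrfpfi} and completing the proof.
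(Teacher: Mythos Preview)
Your proposal mirrors the paper's own argument: form the contracted walk by excising the palindromic segments, verify it is feasible and no longer than $W$, and argue it is simple to derive a contradiction. The paper handles simplicity with a terse case analysis on the position of a hypothetical repetition $(i',j')$; you give essentially the same analysis. Where you go further is in flagging the boundary case $i'=0$: the paper's assertion ``If $i'<i_1$, then this would contradict the choice of $i_1$'' does not actually cover $i'=0$, since the first iteration only considers indices $i>0$, so a single reappearance of $s$ at some index $b>0$ is invisible to the process.

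Your instinct here is correct, and in fact the claim as literally stated is false. Take the simple graph on $\{s,b,c,t\}$ with edges $sb,bc,cs,st$, labels $g(sb)=1$ and $g(bc)=g(cs)=g(st)=0$ in $\mathbb{Z}_2^1$, target value $1$, and $p=0$. The only $s$-$t$-path is $(s,t)$, with xor-sum $0\ne 1$, so the hypothesis of \Cref{lem:algebralgcorrfpfi} holds for $\ell=4$; yet on the walk $W=(s,b,c,s,t)\in\wf_4$ every vertex at an index $>0$ occurs at indices $>0$ exactly once, and the process outputs \textsf{FAIL}. Your resolution---``interpreting the first iteration as allowing $i=0$ (equivalently, initialising $j=-1$)''---is therefore not a proof of the claim about the process as written but a correction to its definition. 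It is the right correction: with $j=-1$ the contracted-walk argument goes through verbatim, and the involution property $\phi(\phi(W))=W$ is preserved because reversing $W\overleftarrow{[0,j]}$ neither alters anything before index $0$ nor moves the last occurrence of $s$. So present it as a fix to the process, not as a proof of the stated claim.
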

\begin{proof}[Proof of claim]
Suppose that the process outputted \textsf{FAIL}, and let $i_1 < j_1 < i_2 < j_2 < \ldots < i_t < j_t$ be the sequence of pairs $i,j$ considered during the process.
We define the \emph{contracted walk} $W'$ to be the subsequence of $W = (v_0, \ldots, v_{\ell})$ obtained by removing the vertices on the indices in $[i_1+1, j_1] \cup [i_2+1, j_2] \cup \ldots \cup [i_t+1, j_t]$ from $W$.
In particular, $W'$ is obtained from $W$ by contracting each palindrome $v_{i_k}, \ldots, v_{j_k}$ considered in the process into a single vertex $v_{i_k}$.

Now, we claim that $W'$ is a feasible $s$-$t$-walk of length at most $\ell$, and moreover that no vertex occurs more than once in $W'$.
This is a contradiction, because in that case $W'$ would be in fact an $s$-$t$-path of length at most $\ell$ that satisfies conditions~\cref{xorp3}~and~\cref{xorp4}, but assumed that no such $s$-$t$-path exists.
We observe that the contracted walk $W'$ is indeed an $s$-$t$-walk, because it was obtained from an $s$-$t$-walk by contracting subwalks that each start and end in a same vertex.
It also clearly has length at most $\ell$, and it satisfies the condition~\cref{xorp4} because the multiset of vertices in $W'$ is a subset of the multiset of vertices in $W$.
For condition~\cref{xorp3}, we observe that if a subwalk $v_i, \ldots, v_j$ is a palindrome, then $\sum_{k=i+1}^{j} g(v_{k-1} v_{k}) = 0$, because each pair of adjacent vertices occurs an even number of times and we are working in the group~$\mathbb{Z}_2^d$.
Thus, contracting the palindromes does not change the sum of the edge labels on $W$, and thus $W'$ satisfies condition~\cref{xorp4}.

Lastly, we argue that no vertex occurs more than once in $W'$.
For the sake of contradiction, suppose that some vertex occurs more than once in $W'$, which in particular implies that there are indices $i'$,$j'$ with~$0 \le i' < j' \le \ell$ and~$v_{i'} = v_{j'}$ that are not in~$[i_1+1, j_1] \cup [i_2+1, j_2] \cup \ldots \cup [i_t+1, j_t]$.
If~$i' < i_1$, then this would contradict the choice of $i_1$, and if $i' = i_1$, then this would contradict the choice of~$j_1$.
Similarly, if $j_k < i' \le i_{k+1}$ for some~$1 \le k < t$, then this would contradict either the choice of $i_{k+1}$ or $j_{k+1}$, and if $i' > j_t$, then this would contradict the fact that $i_t,j_t$ was the last pair considered by the algorithm.
\end{proof}

Now, the function $\phi : \wf_\ell \rightarrow \wf_\ell$ is defined as $\phi(W) = W\overleftarrow{[i,j]}$, where $i,j$ is the pair outputted by the process described above.
We have already proved that $\phi(W) \neq W$ and $f(\phi(W)) = f(W)$, so it remains to prove that $\phi(\phi(W)) = W$.
For this, it remains to observe that the operation $W\overleftarrow{[i,j]}$ does not change how the process for selecting $i$,$j$ behaves, because it does not change the walk before the index $i$ and it does not change the fact that the last occurrence of $v_i$ is at the index $j$.
\end{proof}

This finishes the proof of \Cref{thm:group-labeled-path}.


\newcommand{\ff}[1]{\ensuremath{f}}
\newcommand{\fff}[2]{\ensuremath{f_{{#2}}}}
\section{Two-Sets Cut-Uncut parameterized by face cover}
\label{sec:cut-uncut-param}

In this section, we show that \probCutUncut is \classFPT when parameterized by the minimum number of faces in a plane embedding of the input graph covering the terminals. We use the following crucial observations.
First, we observe that a minimum face cover can be found in \classFPT time when parameterized by the size of a cover. For this, we use the known results about the \textsc{Red-Blue Dominating Set} problem on planar graphs. The task of this problem is, given a bipartite graph $G$ whose vertices are partitioned into two sets $R$ and $B$ (\emph{red} and \emph{blue} vertices, respectively) and an integer~$r\geq 1$, to decide whether there is a set $D$ of at most~$r$ red vertices that dominates the blue vertices, that is, each $v\in B$ is adjacent to at least one vertex of $D$. It was proved by Alber et al.~\cite{AlberBFKN02} that this problem can be solved in~$2^{\Oh(\sqrt{r})}\cdot n$ time on planar graphs. 

\begin{lemma}\label{lem:cover}
It can be decided in $2^{\Oh(\sqrt{r})}\cdot n^{\Oh(1)}$ time whether a set of vertices $U$ of a plane graph $G$ has a face cover of size at most $r$. Furthermore, if such a cover exists, it can be found in the same time.
\end{lemma}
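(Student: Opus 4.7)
The plan is to reduce the problem to \textsc{Red-Blue Dominating Set} on a planar graph, and then invoke the result of Alber et al.~\cite{AlberBFKN02}. Concretely, I will build an auxiliary bipartite graph $H$ whose red side consists of the faces of $G$ and whose blue side is $U$, with an edge between a face $f$ and a vertex $v \in U$ whenever $v$ lies on the frontier of $f$. By construction, a subset of red vertices of size at most $r$ dominates all blue vertices in $H$ if and only if the corresponding set of faces of $G$ covers $U$, so the two problems have identical answers.

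The one nontrivial point is that $H$ is planar, which is what lets us apply Alber et al. I would prove this by exhibiting an explicit plane embedding obtained from the given embedding of $G$: for every face $f$ of $G$, place the red vertex $r_f$ at an arbitrary interior point of $f$, and for every $v \in U$ incident to $f$, draw the edge $r_f v$ as a curve that stays entirely inside the closed region $\overline{f}$ except at its endpoint $v$. The edges incident to a single $r_f$ can be drawn as a non-crossing ``fan'' inside $f$, and edges associated with different faces live in disjoint open regions, so the resulting drawing has no crossings once the original edges of $G$ and the vertices in $V(G) \setminus U$ are erased. Thus $H$ is a planar bipartite graph on at most $|F(G)| + |U| = O(n)$ vertices.

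Given this reduction, the algorithm is: construct $H$ in linear time from the plane embedding of $G$ (for example by walking each facial cycle), invoke the $2^{\Oh(\sqrt r)}\cdot n$-time algorithm of Alber et al.\ for \textsc{Red-Blue Dominating Set} on planar graphs to decide whether $H$ admits a red-blue dominating set of size at most $r$, and if so translate the returned set of red vertices back to the corresponding set of faces of $G$. This yields both the decision and the witness within the claimed running time $2^{\Oh(\sqrt r)}\cdot n^{\Oh(1)}$.

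The step I expect to require the most care is the verification of planarity of $H$, since one needs the given plane embedding of $G$ to produce the embedding of $H$ and it must be noted that vertices of $G$ outside $U$ and the edges of $G$ itself are discarded when exhibiting the embedding; apart from that, the construction and the appeal to~\cite{AlberBFKN02} are essentially mechanical.
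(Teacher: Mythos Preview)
Your proposal is correct and follows essentially the same approach as the paper: reduce to \textsc{Red-Blue Dominating Set} on a planar bipartite graph with red vertices the faces of $G$ and blue vertices the set $U$, then apply Alber et al.~\cite{AlberBFKN02}. In fact, you give more detail than the paper, which simply asserts the planarity of the auxiliary graph without spelling out the embedding argument.
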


\begin{proof}
Let $G$ be a plane graph with faces~$F$ and let $U\subseteq V(G)$. We construct the  instance of \textsc{Red-Blue Dominating Set} by setting $R=F$, $B=U$, and making $f\in R$ adjacent to $v\in U$ if $v$ is in the frontier of the face~$f$. Then, $U$ can be covered by at most~$r$ faces if and only if the set of blue vertices can be dominated by at most $r$ red vertices. Note that the constructed red-blue graph is planar.
Then the results of Alber et al.~\cite{AlberBFKN02} imply the claim of the lemma.
\end{proof}

Second, we note the following connection.

\begin{observation}\label{obs:mincut}
Let $(G,S,T,k)$ be an instance of \probCutUncut  where $G$ is a connected graph. Then $(G,S,T,k)$ is a yes-instance if and only if there is a minimal cut $(A,B)$ of $G$ with $\left| \cut(A) \right| \leq k$ such that $S\subseteq A$ and $T\subseteq B$.
\end{observation}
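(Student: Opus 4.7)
The $(\Leftarrow)$ direction is immediate: any minimal cut $(A,B)$ has both $G[A]$ and $G[B]$ connected, so $S$ lies in a single component of $G[A]$ and $T$ in a single component of $G[B]$, which are exactly the connectivity conditions demanded by \probCutUncut.

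For $(\Rightarrow)$, the plan is to begin with an arbitrary solution $(A,B)$ to $(G,S,T,k)$ and iteratively reshape it into a minimal cut without ever increasing the cut-set. The transformation consists of repeating the following operation while some component can be moved: if $G[A]$ has a connected component $A_1$ disjoint from $S$, reassign $A_1$ to $B$; symmetrically, if $G[B]$ has a connected component $B_1$ disjoint from $T$, reassign $B_1$ to $A$. I would then verify three invariants: (i) $S\subseteq A$ and $T\subseteq B$ throughout, since the components containing $S$ or $T$ are never touched; (ii) $|\cut_G(A)|$ is non-increasing at each step, because a component $A_1$ of $G[A]$ has no edges to $A\setminus A_1$, so moving $A_1$ across merely deletes the edges from $A_1$ to $B$ from the cut-set and introduces no new crossings; and (iii) the quantity $|\mathrm{comp}(G[A])|+|\mathrm{comp}(G[B])|$ strictly decreases with every move.

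The only step requiring non-routine work is invariant (iii) in the case of moving $B_1$ from $B$ to $A$, and this is precisely where the hypothesis that $G$ is connected enters. Since $B_1$ is a component of $G[B]$ it has no edges to $B\setminus B_1$, and since $A$ is nonempty (it contains $S$), connectivity of $G$ forces $B_1$ to have at least one edge into $A$. Therefore when $B_1$ is absorbed into the $A$-side it merges with some $c\ge 1$ components of $G[A]$ into a single new component, so $|\mathrm{comp}(G[A])|$ changes by $1-c\le 0$ while $|\mathrm{comp}(G[B])|$ drops by exactly $1$, giving a net decrease of $c\ge 1$. The case of moving $A_1$ from $A$ to $B$ is handled symmetrically.

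Since the total component count is a positive integer bounded below by $2$ and strictly decreases, the procedure terminates. Combining invariant (i) with the easy observation that both moves keep $S$ (respectively $T$) inside a single component of $G[A]$ (respectively $G[B]$), termination forces $G[A]$ to consist of just the $S$-containing component and $G[B]$ of just the $T$-containing component; hence both induced subgraphs are connected and $(A,B)$ is a minimal cut satisfying $S\subseteq A$, $T\subseteq B$, and $|\cut_G(A)|\le k$, completing the forward implication.
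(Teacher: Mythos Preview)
Your proof is correct and rests on the same core observation as the paper's: a connected component of $G[A]$ (or $G[B]$) that avoids $S$ (respectively $T$) can be transferred to the other side without increasing the cut-set, and connectivity of $G$ guarantees this transfer actually strictly helps. The paper packages this idea slightly differently: instead of iterating and tracking a potential, it picks a solution $(A,B)$ of \emph{minimum} cut size among all feasible cuts and argues by contradiction that a single transfer would already produce a strictly smaller feasible cut. This extremal framing is a bit shorter and avoids the bookkeeping of invariants (i)--(iii), while your iterative version has the minor advantage of being constructive (it transforms any given solution into a minimal one rather than merely asserting existence).
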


\begin{proof}
If $G$ has a minimal cut $(A,B)$ with $\left| \cut(A) \right| \leq k$ such that $S\subseteq A$ and $T\subseteq B$, then~$(G,S,T,k)$ is a yes-instance because $G[A]$ and $G[B]$ are connected. For the opposite direction, let 
 $(A,B)$ be a cut of $G$ of minimum size  such that the vertices of $S$ are in the same connected component of $G[A]$ and the vertices of $T$ are in the same connected component of $G[B]$. We claim that $(A,B)$ is minimal.
 Assume towards a contradiction that~$(A,B)$ is not minimal, that is, $G[A]$ or $G[B]$ is disconnected.
 We assume without loss of generality that $G[B]$ is disconnected. Then, there is a subset~$R\subseteq B$ such that $G[R]$ is a connected component of $G[B]$ and $R\cap T=\emptyset$. Consider the cut $(A',B')$ where $A'=A\cup R$ and $B'=B\setminus R$. We have that the vertices of $S$ are in the same connected component of $G[A']$ and the vertices of $T$ are in the same connected component of $G[B']$. However, because $G$ is connected, $G$ has an edge with one endpoint in $A$ and the other in $R$.
Thus, $\left| \cut(A') \right| < \left| \cut(A) \right|$ contradicting the choice of $(A,B)$. This proves the observation.
\end{proof}

This observation implies that to solve \probCutUncut  on a plane graph $G$ , we have to find a shortest cycle $C^*$ in the dual graph $G^*$ such that the vertices of $S$ and $T$ are in distinct faces of $C^*$.
First, we observe that we can assume without loss of generality that the input graph is 2-connected. This assumption simplifies arguments because the frontier of each face of a plane 2-connected graph is a cycle~\cite{Diestel12}. 

\begin{lemma}\label{lem:two-conn}
There is a polynomial-time algorithm that, given an instance $(G,S,T,k)$ of \probCutUncut, either solves the problem or outputs an equivalent instance $(G',S',T',k)$ of \probCutUncut where $G'$ is 
a $2$-connected induced subgraph of~$G$. Furthermore, given a planar embedding of~$G$ such that $S\cup T$ can be covered by at most $r$ faces,  $S'\cup T'$ can be covered in the induced embedding of~$G'$ by at most $r$ faces.  
\end{lemma}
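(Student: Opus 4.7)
The plan is to iteratively simplify $G$ through its block-cut tree until it becomes $2$-connected. First I would assume without loss of generality that $G$ is connected: disconnected components without terminals may be discarded, and a disconnected terminal distribution either renders the instance infeasible (if $S$ or $T$ is split across components) or yields a trivial cut of size~$0$ (if $S$ and $T$ lie in different components).

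Next, while $G$ has a cut vertex, I would apply reductions at a leaf block $\mathcal{B}$ of the block-cut tree with unique cut vertex~$v$, driven by the following structural observation: in any valid solution $(A,B)$, if $v \in A$ then all of $T$ must lie in a single component of $G - v$ (since $T$ must be connected in $G[B]$ and no path in $G[B]$ can cross~$v$), and symmetrically for $v \in B$. The reductions are: (i)~if $V(\mathcal{B}) \setminus \{v\}$ contains no terminals, delete it; (ii)~if it contains only $S$-terminals and there is an $S$-vertex outside $V(\mathcal{B})$, the observation forces $v \in A$, so delete $V(\mathcal{B}) \setminus \{v\}$ and add~$v$ to~$S'$; (iii)~the symmetric rule for~$T$; (iv)~if all of~$S$ (respectively~$T$) lies inside $V(\mathcal{B})$, solve directly by comparing the answer to \probCutUncut on $G - (V(\mathcal{B}) \setminus \{v\})$ with terminal sets $\{v\}$ and~$T$ against the answer on~$\mathcal{B}$ with terminal sets $S$ and~$\{v\}$, both solvable in polynomial time by the algorithm of Bezáková and Langley for the singleton-terminal case. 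Each reduction strictly decreases $|V(G)|$, so the procedure terminates in polynomial time. After exhaustion every remaining leaf block contains terminals of both sets, but two such leaves are incompatible with the structural observation (all of~$T$ cannot lie in two different components of~$G - v$), so the block-cut tree has at most one block and $G$ is $2$-connected.

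For the face-cover bound, each deletion has a clean geometric interpretation: in the plane embedding the edges at $v$ incident to $\mathcal{B}$ form a consecutive arc in the rotation around~$v$, and all faces of $G$ whose frontiers lie inside $V(\mathcal{B})$ merge, upon deletion of $V(\mathcal{B}) \setminus \{v\}$, into a single face of $G'$ whose frontier contains~$v$. Hence, given an $r$-face cover $\mathcal{F}$ of $S \cup T$ in $G$, assigning to each $f \in \mathcal{F}$ the unique face of $G'$ whose region contains~$f$ yields a face cover of $S' \cup T'$ of size at most~$r$: terminals remaining in $G'$ stay on the frontier of the image of their covering face, and the newly added terminal~$v$ (if any) lies on the frontier of the merged face, which is the image of any face covering a removed terminal inside~$V(\mathcal{B})$.

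The main obstacle is justifying rule~(ii) cleanly --- showing that forcing $v \in S'$ eliminates no optimal solution relies on the structural observation ruling out $v \in B$ whenever $S$-terminals appear on both sides of~$v$ --- together with correctly dispatching rule~(iv), where both sides of~$v$ are a priori feasible and one must invoke the singleton-terminal algorithm twice. In both steps care is required to preserve the face-cover invariant when modifying the terminal sets, since the representative~$v$ added to $S'$ or $T'$ must be certified to lie on the frontier of a face in the constructed cover.
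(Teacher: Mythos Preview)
Your approach is essentially the same as the paper's: both reduce via cut vertices, rely on the observation that a minimal solution respects the block structure, dispatch the one-sided case to Bez\'akov\'a--Langley, and argue that any face covering a deleted terminal becomes (in the induced embedding) a face whose frontier contains the cut vertex~$v$. The paper works with an arbitrary separation $(X,Y)$ at a cut vertex rather than with leaf blocks of the block--cut tree, and it detects the infeasible configuration (terminals of both types on both sides) as an explicit case rather than after exhaustion, but these are presentational differences.

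There is one imprecision worth tightening. Your trigger for rule~(iv), ``all of $S$ lies inside $V(\mathcal{B})$'', does not by itself guarantee that $T \subseteq V(G) \setminus (V(\mathcal{B}) \setminus \{v\})$: a leaf block can contain all of~$S$ together with some (but not all) vertices of~$T$, and then the sub-instance on $G-(V(\mathcal{B})\setminus\{v\})$ with terminal sets $\{v\}$ and~$T$ that you hand to Bez\'akov\'a--Langley is ill-posed. The paper's analogue (its Case~4) explicitly requires each side of the separation to contain terminals of only one type. In your framework the easy fix is to require additionally that $V(\mathcal{B})\setminus\{v\}$ contains no $T$-terminals; the residual configuration (both types inside~$\mathcal{B}$, all of~$S$ inside, some of~$T$ outside) is then not touched at~$\mathcal{B}$, but any \emph{other} leaf block---which exists and contains no $S$-terminal in its interior---falls under your rules~(i) or~(iii), so progress is still made and your termination argument goes through.
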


\begin{proof} 
Suppose that $G$ is disconnected. Then, $(G,S,T,k)$ is a trivial no-instance if either  vertices of $S$ occur in distinct connected components of $G$ or, symmetrically, vertices of~$T$ occur in distinct connected components. 
Also, we have that $(G,S,T,k)$ is a trivial yes-instance if the vertices $S$ and the vertices of~$T$ are in distinct connected components of $G$. It remains the case when the vertices of~$S$ and~$T$ are in the same connected component of $G$.
Then, we can remove all other connected components. Hence, we may assume that~$G$ is connected.

The claim of the lemma is trivial if $G$ is 2-connected. Assume that $G$ has a cut-vertex $v$. Then, there is a separation $(X,Y)$ of $G$ with the separator $v$, that is, there 
are subsets~$X,Y \subseteq V(G)$ such that $X\cup Y=V(G)$, $X\cap Y=\{v\}$, and~$G$ has no edge~$xy$ with $x\in X\setminus Y$ and $y\in Y\setminus X$. Observe that if $(A,B)$ is a minimal cut of $G$ with $v\in A$, then either $X\subseteq A$ or $Y\subseteq B$ because, otherwise, $G[B]$ is disconnected. Thus, for any minimal cut $(A,B)$ of $G$, $\cut(A)=\cut(A')$ where~$(A',B')$ is a minimal cut of either $G[X]$ or $G[Y]$.
This observation leads to the following four cases (up to the symmetry). 
First, we consider the case when all the terminals are in one part of the separation.

\subparagraph{Case~1.} $(X\setminus Y)\cap (S\cup T)=\emptyset$. Then, for any minimal cut $(A,B)$ of $G$ with $v\in A$ such that~$S$ and~$T$ are in distinct parts, $X\subseteq A$ and~$(G,S,T,k)$ is equivalent to $(G[Y],S,T,k)$, that is, we can reduce the input instance by deleting the vertices of $X\setminus Y$.

\medskip
Next, we assume that each part of the separation contains terminals from both $S$ and $T$.

\subparagraph{Case~2.} $X\cap S\neq\emptyset$, $X\cap T\neq\emptyset$, $Y\cap S\neq\emptyset$, and $Y\cap T\neq\emptyset$. Because for any minimal cut $(A,B)$ of $G$ with $v\in A$, either $X\subseteq A$ or $Y\subseteq B$, we have that there is no minimal cut separating $S$ and $T$. 
Thus, $(G,S,T)$ is a no-instance. We report that there is no solution and stop.

\medskip
Now, we assume that one set of the separation contains terminals from both $S$ and $T$ and the other includes terminals only from one of the sets $S$ and $T$.  

\subparagraph{Case~3.}  $(X\setminus Y)\cap S\neq\emptyset$, $X\cap T=\emptyset$, $Y\cap S\neq\emptyset$, and $Y\cap T\neq\emptyset$. Then, for any minimal cut $(A,B)$ of $G$ with $v\in A$ such that $S$ and $T$ are in distinct parts,  $X\subseteq A$. Furthermore, $S\subseteq A$ and $T\subseteq B$.  This implies that $(G,S,T,k)$ is equivalent to $(G[Y],S',T,k)$ where $S'=(S\setminus X)\cup \{v\}$. This allows us to reduce the input instance by deleting the vertices of $X\setminus Y$ and modifying $S$.  

\medskip Finally, we consider the case when each set of the separation contains terminals only from one of the sets $S$ and $T$.

\subparagraph{Case~4.}  $(X\setminus Y)\cap S\neq\emptyset$, $X\cap T=\emptyset$, $Y\cap S=\emptyset$, and $(Y\setminus X)\cap T\neq\emptyset$. Consider a minimal cut~$(A,B)$ of $G$ with $v\in A$ such that $S$ and $T$ are in different parts. If $X\subseteq A$, then $\cut(A)=\cut(A')$ for a minimal cut $(A',B')$ of $G[B]$ such that $\{v\}\in A$ and $T\subseteq B$. The case $Y\subseteq A$ is symmetric. This allows us to solve the problem in polynomial time by making use of the result of Bez\'akov\'a and Langley~\cite{bezakova2014minimumplanar}.
By this result, the problem when $|S|=1$ is solvable in polynomial time. 
 We apply the algorithm of  Bez\'akov\'a and Langley for $(G[Y],\{v\},T,k)$ and $(G[X],S,\{v\},k)$ and conclude that~$(G,S,T,k)$ is a yes-instance if at least one of these instances is a yes-instance. 

\medskip
Because the cut-vertices of~$G$ can be listed in linear time by the classical algorithm by Tarjan~\cite{DBLP:journals/siamcomp/Tarjan72}, we conclude that in a polynomial time, we either solve the problem or reduce the input instance to an equivalent instance $(G',S',T',k)$ where $G'$ has no cut-vertices. If the obtained graph~$G'$ has two vertices, the problem is trivial because $|S'|=|T'|=1$ and the minimum cut is unique. Otherwise, $G'$
is a $2$-connected induced subgraph of $G$. 

\medskip
To show the second claim of the lemma, assume that $G$ is embedded on the plane and $S\cup T$ is covered by a set of faces $F'$. Consider the induced embedding of $G'$.  Because $G'$ is an induced subgraph of $G$, for each face $f$ of $G$, there is a face $f'$ of $G'$ such that each vertex covered by~$f$ in~$G$ is covered by~$f'$ in~$G'$. We construct the set of faces $F''$ of $G'$ from $F'$ by including each such face~$f'$ for each face~$f\in F'$. We claim that $F''$ covers $S'\cup T'$. By the definition of~$F'$, we have that every~$x\in (S\cup T)\cap (S'\cup T')$ is covered by $F''$. Hence, we have to prove that every~$x\in (S'\cup T')\setminus (S\cup T)$ is covered as well. Note that new terminals are introduced only in Case~3.  Assume that $(X\setminus Y)\cap S\neq\emptyset$, $X\cap T=\emptyset$, $Y\cap S\neq\emptyset$, and~$Y\cap T\neq\emptyset$. Then, there is a face~$f\in F'$ that covers a vertex of $(X\setminus Y)\cap S$. Since~$v$ is a cut-vertex, we have that the face~$f'$ covers~$v$. This implies that $F''$ covers $S'\cup T$ where $S'=(S\setminus X)\cup \{v\}$. This concludes the proof.
\end{proof}

From now on, we assume that the graph of the considered instances of  \probCutUncut is $2$-connected. We remind that the frontier of each face of a plane  2-connected graph $G$ is a cycle. Moreover, 
the dual graph $G^*$ has no loops.  Also, since loops are irrelevant for \probCutUncut, we assume that the input graph has no loops. 

%

We use the following separation properties for vertices on the frontier of the same face of a graph.

\begin{lemma}\label{lem:sepsets}
Let $G$ be a plane graph  and let $X$ and $Y$ be disjoint nonempty sets of vertices of the cycle $C$ which forms the frontier of a face $f$ of $G$. Let~$C^*$ be any cycle in~$G^*$. Then the vertices of $X$ and the vertices of $Y$ are in distinct faces of $C^*$ if and only if $f\in V(C^*)$ and $C$ crosses $C^*$ in two edges~$e_1$ and~$e_2$ such that (i) the vertices of $X$ are in the same connected component of $C-\{e_1,e_2\}$, (ii)  the vertices of~$Y$ are in the same connected component of $C-\{e_1,e_2\}$, and (iii) the vertices of $X$ and the vertices of $Y$ are in distinct connected components of $C-\{e_1,e_2\}$.
\end{lemma}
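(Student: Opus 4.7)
The plan is to leverage \Cref{obs:cycle-sep} together with the basic planar-duality identity that the set of dual edges $\{e^* : e \in E(C)\}$ is precisely the set of edges of $G^*$ incident to the vertex $f$. From this, the edges of $C$ that cross $C^*$ are exactly those whose dual is both in $C^*$ and incident to $f$. If $f \notin V(C^*)$ there are zero such edges, and if $f \in V(C^*)$ there are exactly two, since $C^*$ is a simple cycle and hence $f$ has degree exactly two in $C^*$. Call these two edges $e_1, e_2 \in E(C)$ whenever they exist.

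For the forward direction, I would pick any $x \in X$ and $y \in Y$ and consider an $x$-$y$-subpath $P$ of $C$ (which exists since $X, Y \subseteq V(C)$). By \Cref{obs:cycle-sep}, $P$ must cross $C^*$ an odd number of times, which rules out $f \notin V(C^*)$ and forces $f \in V(C^*)$ with the two crossing edges $e_1, e_2$. Removing $e_1, e_2$ splits $C$ into two internally disjoint paths. The key combinatorial observation is: for any $u, v \in V(C)$, the two $u$-$v$-subpaths of $C$ together use each of $e_1, e_2$ exactly once, so if $u, v$ lie in the same component of $C - \{e_1, e_2\}$ then one subpath has $0$ crosses and the other has $2$, while if they lie in different components then both subpaths have exactly $1$ cross. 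Combined with \Cref{obs:cycle-sep}, this means two vertices of $V(C)$ lie in distinct faces of $C^*$ if and only if they lie in different components of $C - \{e_1, e_2\}$. Applying this to pairs in $X \times Y$ (distinct faces by hypothesis) and within $X$ and within $Y$ (same face by hypothesis) yields conditions (i), (ii), and (iii).

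For the backward direction, I would assume $f \in V(C^*)$, that $e_1, e_2 \in E(C)$ are the two edges crossing $C^*$, and that (i)--(iii) hold. For any $x \in X$ and $y \in Y$, condition (iii) places them in different components of $C - \{e_1, e_2\}$, so each $x$-$y$-subpath of $C$ crosses $C^*$ exactly once, and \Cref{obs:cycle-sep} puts $x$ and $y$ in distinct faces of $C^*$. The analogous parity check on intra-set pairs, using conditions (i) and (ii), keeps all of $X$ in a single face and all of $Y$ in the other.

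I expect the only real subtlety to be correctly pinning down the count of crosses. The argument hinges on the identification $\{e^* : e \in E(C)\} = \delta_{G^*}(f)$ combined with the fact that $C^*$ is a \emph{simple} cycle in $G^*$, which together give exactly two (rather than merely at least two) crossing edges when $f \in V(C^*)$. Once this is clear, both directions reduce to the same single parity analysis along $C$ via \Cref{obs:cycle-sep}.
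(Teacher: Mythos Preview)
Your proposal is correct and follows essentially the same approach as the paper: both arguments identify the crossings of $C$ with $C^*$ as exactly the (zero or two) edges of $C$ whose duals are incident to $f$ in $C^*$, and then use \Cref{obs:cycle-sep} to translate the side-of-$C^*$ question into a parity count along subpaths of $C$. Your version is slightly more systematic in that you explicitly establish the biconditional ``$u,v\in V(C)$ lie in distinct faces of $C^*$ iff they lie in different components of $C-\{e_1,e_2\}$'' and then apply it uniformly, whereas the paper handles the forward direction a bit more informally; but the underlying idea is the same.
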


\begin{proof}
Suppose that  the vertices of $X$ and the vertices of $Y$ are in distinct faces of $C^*$. Then, $f$ is a vertex of $C^*$. Let $e_1^*$ and $e_2^*$ be the edges of  $C^*$ incident to $f$ and let~$e_1$ and~$e_2$ be the dual edges of $e_1^*$ and $e_2^*$, respectively. Note that~$C$ contains both~$e_1$ and~$e_2$ and~$C$ crosses $C^*$ only in these two edges. We have that $C-\{e_1,e_2\}$ has two connected components $P_1$ and $P_2$ that are paths. Since~$C^*$ separates $X$ and $Y$, we have that~$X$ is fully contained in~$P_1$ or fully contained in~$P_2$ and~$Y$ is fully contained in the respective other path. Thus, conditions~(i)--(iii) are fulfilled. 

For the opposite direction, assume that  $C$ crosses $C^*$ in two edges $e_1$ and $e_2$ such that conditions~(i)--(iii) are fulfilled. Consider an $x$-$y$-path $P$ in $C$ for arbitrary $x\in X$ and $y\in Y$ containing $e_1$ and excluding $e_2$ that exists by (i)--(iii). The number of crosses of $P$ and~$C^*$ is one. Hence, $x$ and $y$ are in distinct faces of~$C^*$ by \Cref{obs:cycle-sep}. This concludes the proof. 
\end{proof}

\begin{lemma}\label{lem:nonsepset}
Let $G$ be a plane graph  and let $X$ be a non-empty set of vertices of the cycle $C$ forming the frontier of a face $f$ of $G$. Let~$C^*$ be any cycle in~$G^*$. Then, the vertices of~$X$ are in the same face of $C^*$ if and only if either $f\notin V(C^*)$ and $C$ does not cross $C^*$ or $f\in V(C^*)$ and $C$ crosses $C^*$ in two edges $e_1$ and $e_2$ such that the vertices of $X$ are in the same connected component of $C-\{e_1,e_2\}$.
\end{lemma}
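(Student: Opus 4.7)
The plan is to mirror the structure of the proof of \Cref{lem:sepsets}, handling the two directions of the biconditional separately and making two observations about how the frontier cycle $C$ of the face $f$ can interact with any cycle $C^*$ in $G^*$. The key underlying fact is that every edge on the frontier of $f$ has $f$ as one of the two endpoints of its dual edge, so the only edges of $C^*$ that can be duals of edges of $C$ are precisely the edges of $C^*$ incident to the vertex $f$.

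For the forward direction, assume the vertices of $X$ lie in the same face of $C^*$. I would split into two cases. If $f\notin V(C^*)$, then no edge of $C^*$ is incident to $f$, so by the observation above no edge of $C$ is dual to an edge of $C^*$, hence $C$ does not cross $C^*$. If $f\in V(C^*)$, then exactly two edges $e_1^*,e_2^*$ of $C^*$ are incident to $f$, and their duals $e_1,e_2$ are the only edges in which $C$ can cross $C^*$; removing $e_1$ and $e_2$ splits $C$ into two paths $P_1$ and $P_2$. To see that $X$ is contained in a single $P_i$, suppose for contradiction that $x\in X\cap V(P_1)$ and $y\in X\cap V(P_2)$: then the $x$-$y$-subpath of $C$ containing exactly one of $e_1,e_2$ crosses $C^*$ exactly once, so by \Cref{obs:cycle-sep} the vertices $x$ and $y$ would lie in distinct faces of $C^*$, contradicting the assumption.

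For the backward direction, suppose one of the two structural conditions holds and let $x,y\in X$ be arbitrary. In the first case ($f\notin V(C^*)$ and $C$ does not cross $C^*$), any $x$-$y$-path inside $C$ has zero crossings with $C^*$, so \Cref{obs:cycle-sep} places $x$ and $y$ in the same face of $C^*$. In the second case ($f\in V(C^*)$ and $X$ is contained in one component of $C-\{e_1,e_2\}$), the $x$-$y$-subpath of $C$ that stays inside that component avoids both $e_1$ and $e_2$, so it also has zero crossings with $C^*$, and \Cref{obs:cycle-sep} again yields that $x$ and $y$ are in the same face of $C^*$. Since this holds for every pair in $X$, all of $X$ lies in a single face of $C^*$.

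The main (minor) obstacle will be the argument in the forward direction that $X$ is confined to one side of $\{e_1,e_2\}$; this is where \Cref{obs:cycle-sep} and the careful choice of the $x$-$y$-subpath using exactly one of $e_1,e_2$ carry all the weight. Once this is in place, the rest reduces to routine bookkeeping on paths within the frontier cycle of $f$, entirely analogous to \Cref{lem:sepsets}.
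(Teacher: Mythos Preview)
Your proposal is correct and follows essentially the same approach as the paper's own proof: both directions are handled via \Cref{obs:cycle-sep}, using that the edges of $C^*$ incident to $f$ are exactly the duals of edges of $C$, and the key contradiction argument (choosing an $x$-$y$-subpath of $C$ through exactly one of $e_1,e_2$) is identical. The only cosmetic difference is that you explicitly split the forward direction on whether $f\in V(C^*)$, whereas the paper phrases it as ``assume $C$ crosses $C^*$''---these are equivalent, as you observe.
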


\begin{proof}
Suppose that the vertices of~$X$ are in the same face of  $C^*$ and assume that $C$ crosses $C^*$. 
Then, $f$ is a vertex of $C^*$ and $C^*$ has two edges~$e_1^*$ and~$e_2^*$ incident to $f$. 
We have that $C$ crosses $C^*$ in the edges~$e_1$ and~$e_2$ that are dual to~$e_1^*$ and~$e_2^*$, respectively. Since~$C$ is a cycle,  $C-\{e_1,e_2\}$ has two connected components $P_1$ and $P_2$ that are both paths. We show that either $X\subseteq V(P_1)$ or $X\subseteq V(P_2)$. For the sake of contradiction, assume that there are $x,y\in X$ such that $x\in V(P_1)$ and $y\in V(P_2)$. Then, there is an $x$-$y$-path~$P$ in~$C$ that contains $e_1$ but excludes $e_2$. The number of crosses of $P$ and $C^*$ is one and $x$ and $y$ are therefore in distinct faces of~$C^*$ by \Cref{obs:cycle-sep}; a contradiction. Hence, the vertices of $X$ are in the same connected component of $C-\{e_1,e_2\}$. 

For the opposite direction, assume that either $C$ does not cross $C^*$ or $C$ crosses $C^*$ in two edges $e_1$ and $e_2$ such that the vertices of $X$ are in the same connected component of $C-\{e_1,e_2\}$. In both cases,  for any two vertices $x,y\in X$, there is an $x$-$y$-path $P$ that does not cross $C^*$. Then by \Cref{obs:cycle-sep}, the vertices  of $X$ are in the same face of $C^*$. This concludes the proof.
\end{proof}

%
%

We are now in a position to present the main result of this section.

\maintheoremfacecover*

\begin{proof}
We show the claim for the parameterization by the size of a face cover of the terminals and then explain how a simplified version of the algorithm can be used for the parameterization by the number of terminals.

Let $(G,S,T,k)$ be an instance of \probCutUncut where we are given an embedding of $G$ on the plane. We remind that $G$ is assumed to be $2$-connected. We use the embedding of $G$ to construct the dual graph $G^*$ together with its embedding. By \Cref{obs:mincut}, our task is to find a cycle $C^*$ in $G^*$ of length at most $k$ such that $S$ and $T$ are in distinct faces of $C^*$. We find such a cycle using the algorithm for \probSC from \Cref{cor:group-labeled-cycle}.

We use \Cref{lem:cover} to verify whether there is a set of faces $F'$ of size at most $r$ that cover $S\cup T$. If such a cover does not exist, we stop.  From now on, we assume that $F'$ is given.
We partition $F'$ into two sets where $F_1\subseteq F'$ is the set of faces having vertices from both $S$ and $T$ on their frontiers and $F_2\subseteq F'$ consists of the faces $f\in F'$ such that the frontier of $f$ contains ether only vertices of $S$ or only vertices of~$T$.   We modify $G^*$ by analyzing each face $f\in F'$. The ultimate aim of the modification is to reduce the number of considered terminals.  

\begin{figure}[t]
\centering
\scalebox{0.7}{
\input{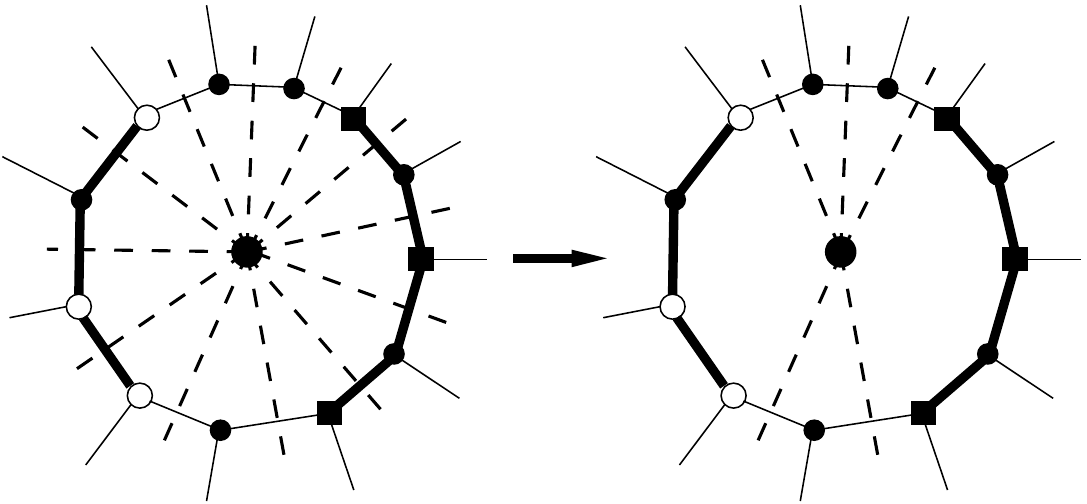_t}}
\caption{The modification for $f\in F_1$. The vertices of $S'$ are shown by white circles, the vertices of $T'$ are shown by black squares, and the other vertices of $G$ are shown by black circles. The edges of $G^*$ are shown by dashed lines. The paths $P_1$ and $P_2$ are shown by thick lines.}
\label{fig:F1}
\end{figure}


\subparagraph{Modifications for $F_1$.} Let $f\in F_1$ and let $C$ be the cycle of $G$ forming the frontier of~$f$. Recall that $S'=S\cap V(C)\neq \emptyset$ and $T'=T\cap V(C)\neq\emptyset$. If there are no two edges $e_1,e_2\in E(C)$ such that the vertices of $S'$ and $T'$ are in distinct connected components of $C-\{e_1,e_2\}$, then by \Cref{lem:sepsets}, there is no cycle $C^*$ such that the vertices of $S'$ and the vertices of $T'$ are in distinct faces of $C^*$. This implies  that~$(G,S,T,k)$ is a no-instance. Hence, we assume that this is not the case and select two inclusion-minimal disjoint paths $P_1$ and $P_2$ in $C$ such that $S'\subseteq V(P_1)$ and $T'\subseteq V(P_2)$. We modify $G^*$ by deleting each edge $e^*$ incident to $f$ that is dual to an edge $e\in E(P_1)\cup E(P_2)$ (see \Cref{fig:F1}).

\begin{figure}[t]
\centering
\scalebox{0.7}{
\input{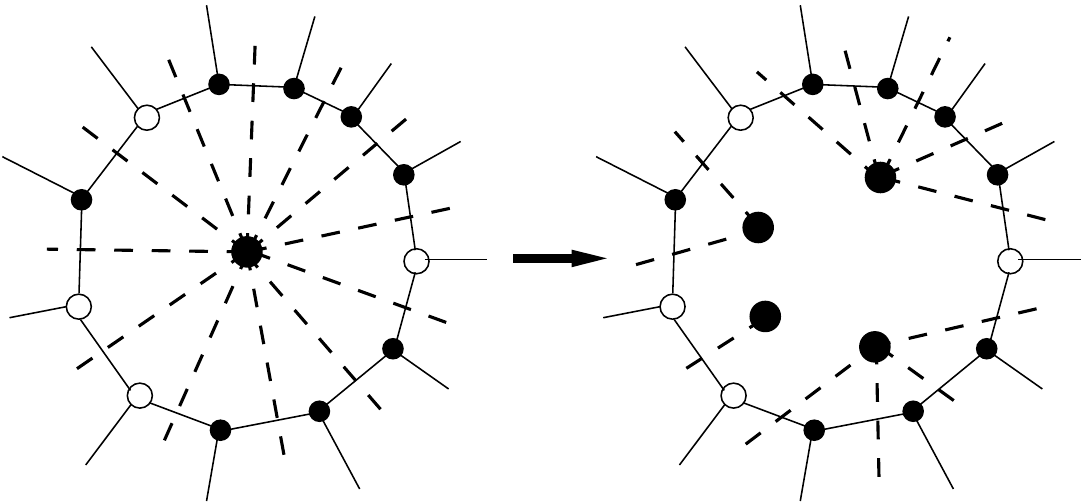_t}}
\caption{The modification for $\ff{i} \in F_2$. The vertices of $R$ are shown by white circles and the other vertices of $G$ are shown by small black circles. The vertex $\ff{i}$ of $G^*$ and the vertices $\fff{i}{1},\fff{i}{2}\dots,\fff{i}{4}$ are shown by large black circles and the edges of $G^*$ and the constructed new edges are shown by dashed lines. }
\label{fig:F2}
\end{figure}

\subparagraph{Modifications for $F_2$.} Let $\ff{i} \in F_2$, let~$C$ be the cycle of $G$ forming the frontier of~$\ff{i}$, and let $L=V(C) \cap (S\cup T)$. Note that by definition of~$F_2$, either $L\subseteq S$ or $L\subseteq T$. We split the vertex~$\ff{i}$ of $G^*$ into $q=|L|$ vertices~$\fff{i}{1},\fff{i}{2},\ldots,\fff{i}{q}$ as follows. If $q\geq 2$, then $C$ contains~$q$ internally vertex disjoint paths~$P_1,P_2,\ldots,P_q$ whose end-vertices are in $L$ (and whose internal vertices are not in~$L$).
We then 
\begin{itemize}
\item delete $\ff{i}$ and construct a set $X_f=\{\fff{i}{1},\fff{i}{2},\ldots,\fff{i}{q}\}$ of $q$ new vertices,
\item for each $j\in[q]$ and edge~$e$ in~$P_j$, we replace the dual edge $e^*$ of $G^*$ by an edge incident to $\fff{i}{j}$ whose second endpoint is the same as for $e^*$ unless $e^*$ was deleted by some modification for $F_1$.
\end{itemize}
For $q=1$, we formally set $X_f=\{f\}$ and  $f_1=f$, that is, we do not perform any modification.   
The construction is shown in \Cref{fig:F2}.
Notice that the vertices~$\fff{i}{1},\fff{i}{2},\ldots,\fff{i}{q}$ can be embedded in the face~$\ff{i}$ of $G$ such that the resulting graph $H^*$ is plane.
For each edge~$e^* \in E(H^*)$, there is an edge~$e\in E(G^*)$ such that $e^*$ was constructed from the edge that is dual to~$e$ in~$G^*$. Slightly abusing notation, we do not distinguish the edges of~$H^*$ and~$G^*$. In particular, we say that~$e^*$ is dual to~$e$. 


\medskip
Our next aim is to assign labels to the edges of $H^*$ from $\mathbb{Z}_2^d$ for some appropriate $d$. For this, we greedily pick a set $R$ of \emph{representatives} from $S\cup T$ for each $f\in F'$. From each $f\in F_1$, we select two terminals from $S$ and $T$, respectively, that are on the frontier of the face $f$ of $G$. For each $f\in F_2$, we pick one terminal from the frontier of the face of $f$. 
Then, we construct an arbitrary inclusion minimal tree $Q$ in $G$ that spans $R$. This can be done in linear time using standard tools (see, e.g., \cite{CormenLRS09}). We select an arbitrary vertex $u\in R\cap S$ and set $d=|R|-1$. Observe that $|R|\leq 2|F_1|+|F_2|$ and $d\leq 2|F_1|+|F_2|-1$.
Denote by $v_1,\ldots,v_d$ the vertices of $L\setminus \{u\}$ and let $Q_i$ be the $u$-$v_i$-path in~$Q$ for each~$i\in[d]$.
We define~$g\colon E(G)\rightarrow \mathbb{Z}_2^d$ by setting $g(e)=(\delta_1,\ldots,\delta_d)^\intercal$ where for each $i\in[d]$, 
$\delta_i=
\begin{cases}
1&\mbox{if }e\in E(Q_i),\\
0&\mbox{if }e\notin E(Q_i).
\end{cases}$

Moreover, let~$g^*\colon E(H^*)\rightarrow \mathbb{Z}_2^d$ be defined by setting $g^*(e^*)=g(e)$ for each~$e^*\in E(H^*)$ that is dual to~$e\in E(G)$ and let~$c=(c_1,\ldots,c_d)^\intercal\in \mathbb{Z}_2^d$ where 
$$c_i=
\begin{cases}
0&\mbox{if }v_i\in S,\\
1&\mbox{if }v_i\in T
\end{cases}
\text{ for }i\in[d].$$

We show the following claim.

\begin{claim}\label{cl:main}
 The graph~$G^*$ contains a cycle $C^*$ of length at most $k$ such that the vertices of~$S$ and the vertices of~$T$ are in distinct faces of $C^*$ if and only if the instance $(H^*,g^*,c,\{X_f\mid f\in F_2\})$ of \probSC has a solution and the length of a solution cycle is at most $k$.
  \end{claim}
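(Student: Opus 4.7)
The plan is to set up a length-preserving correspondence between cycles $C^*$ in $G^*$ that separate $S$ from $T$ and cycles $D^*$ in $H^*$ solving the constructed \probSC instance. Since the edges of $H^*$ are in bijection with the subset of $E(G^*)$ that survives the $F_1$-deletions, this correspondence at the level of edge sets automatically preserves length, so only the structural properties require verification. Throughout I will rely on Observation~\ref{obs:cycle-sep} (parity of crossings $\Leftrightarrow$ different faces), Lemma~\ref{lem:sepsets} (separating two terminal sets on one face), and Lemma~\ref{lem:nonsepset} (keeping one terminal set together on one face).

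For the forward direction, I start with a separating cycle $C^*$ of length at most $k$ in $G^*$, and argue that it avoids every deleted edge. For each $f\in F_1$, Lemma~\ref{lem:sepsets} forces any visit of $C^*$ to $f$ to cross the frontier cycle in exactly the two ``gap'' edges between $P_1$ and $P_2$, which are precisely the edges surviving at $f$ in $H^*$. For each $f\in F_2$ with $q\ge 2$, Lemma~\ref{lem:nonsepset} combined with the fact that the internal vertices of each $P_j$ are non-terminals forces the two crossings (if any) to lie within a single path $P_j$, so the reinterpreted cycle $D^*$ visits at most one vertex of $X_f$, verifying condition~\cref{xorp2}. To verify~\cref{xorp1}, I apply Observation~\ref{obs:cycle-sep} to each path $Q_i$ in the tree $Q$: the $i$-th coordinate of $\sum_{e^*\in E(D^*)} g^*(e^*)$ is the parity of the number of crossings between $C^*$ and $Q_i$, which equals $1$ exactly when $u$ and $v_i$ lie on opposite sides of $C^*$, and this matches $c_i$ by the choice $u\in S$.

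For the reverse direction, I start with a feasible $D^*$, and merge the vertices of each $X_f$ back into $f$. The visit constraint~\cref{xorp2} ensures the resulting closed walk is a simple cycle $C^*$ in $G^*$ of length $|E(D^*)|$. Showing that $C^*$ separates $S$ from $T$ then breaks into three pieces. First, each representative $v_i\in R\setminus\{u\}$ is placed on the correct side of $C^*$ by Observation~\ref{obs:cycle-sep} together with constraint~\cref{xorp1}. Second, for each $f\in F_2$, the visit constraint plus Lemma~\ref{lem:nonsepset} place every frontier terminal of $f$ on the same side as the representative chosen from that frontier. Third, for each $f\in F_1$, the XOR constraint forces $s_f\in S$ and $t_f\in T$ onto opposite sides, which via the deletion of the duals of $P_1\cup P_2$ and Lemma~\ref{lem:sepsets} can only happen if $C^*$ visits $f$ through the two gap edges, correctly splitting $S\cap V(C)$ from $T\cap V(C)$. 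Since $F'$ covers all of $S\cup T$, every terminal is then pinned to the right side of $C^*$.

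The main obstacle will be to justify that enforcing the XOR on the relatively small representative set $R$ alone is enough to correctly separate all of $S\cup T$. This rests on the fact that the local modifications -- deleting the $P_1,P_2$-duals at $F_1$-faces and the $X_f$-constrained vertex splits at $F_2$-faces -- rigidly couple the behaviour of $C^*$ around each $f\in F'$ to that of its chosen representatives, so that non-representative terminals sharing a frontier are forced onto the same side as their representative. Length preservation is immediate in both directions, so no further calculation is needed there.
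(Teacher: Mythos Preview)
Your proposal is correct and follows essentially the same route as the paper's proof: the same length-preserving edge bijection, the same use of Lemma~\ref{lem:sepsets} to show that at $F_1$-faces the crossings land outside $P_1\cup P_2$, the same use of Lemma~\ref{lem:nonsepset} to confine the crossings at $F_2$-faces to a single $P_j$, and the same appeal to Observation~\ref{obs:cycle-sep} on each $Q_i$ to verify the XOR condition. The reverse direction likewise mirrors the paper---first place the representatives via the XOR constraint, then propagate to the remaining terminals face by face using the local structure at each $f\in F'$.
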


\begin{proof}[Proof of \Cref{cl:main}]
Suppose that $C^*$ is a cycle in $G^*$ of length at most $k$ such that~$S$ and~$T$ are in distinct faces of $C^*$.
Note that $C^*$ corresponds to a cycle $\hat{C}^*$ in $H^*$ by replacing each vertex corresponding to a face~$\ff{i}\in F_2$ by a certain vertex $\fff{i}{j}$ obtained by splitting $\ff{i}$. 

Formally, let $\ff{i} \in V(C^*) \cap F_2$. Since~$\ff{i} \in F_2$, the frontier cycle~$C'$ of $\ff{i}$ contains either only vertices of $S$ or only vertices of $T$. We assume without loss of generality that it contains only vertices from~$S$. Since the vertices of $S$ and the vertices of $T$ are in distinct faces of~$C^*$, we have that the vertices of $S'$ are in the same face of $C^*$. By \Cref{lem:nonsepset}, $C'$ crosses $C^*$ in exactly two edges $e_1$ and $e_2$ such that the vertices of $S'$ are in the same connected component of $C'-\{e_1,e_2\}$. This implies that $e_1$ and $e_2$ are edges of some path $P_j$ in  $C'$ with end-vertices in $S'$ whose internal vertices are not in $S'$. Recall that $H^*$ has a vertex $\fff{i}{j}$ that is incident to the same edges as the vertex~$\ff{i} \in V(C^*)$. Thus, we can replace $\ff{i}$ in~$C^*$ by~$\fff{i}{j}$. 

 To argue that every edge of $\hat{C}^*$ is an edge of $H^*$, note that each edge $e\in E(G^*)$ that is not an edge of $H^*$ is incident to some vertex~$f$ such that the face~$f \in F_1$. Suppose that there is $f \in V(C^*)$ such that~$f\in F_1$. Let~$C'$ be the frontier cycle of~$f$. 
 Note that~$S'=S\cap V(C') \neq \emptyset$ and $T'=T\cap V(C') \neq\emptyset$. Since the vertices of $S'$ and the vertices of $T'$ are in distinct faces of $C^*$, by \Cref{lem:sepsets}, $C'$ crosses $C^*$ in two edges $e_1$ and $e_2$ such that (i) the vertices of $S'$ are in the same connected component of $C'-\{e_1,e_2\}$, (ii)  the vertices of $T'$ are in the same connected component of $C'-\{e_1,e_2\}$, and (iii) the vertices of $S'$ and the vertices of $T'$ are in distinct connected components of $C'-\{e_1,e_2\}$.  For the inclusion-minimal disjoint paths $P_1$ and $P_2$ in~$C'$ such that $S'\subseteq V(P_1)$ and $T'\subseteq V(P_2)$, we have that~$e_1,e_2\notin E(P_1)\cup E(P_2)$. Thus, the edges $e_1^*$ and $e_2^*$ that are incident to~$\ff{i}$ in~$C^*$ are edges of $H^*$.  This concludes the proof that $\hat{C}^*$ is a cycle in $H^*$.

Clearly, $\hat{C}^*$ has the same length as $C^*$. Also, the construction implies that $\hat{C}^*$ contains at most one vertex of $X_i$ for every $i\in[p]$. Let $i\in[d]$ and consider~$c_i$ and the $i$-th coordinate~$g^*(e^*)[i]$ of~$g^*(e^*)$ for some~$e^*\in E(\hat{C}^*)$. Suppose that for the representative terminal $v_i\in R$, it holds that $v_i\in S$. Since~$u\in S$ and~$v_i\in S$, we have that~$u$ and~$v_i$ are in the same face of~$C^*$ and
the number of crosses of the path~$Q_i$ and~$C^*$ is even by \Cref{obs:cycle-sep}. The construction of $\hat{C}^*$ implies that the number of crosses of~$Q_i$ and~$\hat{C}^*$ is even as well. Then,
$\sum_{e^*\in E(\hat{C}^*)}g^*(e^*)[i]=0=c_i$. Similarly, if~$v_i\in T$, we obtain that the number of crosses of $Q_i$ and $\hat{C}^*$ is odd and  $\sum_{e^*\in E(\hat{C}^*)}g^*(e^*)[i]=1=c_i$. Thus,
$\sum_{e^*\in E(\hat{C}^*)}g^*(e^*)=c$. We conclude that~$\hat{C}^*$ is a cycle satisfying the constraints of \probSC. Thus, the instance~$(H^*,g^*,c,\{X_f\mid f\in F_2\})$ of \probSC has a solution cycle whose length is at most~$k$.

For the opposite direction, suppose that  $(H^*,g^*,c,\{X_f\mid f\in F_2\})$ of \probSC has a solution cycle $\hat{C}^*$ whose length is at most $k$. Observe that since~$\hat{C}^*$ contains at most one vertex of~$X_f$ for each $f\in F_2$, $\hat{C}^*$ corresponds to the cycle~$C^*$ in $G^*$ obtained by replacing each vertex~$\fff{i}{j}$ constructed for a face~$\ff{i}\in F_2$ by the vertex~$\ff{i}\in F$. Trivially, $C^*$ has the same length as $\hat{C}^*$. We also have that $\sum_{e^*\in E(C^*)}g^*(e^*)=\sum_{e^*\in E(\hat{C}^*)}g^*(e^*)=c$.
Consider any dimension~$i\in[d]$. If~$v_i\in S$, then $\sum_{e^*\in E(C^*)}g^*(e^*)[i]=c_i=0$. Hence, $u$ and $v_i$ are in the same face of $C^*$ by \Cref{obs:cycle-sep}. If $v_i\in T$, then  $\sum_{e^*\in E(C^*)}g^*(e^*)[i]=c_i=1$ and  $u$ and $v_i$ are in distinct  faces of $C^*$. This proves that the vertices of $S'=S\cap R$ and $T'=T\cap R$ are in distinct faces of $C^*$.

We claim that the vertices of~$S$ and the vertices of $T$ are in distinct faces of $C^*$. To show this, consider a vertex~$v\in (S\cup T)\setminus R$.  By symmetry, we assume without loss of generality that $v\in S$. Since the vertices of~$S'$ and the vertices of~$T'$ are in distinct faces of~$C^*$, it is sufficient to show that there is some~$s\in S'$ such that $s$ and $v$ are in the same face of~$C^*$. Let~$f \in F'$ be a face that covers~$v$. 

Suppose first that $f\in F_1$. Then, $R$ contains two representative vertices~$s\in S$ and~$t\in T$ from the cycle~$C'$ forming the frontier of $f$. Since~$s$ and $t$ are in distinct faces of $C^*$, by \Cref{lem:sepsets}, $f\in V(C^*)$ and  $C'$ crosses $C^*$ in two edges $e_1$ and $e_2$ such that $s$ and $t$ are in distinct connected components of $C'-\{e_1,e_2\}$. By the construction of $H^*$, there is a path $P_1$ in $C'$ such that all vertices in~$S$ incident to~$f$ are contained in~$P_1$ and  $e^*\notin E(H^*)$ for each edge~$e$ in~$P_1$. Hence, $P_1$ is a path in $C'-\{e_1,e_2\}$ because~$e_1^*,e_2^*\in E(H^*)$. By \Cref{lem:sepsets}, the vertex $t$ and the vertices of $S$ incident to~$f$ are in distinct faces of $C^*$. In particular,~$v$ is a different face of~$C^*$ than~$t$ and it is therefore in the same face of $C^*$ as~$s$. 

Assume now that $\ff{i} \in F_2$. Then, $R$ contains a representative~$s\in S$ such that~$s$ is in the frontier of $\ff{i}$. If~$\ff{i} \notin V(C^*)$, then~$s$ and~$v$ are in the same face of~$C^*$ by \Cref{lem:nonsepset}.
Suppose that $\ff{i} \in V(C^*)$. Then, the frontier cycle~$C'$ of~$\ff{i}$ crosses $C^*$ in two edges $e_1$ and $e_2$.
We remind that $C^*$ is constructed from the cycle~$\hat{C}^*$. By the construction of $H^*$, there is a path $P_j$ whose internal vertices do not belong to~$S$ and the edges $e_1$ and $e_2$ are contained in~$P_j$.
Thus, $s$ and $v$ are in the same connected component of $C'-\{e_1,e_2\}$. By \Cref{lem:nonsepset}, $s$ and $v$ are in the same face of $C^*$. This concludes the proof of the claim. 
\end{proof}

By \Cref{cl:main}, solving \probCutUncut for $(G,S,T,k)$ is equivalent to solving  \probSC for $(H^*,g^*,c,\{X_f\mid f\in F_2\})$. For this, we use the algorithm from \Cref{cor:group-labeled-cycle}. 

To evaluate the running time, observe that a face cover $F'$ (if it exists) of size at most $r$ can be constructed in $2^{\Oh(\sqrt{r})}\cdot n^{\Oh(1)}$ time. Given such a cover, the graph $H^*$ together with the sets $X_f$ for $f\in F_2$ can be constructed in polynomial time. 
Because $d\leq 2|F_1|+|F_2|-1\leq 2r$, the lableling $g^*$ and $c\in \mathbb{Z}_2^d$ also can be constructed in polynomial time. Finally, because $d\leq 2|F_1|+|F_2|-1$ and $p=|\{X_i\mid f\in F_2\}|=|F_2|$, we have that $p+d\leq 2r-1$ and the algorithm from \Cref{cor:group-labeled-cycle} runs in $4^{r}\cdot n^{\Oh(1)}$ time. We conclude that the overall running time is  $4^{r+\Oh(\sqrt{r})}\cdot n^{\Oh(1)}$. This completes the proof.

\medskip
The above algorithm for the parameterization by the size of a face cover  uses a planar embedding of $G$ because we have to find a set of faces covering the terminals. However, if we parameterize \probCutUncut by $\ell=|S|+|T|$, then an embedding is not needed and we can use a simplified variant of the algorithm. 
Given an instance $(G,S,T,k)$ of \probCutUncut where $G$ is a planar graph, we use the classical algorithm of Hopcroft and Tarjan~\cite{HopcroftT74} to find a plane embedding of $G$. Then we use the variant of the algorithm where we do not modify $G^*$, that is, we set $H^*=G^*$, and where we assume that all the terminals are representatives, that is, we set $R=S\cup T$. The labeling~$g^*\colon E(H^*)\rightarrow \mathbb{Z}_2^d$ and $c$ are defined in the same way as in the algorithm for the parameterization by the size of a face cover.  By \Cref{obs:cycle-sep}, solving \probCutUncut for $(G,S,T,k)$ is equivalent to solving  \probSC for $(H^*,g^*,c,\emptyset)$. Since~$d=|R|-1=|S|+|T|-1$, we conclude that we can solve the problem in $2^{|S|+|T|}\cdot n^{\Oh(1)}$ time by the algorithm from \Cref{cor:group-labeled-cycle}. 
\end{proof}

\section{Hardness}
\label{sec:hardness}

It is known that \probCutUncut{} is NP-complete~\cite{GrayKLS12} in planar graphs and that it is \classNP-complete in general graphs even if~$|S| = 2$~\cite{HofPW09}.
We strengthen the latter result by showing that \probCutUncut{} remains \classW{1}-hard parameterized by~$|T|$ even if~$|S|=1$ by providing a polynomial-time reduction from \name{Regular Multicolored Clique} parameterized by solution size~$k$---a variant of \name{Multicolored Clique} where each vertex has the same degree~$d$---such that~$|T| = k$.
This problem is known to be \classW{1}-hard and assuming ETH, it cannot be solved in~$f(k) \cdot n^{o(k)}$ time.

\begin{proposition}
 \probCutUncut{} is \classW{1}-hard when parameterized by~$|T|$ even if~$|S|=1$. Moreover, this restricted version cannot be solved in~$f(|T|) \cdot n^{o(k)}$ time unless the ETH breaks.
\end{proposition}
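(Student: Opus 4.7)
My plan is to reduce from \name{Regular Multicolored Clique} (RMCC), where the input is a $d$-regular graph $H$ on $n$ vertices partitioned into independent sets $V_1, \ldots, V_k$ and the task is to find a clique containing exactly one vertex of each $V_i$. RMCC inherits $\classW{1}$-hardness and the ETH lower bound $f(k)\cdot n^{o(k)}$ from \name{Multicolored Clique}, so a polynomial-time reduction that sets $|T| = k$ yields both parts of the proposition.

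Given an RMCC instance, I would construct the multigraph $G$ (the paper allows multigraphs) on vertex set $\{s, t_1, \ldots, t_k\} \cup \{x_v : v \in V(H)\}$. Set $N := kd + k + 2$. For every $v \in V_i$ add $N$ parallel edges between $s$ and $x_v$ together with a single edge $x_v t_i$; for every $\{u, v\} \in E(H)$ add a single edge $x_u x_v$. Let $S := \{s\}$, $T := \{t_1, \ldots, t_k\}$, and budget $k^{\ast} := Nk + (n - k) + k(d - k + 1)$. In particular $|T| = k$ and the construction is polynomial-time.

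Any $S$-$T$ cut $(A, B)$ of $G$ is determined by $S^\ast := \{v : x_v \in B\}$, and by counting contributions and using the $d$-regularity of $H$ one obtains
\begin{align*}
|\cut_G(A)| = N|S^\ast| + (n - |S^\ast|) + d|S^\ast| - 2|E(H[S^\ast])|.
\end{align*}
Because the only neighbours of $t_i$ in $G$ are the $x_v$ with $v \in V_i$ and no $t_i t_j$ edge exists, the requirement that $T$ lie in a single component of $G[B]$ forces $S^\ast \cap V_i \neq \emptyset$ for every $i$. If $H$ has a multicolored clique $\{v_1^\ast, \ldots, v_k^\ast\}$, taking $A := \{s\} \cup \{x_v : v \notin S^\ast\}$ and $B := T \cup \{x_{v_i^\ast}\}$ provides the required connectivity (through $s$ in $A$ and through the clique on the $x_{v_i^\ast}$'s together with the edges $x_{v_i^\ast} t_i$ in $B$) and attains cost exactly $k^\ast$. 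Conversely, any cut of size at most $k^\ast$ must have $|S^\ast| = k$ with $|E(H[S^\ast])| = \binom{k}{2}$ (so $S^\ast$ is a multicolored $k$-clique): for $|S^\ast| = k$ the formula attains $k^\ast$ only at a clique, while for $|S^\ast| = k + j$ with $j \ge 1$ the trivial bound $|E(H[S^\ast])| \le d|S^\ast|/2$ gives $|\cut_G(A)| \ge n + (N - 1)(k + j)$, and our choice $N = kd + k + 2$ makes this strictly exceed $k^\ast$ (since $(N - 1)j > k(d-k+1)$ for every $j \ge 1$).

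The principal obstacle is calibrating $N$ to make the minimum of $|\cut_G(A)|$ unique at a multicolored $k$-clique. The $d$-regularity of $H$ is crucial here, as it collapses the cross-edge contribution to a function of $|S^\ast|$ and $|E(H[S^\ast])|$ alone, so a single polynomial-size multiplicity $N$ suffices to dominate every oversized $S^\ast$. Since $|T| = k$ and the reduction is polynomial, any $f(|T|) \cdot n^{o(|T|)}$-time algorithm for \probCutUncut with $|S| = 1$ would yield an $f(k) \cdot n^{o(k)}$-time algorithm for RMCC, contradicting ETH; analogously, $\classW{1}$-hardness transfers.
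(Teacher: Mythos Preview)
Your proposal is correct and follows essentially the same route as the paper: both reduce from \name{Regular Multicolored Clique}, set $S=\{s\}$ and $T=\{t_1,\ldots,t_k\}$, attach $t_i$ to the vertices of color class $V_i$, and make each original vertex expensively tied to $s$ so that at most $k$ vertices can lie on the $T$-side; the $d$-regularity then pins the cross-edge count to $k(d-k+1)$ exactly when the chosen $k$ vertices form a clique. The only difference is cosmetic: the paper realises the ``expensive tie to $s$'' via $n+2m$ auxiliary degree-$2$ vertices per original vertex (yielding a simple graph), whereas you use $N=kd+k+2$ parallel edges (yielding a multigraph, which the paper explicitly permits); your calibration is a bit tighter but both are polynomial and the correctness arguments are the same.
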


\begin{proof}
Let~$(G,k)$ be an instance of \name{Regular Multicolored Clique}.
Let~$V(G) = \{v_1,v_2,\ldots,v_n\}$, let~$(V_1,V_2,\ldots,V_{k})$ be the~$k$ partition of~$V(G)$, and let~$d$ be the degree of each vertex in~$G$.
We construct an equivalent instance~$(H,S,T,\ell)$ of~\probCutUncut{} as follows.
The graph~$H$ contains~$G$ as an induced subgraph.
Moreover, it contains new vertices~$s,t_1,t_2,\ldots,t_k$ and~$v_i^j$ for each combination of~$i \in [n]$ and~$j \in [n+2m]$.
The vertex~$s$ is connected to~$v_i^j$ for each combination of~$i \in [n]$ and~$j \in [n+m]$.
Moreover, for each~$i\in [n]$, each vertex~$v_i$ is connected to each vertices~$v_i^j$ with~$j \in [n+2m]$.
Finally, for each~$i \in [k]$, the vertex~$t_i$ is connected to all vertices in~$V_i$.
We set~$S=\{s\}$, $T = \{t_1,t_2,\ldots,t_k\}$ and~$\ell=n-k+k(n+2m)+k(d-k+1)$.
This concludes the construction.
Note that~$|T|=k$ and that the reduction takes polynomial time.

It only remains to show that the two instances are equivalent.
To this end, assume that there is a multicolored clique of size~$k$ in~$G$.
For the sake of notational ease, let us assume that the clique consists of vertices~$C = \{v_1,v_2,\ldots,v_k\}$ and let~$v_i \in V_i$ for each~$i \in [k]$.
We delete~$\ell$ edges as follows.
For each~$V_i$, we delete all edges between vertices in~$V_i \setminus \{v_i\}$ and~$t_i$.
We also delete all edges between~$v_i$ and~$v_i^j$ for each combination of~$i \in [k]$ and~$j \in [n+2m]$.
Finally, we delete all edges in~$G$ that have exactly one endpoint in~$C$.
Overall, we have removed~$n-k + k(n+2m) + k(d-k+1)$ edges as each vertex in~$C$ has~$d$ incident edges and~$(k-1)$ of those have the other endpoint also in~$C$.
Note that the resulting graph contains two connected components, one containing all vertices in~$T \cup C$ and the other containing all other vertices.
Thus the resulting instance of \probCutUncut{} is also a yes-instance.

Coversely, assume that there is a set of~$\ell$ edges whose removal disconnects~$S$ from~$T$ while maintaining connectivity between the vertices in~$T$.
Note that in order to maintain connectivity between the vertices in~$T$, at least one edge between~$t_i$ and some vertex in~$V_i$ has to be maintained.
Again, we will assume for notational ease that the edge between~$t_i$ and~$v_i$ remains in the graph.
Note that in order to disconnect~$t_i$ from~$s$, now vertex~$v_i$ needs to be disconnected from~$s$ and thus at least one of the edges~$sv_i^j$ or~$v_i^jv_i$ has to be deleted for each~$i\in [k]$ and each~$j \in [n+2m]$.
Moreover, if there are at least~$k+1$ vertices in~$V$ that do not belong to the connected component of~$s$ in the solution, then we need to delete at least~$k+1 \cdot (n + 2m) > \ell$ edges.
The inequality holds since~$k(d-k+1) \leq kd < nd = 2m$.
We show that the set~$C = \{v_i \mid i \in [k]\}$ of vertices forms a multicolored clique.
First, by construction each vertex in~$C$ is connected to a different vertex~$t_i$ and~$C$ is therefore a multicolored set of vertices.
Second, note that we need to remove all edges between vertices in the connected component of~$s$ and the selected vertices~$v_i$ with~$i \in [k]$.
Since we have already removed~$n-k + k \cdot (n+2m)$ edges, we can remove at most~$k(d-k+1)$ edges.
Moreover, since each vertex has degree~$d$ in~$G$, we have to remove~$d-x$ edges incident to each vertex~$v_i \in C$ where~$x$ is the number of neighbors of~$v_i$ in~$C$.
Note that since there are only~$k$ vertices in~$C$, the minimum number of edges to remove is~$d-k+1$ and this bound can only be achived if~$v_i$ is connected to all other vertices in~$C$.
Since we assumed that there is a solution which removes only~$\ell$ edges, we can infer that each pair of vertices in~$C$ is pairwise adjacent, that is,~$C$ is a multicolored clique.
This concludes the proof.
\end{proof}




\section{Applications}
\label{sec:applications}

\newcommand{\FA}{\ensuremath{\mathcal{F}_A}}
\newcommand{\FB}{\ensuremath{\mathcal{F}_B}}

\newcommand{\subin}[1]{\ensuremath{#1_\textnormal{in}}}
\newcommand{\subout}[1]{\ensuremath{#1_\textnormal{out}}}

In this section, we show how to use our result to generalize two known results from the literature.
We also complement the \classFPT-time algorithms with \classNP-hardness results.
First, we show that a generalization of \probND{} with more than one bridge can be solved in (randomized) \classFPT-time when parameterized by the number of bridges.
The problem \probGND{} is defined as follows.\footnote{We mention in passing that a problem where one only wants to remove a certain number of edges to ensure that each remaining~$s$-$t$-path contains at least one edge in~$B$ has been considered before. However, it has been noted by Cintron--Arias et al.~\cite{cintron-arias2001networkdiversion} that this problem reduces to
  (weighted)
  the case with only a single bridge. We therefore believe that it makes more sense to demand \emph{every} edge in~$B$ to become a bridge, that is, to require that~$B$ is contained in the minimal cut.}

\subsection{Generalized Network Diversion}
\label{sec:appl-network-diversion}

\begin{XProblem}
  {\probGND}
  \Input & A graph $G$, two vertices $s$ and $t$, a set $B$ of edges of $G$, and an integer $k\geq 0$. \\
  \Prob & Decide whether there exists a minimal~$s$-$t$-cut~$(U,W)$ of $G$ with $\left| \cut(U) \right| \leq k$ and $B\subseteq \cut(U)$.
\end{XProblem}

As a simple corollary of \Cref{thm:face-cover-param}, we get also an
\classFPT-time algorithm for \probGND.  We show that this problem is
indeed \classNP-complete in the appendix of this article, in
\Cref{prop:gnd-np-c}.

 \begin{corollary}\label{cor:ND}
\probGND can be solved in $8^{|B|}\cdot n^{\Oh(1)}$ time on planar graphs by a randomized algorithm with one-sided error.
\end{corollary}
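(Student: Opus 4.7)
The plan is to reduce \probGND{} to $2^{|B|}$ instances of \probCutUncut{} on a planar graph, each with $|S|+|T|\le 2+2|B|$, and then invoke \Cref{thm:face-cover-param}. The key insight is that if $(U,W)$ is a minimal $s$-$t$-cut with $B\subseteq\cut_G(U)$, then every edge $b=uv\in B$ has exactly one endpoint in $U$ and the other in $W$, so the restriction of $(U,W)$ to the endpoints of $B$ is captured by a single bit per edge of $B$.

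First, I would preprocess: discard self-loops in $B$ (they can never be in a cut) and, without loss of generality, assume $G$ is connected (otherwise handle components as in \Cref{lem:two-conn}). Then, for each function $\sigma:B\to\{0,1\}$, I build the ``forced'' sets
\[S_\sigma=\{s\}\cup\{u:b=uv\in B,\ \sigma(b)=0\}\cup\{v:b=uv\in B,\ \sigma(b)=1\},\]
\[T_\sigma=\{t\}\cup\{v:b=uv\in B,\ \sigma(b)=0\}\cup\{u:b=uv\in B,\ \sigma(b)=1\}.\]
If $S_\sigma\cap T_\sigma\neq\emptyset$ (a conflict, e.g.\ two edges of $B$ forcing the same vertex to opposite sides, or $s$ forced to $T$), I discard this $\sigma$. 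Otherwise I invoke \Cref{thm:face-cover-param} on the instance $(G-B,\ S_\sigma,\ T_\sigma,\ k-|B|)$ of \probCutUncut, which is planar since $G$ is. I answer YES iff some $\sigma$ yields a YES answer.

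For correctness I would argue both directions. If $(U,W)$ is a minimal $s$-$t$-cut of $G$ with $B\subseteq\cut_G(U)$ and $|\cut_G(U)|\le k$, then setting $\sigma$ according to which endpoint of each $b\in B$ lies in $U$ gives $S_\sigma\subseteq U$ and $T_\sigma\subseteq W$; the cut $(U,W)$ in $G-B$ has size $|\cut_G(U)|-|B|\le k-|B|$, and $(G-B)[U]=G[U]$, $(G-B)[W]=G[W]$ are connected by minimality, so $(U,W)$ is a feasible solution to the \probCutUncut instance. Conversely, if some $\sigma$ yields a feasible $(U,W)$ for the \probCutUncut instance, then in $G$ every edge of $B$ crosses from $U$ to $W$ by construction of $S_\sigma,T_\sigma$, so $B\subseteq\cut_G(U)$ and $|\cut_G(U)|=|\cut_{G-B}(U)|+|B|\le k$; using \Cref{obs:mincut} on the connected components containing $s$ and $t$ I can upgrade this cut to a minimal one without increasing its size or losing the property $B\subseteq\cut_G(U)$.

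For the running time, there are $2^{|B|}$ choices of $\sigma$, and each subcall runs in $2^{|S_\sigma|+|T_\sigma|}\cdot n^{\Oh(1)}\le 2^{2+2|B|}\cdot n^{\Oh(1)}$ time, for a total of $2^{|B|}\cdot 2^{2+2|B|}\cdot n^{\Oh(1)}=8^{|B|}\cdot n^{\Oh(1)}$. The algorithm has one-sided error inherited from \Cref{thm:face-cover-param}: a YES answer is returned only when some subcall witnesses a valid cut, while a genuine YES instance is detected by the correct $\sigma$ with probability at least $1/2$ (boostable by repetition). The only mildly delicate step is handling the degenerate cases where endpoints of $B$ coincide with $s$, $t$, or each other; these are captured uniformly by the conflict check $S_\sigma\cap T_\sigma=\emptyset$ and do not affect the parameter bound $|S_\sigma|+|T_\sigma|\le 2+2|B|$.
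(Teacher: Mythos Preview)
Your proposal is correct and follows essentially the same approach as the paper: guess, for each edge of $B$, which endpoint lies on the $s$-side, add those endpoints to $S$ and the others to $T$, and invoke \Cref{thm:face-cover-param} on the resulting $2^{|B|}$ instances with $|S|+|T|\le 2|B|+2$. The only difference is that you remove $B$ and decrease the budget to $k-|B|$, whereas the paper simply runs \probCutUncut on $G$ with budget $k$; the latter is slightly cleaner since every edge of $B$ automatically crosses the cut once its endpoints are placed in $S_\sigma$ and $T_\sigma$, so the deletion and the subsequent ``upgrade'' step via \Cref{obs:mincut} are unnecessary.
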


\begin{proof}
Let~$(G,s,t,B,k)$ be an instance of \probGND.  We consider all possible sets of terminals $S$ and $T$ by guessing which endpoint of each edge in~$B$ is on the same side of a solution cut as~$s$.  Initially, $S:=\{s\}$ and $T:=\{t\}$. Then we have $2^{|B|}$ possibilities to include one endpoint of each edge~$e\in B$ in~$S$ and the other in~$T$. For each choice of $S$ and $T$, we run the algorithm from \Cref{thm:face-cover-param}. Since~$|S|+|T|\leq 2|B|+2$, the overall running time is $8^{|B|}\cdot n^{\Oh(1)}$.
\end{proof}

Complementing the algorithm above, we show that the \classFPT running time can probably not be improved to a polynomial running time in the appendix of this article, in \Cref{prop:gnd-np-c}.

\subsection{Location Constrained Shortest Path}
\label{sec:appl-locat-constr}

Our second application regards the problem \ProblemName{Location Constrained Shortest Path} studied by Duan and Xu~\cite{duan2014connectivitypreserving}.

\begin{XProblem}
  {Location Constrained Shortest Path}
  \Input & A plane graph $G$, two vertices $s$ and $t$ on the outer face and an interior face $F$\\
  \Prob & Find a shortest interior $s$-$t$ path \emph{below} $F$.
\end{XProblem}

An \emph{interior path} is a path in which only the endpoints can be on the outer face.
Given an interior~$s$-$t$-path $P$, we can extend it to a simple cycle~$C$
by appending the path on the outer face from $t$ to $s$ which is
on the lower side of the graph. This is a simple cycle since $P$ does
not use any exterior vertices except~$s$ and~$t$.
A face $F$ is now said to be \emph{below $P$} if~$F$ is inside~$C$.
Duan and Xu~\cite{duan2014connectivitypreserving} give a polynomial time
algorithm for \ProblemName{Location Constrained Shortest Path}.

We consider a generalization of  \ProblemName{Location Constrained Shortest Path} which we call \ProblemName{Generalized Location Constrained Shortest Path}.
Given a plane graph~$G$ and two sets $\mathcal{F}_A = F^A_1, F^A_2, \dots, F^A_{p_A}$ and $\mathcal{F}_B = F^B_1, F^B_2, \dots, F^B_{p_B}$ of faces of~$G$, the task is to find a shortest interior~$s$-$t$-path~$P$ such that all faces in~$\mathcal{F}_A$ are \emph{above}~$P$ and all faces in~$\mathcal{F}_B$ are \emph{below}~$P$.

\begin{XProblem}
  {Generalized Location Constrained Shortest Path}
  \Input & A plane graph $G$, two vertices $s$ and $t$ on the outer
  face and interior faces $\FA$ and $\FB$\\
  \Prob & Find a shortest interior $s$-$t$ path $P$ such that $\FA$ is
  above $P$ and $\FB$ is below $P$.
\end{XProblem}

We generalize the result by Duan and Xu by showing that \ProblemName{Generalized Location Constrained Shortest Path} is fixed-parameter tractible when parameterized by $|\mathcal{F}_A| + |\mathcal{F}_B|$.

\begin{proposition}
  \ProblemName{Generalized Location Constrained Shortest Path} can be solved in randomized~$2^{|\FA \cup \FB|} \cdot n^{\Oh(1)}$ time with one-sided error.
\end{proposition}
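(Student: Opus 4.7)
The plan is to reduce \ProblemName{Generalized Location Constrained Shortest Path} to \probSP and invoke \Cref{thm:group-labeled-path} with $d = |\FA \cup \FB|$ and $p = 0$. The key observation is that any interior $s$-$t$-path $P$, together with the lower outer boundary arc $P_{\mathrm{ext}}$ from $t$ to $s$, forms a simple closed curve $C = P \cup P_{\mathrm{ext}}$ in the plane. A face $F$ is below $P$ precisely when $F$ lies strictly inside $C$, i.e.\ when $F$ and the outer face vertex $f_0$ are separated by $C$; and $F$ is above $P$ precisely when $F$ and $f_0$ lie on the same side. By the standard crossing-parity principle (the primal analogue of \Cref{obs:cycle-sep}), this separation is determined by the parity of the number of edges of $C$ that are crossed by any fixed dual path in $G^*$ from $F$ to $f_0$.

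Concretely, enumerate $\FA \cup \FB$ as $F_1,\ldots,F_d$ and, for each $i \in [d]$, fix an arbitrary path $Q_i^*$ in $G^*$ from $F_i$ to $f_0$; let $Q_i \subseteq E(G)$ denote the primal edges dual to edges of $Q_i^*$. I would define an edge labeling $g \colon E(G) \to \mathbb{Z}_2^d$ by letting the $i$-th coordinate of $g(e)$ equal $1$ iff $e \in Q_i$, and define the target $c \in \mathbb{Z}_2^d$ coordinatewise by taking $c[i] \equiv |Q_i \cap P_{\mathrm{ext}}| \pmod{2}$ if $F_i \in \FA$ and $c[i] \equiv |Q_i \cap P_{\mathrm{ext}}| + 1 \pmod{2}$ if $F_i \in \FB$. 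Using the disjoint decomposition $|Q_i \cap C| = |Q_i \cap P| + |Q_i \cap P_{\mathrm{ext}}|$, the XOR identity $\sum_{e \in E(P)} g(e) = c$ will hold iff, for every $i$, the face $F_i$ sits on the correct side of $C$, i.e.\ iff $P$ is a valid location-constrained path.

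To enforce the interior-path constraint, I would pass to the subgraph $G'$ obtained from $G$ by deleting every outer face vertex other than $s$ and $t$; then every $s$-$t$-path in $G'$ is automatically interior, and each such edge carries the label it had in $G$. Applying \Cref{thm:group-labeled-path} to the instance $(G', s, t, g|_{E(G')}, c, \emptyset)$ returns a shortest $s$-$t$-path in $G'$ whose labels xor to $c$, which by the correspondence above is a shortest solution to the input instance. The running time is $2^{d}\cdot n^{\Oh(1)} = 2^{|\FA \cup \FB|}\cdot n^{\Oh(1)}$, and the one-sided error is inherited directly from \Cref{thm:group-labeled-path}.

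The main step that requires care is the crossing-parity equivalence linking the topological statement ``$F_i$ is inside/outside $C$'' to the combinatorial condition on the parity of $|Q_i \cap C|$; this is a Jordan-curve argument applied in the primal direction, strictly analogous to \Cref{obs:cycle-sep}. Once that is in place, and once one checks that deleting non-terminal outer-face vertices does not rule out any interior path, the rest of the reduction is a direct encoding and invocation of \Cref{thm:group-labeled-path}.
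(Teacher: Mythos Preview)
Your reduction is correct, but it follows a different route from the paper's. The paper reduces to \probCutUncut on the \emph{dual} graph: it deletes the exterior vertices other than $s,t$, adds a single primal edge $\{s,t\}$ drawn through the outer region (creating two new faces $A$ and $B$), and then asks for a minimum cut in the dual separating $S=\FA\cup\{A\}$ from $T=\FB\cup\{B\}$. Because $A$ and $B$ share the dual edge corresponding to $\{s,t\}$, any minimal separating cycle in the primal must use $\{s,t\}$, so its remainder is exactly the sought interior $s$-$t$-path. Theorem~\ref{thm:face-cover-param} is then invoked as a black box.

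Your approach instead stays in the primal graph and reduces straight to \probSP: you precompute the fixed contribution of the lower outer arc $P_{\mathrm{ext}}$ and fold it into the target vector $c$, so that the XOR sum along an interior path $P$ encodes the crossing parity of the full cycle $P\cup P_{\mathrm{ext}}$ with respect to dual $F_i$--$f_0$ paths. This is the primal--dual mirror of the paper's argument (indeed, both ultimately unwind to the same parity principle), but it bypasses the \probCutUncut abstraction entirely. The advantage of your route is directness and a constant-factor saving (no extra $A,B$ terminals); the advantage of the paper's route is that it showcases \probCutUncut as a reusable hammer. Both achieve the stated $2^{|\FA\cup\FB|}\cdot n^{\Oh(1)}$ bound.
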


\begin{proof}
  We show that  \ProblemName{Generalized Location Constrained Shortest Path} is a special case of \probCutUncut.
  Let $(G, s, t, \FA, \FB)$ be the input to \ProblemName{Generalized Location Constrained Shortest Path}, with $\mathcal{O}$ the vertices on the outer face except $s$ and $t$.  Notice that in the graph $G - \mathcal{O}$, $s$ and $t$ must be in the same connected component, otherwise we have a trivial no-instance.
  Let $G'$ be the graph $G - \mathcal{O} + \{s,t\}$, i.e., the graph where we add one additional edge between~$s$ and~$t$ where we draw this edge in the embedding above the rest of the graph.
  We call the new (non-outer) face~$A$ and the new outer face~$B$.
   Now, let $S' = \FA \cup \{A\}$ and $T' = \FB \cup \{B\}$ and let~$G^*$ be
  the dual graph of $G'$.  Let $S$ be the vertices corresponding to the
  faces of $S'$ and $T$ the vertices corresponding to the vertices of
  $T'$. We show next that $(G^*, S,T,k)$ is a yes-instance for \probCutUncut if and only if $(G, s, t, \FA, \FB)$ has an interior $s$-$t$-path of length $k-1$.

  Since the algorithm for \probCutUncut finds a minimal cut in $G^*$ of
  size at most $k$, this corresponds to a simple cycle of length at most $k$ in
  $G'$.
  Note that the two faces $A$ and $B$ are incident to each other,
  hence the edge between the corresponding two vertices in $G^*$ must be part of the cut.
  This means that the new edge~$\{s,t\}$ is part of the cycle, which means that the rest of the cycle is an interior $s$-$t$-path of length at most~$k-1$.
\end{proof}

We defer the
\classNP-completeness
of
\ProblemName{Generalized Location Constrained Shortest Path}
to the appendix. 

\bigskip%
\noindent%
These two problems, \probGND{} and \ProblemName{Generalized Location
  Constrained Shortest Path} are just two applications that is solved
directly by our \probCutUncut{} algorithm from
\Cref{thm:face-cover-param}.  We assume there are many more
applications.
%


\section{Conclusion}
\label{sec:conclusion}

In this paper, we showed that  \probCutUncut is \classFPT on planar graphs parameterized by the number of terminals.
We also prove a more general result that the problem remains FPT parameterized by the minimum number of faces required to cover the terminals. Our result implies a polynomial time algorithm solving \probND on planar graphs. 
We complement this result by showing that \probCutUncut{} parameterized by the number of terminals ($|S|+|T|$) is W[1]-hard in general graphs even when~$|S|=1$.

%

First, let us remark that the algorithm in \Cref{thm:face-cover-param} for  \probCutUncut{} parameterized by the number of faces is given for plane graphs because the minimum number of faces covering the terminals depends on the embedding. However, the standard techniques based on SPQR trees~\cite{GutwengerM00,HopcroftT73} can be used to show that it is \classFPT{} to decide,
given a planar graph $G$, a set of vertices $X$, and an integer $r\geq 1$, whether $G$ admits a plane embedding such that $X$ can be covered by at most $r$ faces. Thus, the result can be extended (with worse running time) to planar graphs admitting embeddings such that the terminals can be covered by at most $r$ faces.

We conclude with a few open problems.
\begin{enumerate}
\item First we repeat the long-standing open question, whether \probND{} is polynomial-time
  solvable in general graphs. Similar question is valid even for a graph embeddable in a torus. 
\item A natural extension of the   \probCutUncut is to extend it to a larger number of sets. Since on general graphs, \textsc{$3$-Way Cut} is \classNP-complete   
\cite{DahlhausJPSY94}, the same holds for  \ProblemName{Three-Sets Cut-Uncut} even when all sets are of size one. 
However, for planar graphs, \textsc{$k$-Way Cut} is solvable in polynomial time for fixed $k$ \cite{DahlhausJPSY94,KleinM12,pandey2022planarmultiway}. 
As a very concrete open question, we ask whether 
  \ProblemName{Three-Sets Cut-Uncut} is 
  solvable in polynomial time on planar graphs when two sets are of size one and one set is of size two.
\item Our algorithm is randomized and works only on unweighted graphs;
  can we get rid of either of these restrictions? 
\end{enumerate}



\bibliographystyle{siam}

\bibliography{bibliography}

\newcommand{\probstRPP}{\ProblemName{$f_s$-$f_t$-separating Rural Postman Problem}}
\newcommand{\probRPP}{\ProblemName{Rural Postman Problem}}
\newcommand{\probRpp}{\ProblemName{RPP}}
\newcommand{\probHC}{\ProblemName{Hamiltonian Cycle}}

\appendix

\section{Proofs from the applications section}
\label{sec:proofs-from-appl}

%
%
%
%

\subsection{Generalized \probND{} is \classNP-complete}
\label{sec:network-div}

We begin this section by recalling the problem definition:

\begin{XProblem}
  {\probGND}
  \Input & A graph $G$, two vertices $s$ and $t$, a set $B$ of edges of $G$, and an integer $k\geq 0$. \\
  \Prob & Decide whether there exists a minimal~$s$-$t$-cut~$(U,W)$ of $G$ with $\left| \cut(U) \right| \leq k$ and $B\subseteq \cut(U)$.
\end{XProblem}

\begin{proposition}
  \label{prop:gnd-np-c}
  \probGND is \classNP-complete, even when restricted to subcubic planar graphs.
\end{proposition}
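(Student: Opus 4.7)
Membership in \classNP{} is straightforward: a certificate is the edge set $C \subseteq E(G)$ of a candidate cut, and we can verify in polynomial time that $B \subseteq C$, $|C| \le k$, that $G - C$ has exactly two connected components, and that they contain $s$ and $t$ respectively (which simultaneously certifies that $(U,W)$ is an $s$-$t$-cut and that it is minimal).

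For \classNPH{}ness on subcubic planar graphs, my plan is to leverage the planar-duality reformulation that is already central to this paper. A minimal $s$-$t$-cut $(U,W)$ in a plane graph $G$ corresponds bijectively to a simple cycle $C^*$ in the dual $G^*$ separating $s$ from $t$, and the constraint $B \subseteq \cut(U)$ translates to $B^* \subseteq E(C^*)$. Hence \probGND{} on subcubic plane graphs is equivalent to the problem of finding a simple cycle of length at most $k$ in a plane near-triangulation (the dual of a subcubic graph has all face lengths at most three) that contains a prescribed edge set $R = B^*$ and separates two distinguished faces. This is precisely the $f_s$-$f_t$-separating variant of the \probRPP{} on near-triangulations, and that problem in turn captures \probHC{} on planar cubic graphs (Garey--Johnson--Tarjan), which is the NP-hard source I would reduce from.

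The concrete reduction proceeds as follows. Start from a planar cubic 3-connected graph $H$ in which Hamiltonicity is NP-hard. Build a plane graph $H'$ by embedding $H$ inside an outer ``frame'' that introduces two new faces $f_s$ and $f_t$ on opposite sides, and subdivide / augment $H$ with edge gadgets so that each edge $e \in E(H)$ is ``forced'' into any separating simple cycle via a required edge $r_e$ in a small local triangle-gadget placed at $e$. Take $R$ to be the collection of these $r_e$'s. By construction, a simple cycle in $H'$ containing all of $R$ must traverse every vertex of $H$ exactly once, so it corresponds to a Hamiltonian cycle of $H$; conversely a Hamiltonian cycle of $H$ yields such a separating cycle after rerouting through each edge gadget. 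Setting $G$ to be the primal of $H'$ and $B$ the primal edges of $R$, with $s$ and $t$ placed in the faces $f_s$ and $f_t$, yields the desired \probGND{} instance.

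The main obstacle is engineering the edge gadgets so that three properties hold simultaneously: (i) correctness, i.e.\ the added gadgets introduce no unexpected simple cycles through $R$ that separate $f_s$ from $f_t$ while missing some vertex of $H$; (ii) planarity of $H'$ and of the chosen embedding; and (iii) all faces of $H'$ have length at most three, so that the primal $G$ is subcubic. Condition (iii) is the most delicate: it forces me to triangulate every region introduced by the gadgets using only ``free'' edges (those not in $R$) that cannot short-circuit the intended Hamiltonian route. I expect this to be handled by a local gadget consisting of a constant-sized triangulated patch whose only feasible traversal matches a single ``entry--exit'' pair, mirroring standard gadget constructions in the planar Hamiltonicity literature.
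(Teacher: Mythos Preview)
Your high-level strategy---pass to the dual, reformulate \probGND{} as an $f_s$-$f_t$-separating \probRPP{}, and reduce from \probHC{} on planar cubic graphs---is exactly the paper's route. The gap is in the gadget placement. You attach a required edge $r_e$ to \emph{every edge} $e\in E(H)$ and then assert that a simple cycle through all of $R$ ``must traverse every vertex of $H$ exactly once''. That does not follow: a cubic $n$-vertex graph has $3n/2$ edges, so a cycle forced to pass through a gadget at each edge would have to make $3n/2$ local traversals, whereas a Hamiltonian cycle uses only $n$ edges of $H$. Edge-level required edges encode an edge-cover/postman condition, not Hamiltonicity, and there is no budget argument that recovers it.

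The paper's construction instead places one required edge \emph{per vertex}: each vertex $v$ of the cubic $H$ is replaced by a \thenetg{} (a triangle with three pendant edges, one toward each neighbour of $v$), and a single triangle edge $e_v$ is put into $B'$. Any simple cycle containing $e_v$ must enter and leave that \thenet{} through two distinct pendants, so a cycle through all $e_v$'s visits every original vertex exactly once and uses exactly one original edge between consecutive \thenet{}s---i.e.\ it is a Hamiltonian cycle of $H$. The two separating faces $f_s,f_t$ are simply the two faces bordering one chosen $e_v$, so no external frame is needed. This also makes the constructed graph subcubic directly (triangle vertices have degree~$3$, identified pendants degree~$2$), so the triangulation machinery you anticipate for condition~(iii) never arises.
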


First, note that \probGND{} on planar graphs is equivalent to finding a
shortest simple cycle in the dual graph that visits all dual edges $B^*$
of edges in~$B$ where the faces~$s^*$ and~$t^*$ lie in different faces
of the cycle.
We call this problem \probstRPP{}, or simply \probRpp{}.
We show that \probRpp{} is \classNP-hard via a reduction from \probHC{} on cubic planar graphs~\cite{garey1979computers}.
Containment in \classNP{} is trivial, so we address only the hardness.

\begin{proof}[Proof of \Cref{prop:gnd-np-c}]
  Let~$G$ be an instance of \probHC{} where~$G$ is a cubic planar graph.
  We will build an equivalent instance~$(G',f_s,f_t,B',k'=n)$ of  \probRpp{} as follows.
  To build~$G'$, we start with~$G$ and replace every vertex with a \thenetg{} and connect its three neighbors to the three pendant vertices, respectively.
  See \Cref{fig:many-bridges-reduction} for an illustration of a \thenetg{} and an example of the described construction.
  We then pick any one of the three edges in the triangle of the \thenet{} to be contained in $B'$.
  Then we take one arbitrary edge $b \in B'$ and let $f_s$ and $f_t$ be
  the two different faces on each side of $b$ (these faces exist
  regardless of embedding).
  Observe that in any cycle $C$ that traverses $b$, the two faces $f_s$
  and $f_t$ will lie in different faces of $C$.


  Let $C$ be a Hamiltonian cycle in a cubic planar graph.  If $u,v,w$ is
  a subpath of $C$, then we can construct the corresponding subpath in
  $G'$ by going from $u$ to $e_v$ (the chosen edge of the triangle in
  the \thenet{} corresponding to $v$) and then to $w$.  Since $C$ is a
  Hamiltonian cycle, the corresponding $C'$ in $G'$ is clearly a
  shortest (simple) cycle.

  Let $C'$ be a cycle that visits all the edges in $B'$.  Since
  $e_v \in B^*$ is an edge in the triangle of a~\thenet{}, the path has
  to come in through a pendant vertex of the \thenet{} and leave through
  a different pendant vertex of the \thenet{}.  Construct the cycle $C$
  in which for each edge $e_v$ traversed in $C'$, we pick $v$ to $C$.
  \begin{claim}
    $C$ is a Hamiltonian cycle in $G$, i.e., 1.\ $C$ is a closed walk,
    2.\ $C$ spans $V(G)$, and 3.\ $C$ does not have repeated vertices.
  \end{claim}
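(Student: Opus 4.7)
The plan is to deduce the three properties of $C$ from a local analysis of how the simple cycle $C'$ can traverse a single \thenetg{}. The key observation is that in $G'$ every pendant has degree exactly two: one edge into its triangle and one edge to a pendant of a neighboring \thenetg{}. Consequently, whenever a pendant lies on $C'$, both of its incident edges must belong to $C'$, so the pendant acts as an entry/exit point for a maximal subwalk of $C'$ inside the surrounding \thenetg{}.

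I would then define a \emph{visit} of $C'$ to the \thenetg{} of $v$ to be a maximal subwalk of $C'$ whose vertex set lies in that \thenetg{}. Since the only edges leaving a \thenetg{} go through its pendants, each visit begins and ends at a pendant, and hence consumes exactly two of the three pendants of that \thenetg{}. Because each \thenetg{} has only three pendants, at most one visit to it is possible by a simple pigeonhole argument (two visits would need four distinct pendants). On the other hand, the edge $e_v \in B' \subseteq E(C')$ forces both endpoints of $e_v$, which lie in the triangle of the \thenetg{} of $v$, onto $C'$, so this \thenetg{} is visited at least once. Thus every \thenetg{} is visited exactly once, and the unique visit enters at one pendant and leaves at another; by construction these two pendants correspond to two distinct neighbors of $v$ in $G$.

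The three properties now follow readily. Property~2 holds because every \thenetg{} is visited, so every $v \in V(G)$ appears in $C$. Property~3 holds because each \thenetg{} is visited at most once, so no vertex of $G$ is repeated in $C$. For property~1, the cycle $C$ is obtained from $C'$ by contracting each visit to its corresponding vertex of $G$; consecutive vertices in $C$ are adjacent in $G$ because the two pendants used by a visit are joined across pendant-pendant edges to pendants of \thenetgs{} of two distinct neighbors of $v$, and contracting subwalks of a closed walk preserves closedness. The main obstacle is the at-most-one-visit statement, which hinges on the degree-two structure of the pendants together with the pigeonhole counting of pendants per visit; everything else reduces to bookkeeping once that is in place.
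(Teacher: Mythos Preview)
Your proposal is correct and follows the same underlying idea as the paper's proof: both hinge on the fact that the simple cycle $C'$ enters and leaves each \thenetg{} through exactly two distinct pendants, hence traverses each \thenetg{} precisely once. The paper states this observation just before the claim and then dispatches the three items in one line each, whereas you make the ``exactly one visit per \thenetg{}'' statement explicit and justify it via the degree-two structure of the pendants and a pigeonhole count---a genuine improvement in rigor over the paper's rather terse treatment.

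One small inaccuracy: in the construction, the two \thenetgs{} along an edge of $G$ are connected by \emph{identifying} a pendant of each (yielding a single degree-two vertex with one edge into each triangle), not by a ``pendant--pendant edge''. This does not affect your argument, since all you use is that each pendant has degree two and that leaving a \thenetg{} at a pendant means entering the \thenetg{} of a neighbor; both facts hold under the identification model.
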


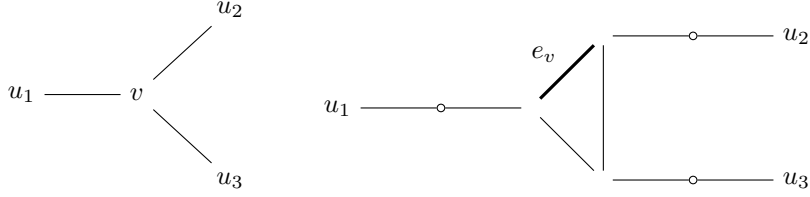
\begin{figure}
  \centering
  \begin{tikzpicture}
    \tikzset{main node/.style={circle,draw,minimum size=1cm,inner sep=0pt}}
    \node (v) [] {$v$};
    \node (p1) [left = 1cm of v] {$u_1$};
    \node (p2) [above right = 1cm of v] {$u_2$};
    \node (p3) [below right = 1cm of v] {$u_3$};

    \path[draw] (v) edge node {} (p1);
    \path[draw] (v) edge node {} (p2);
    \path[draw] (v) edge node {} (p3);
  \end{tikzpicture}
  \qquad
  \begin{tikzpicture}
    \tikzset{main node/.style={circle,draw,minimum size=1cm,inner sep=0pt}}
    \node[] (t1) [] {};
    \node[] (t2) [above right = 1cm of t1] {};
    \node[] (t3) [below right = 1cm of t1] {};

    \node[draw,circle,minimum size=0.1cm, inner sep=0pt] (p1) [left = 1cm of t1]  {};
    \node[draw,circle,minimum size=0.1cm, inner sep=0pt] (p2) [right = 1cm of t2] {};
    \node[draw,circle,minimum size=0.1cm, inner sep=0pt] (p3) [right = 1cm of t3] {};

    \node[] (u1) [left = 1cm of p1] {$u_1$};
    \node[] (u2) [right = 1cm of p2] {$u_2$};
    \node[] (u3) [right = 1cm of p3] {$u_3$};

    \path[draw,very thick] (t1) edge node[above left] {$e_v$} (t2);
    \path[draw] (t2) edge node {} (t3);
    \path[draw] (t1) edge node {} (t3);
    \path[draw] (t1) edge node {} (p1);
    \path[draw] (t2) edge node {} (p2);
    \path[draw] (t3) edge node {} (p3);
    \path[draw] (u1) edge node {} (p1);
    \path[draw] (u2) edge node {} (p2);
    \path[draw] (u3) edge node {} (p3);
  \end{tikzpicture}%
  \caption{Reduction from \probHC{} on
    cubic planar graphs to \probGND. On the left side, a
    vertex~$v$ in the input graph is depicted with its three neighbors. On the
    right side, the \thenetg{} corresponding to~$v$ is shown. Notice that there is a path between any pair of neighbors of $v$ that passes through the edge~$e_v$ in the gadget.}
  \label{fig:many-bridges-reduction}
\end{figure}

\medskip\noindent%
\textit{Proof of claim.}
\begin{enumerate}
\item Suppose that $uv$ is an edge in $C$.  Then this was constructed
  because in $C'$ we visit $e_u$ and then $e_v$.  But $e_u$ and $e_v$
  are then neighboring \thenetgs{}, hence $uv$ is an edge in $G$.
\item Every vertex is traversed (since for each vertex, we pick some
  triangle edge to go into $B'$)
\item Since $C'$ is a simple cycle, we do not visit any edge $e_v$
  twice, hence in $C$, no vertex is repeated.
\end{enumerate}

Finally, we merely note that since \thenetg{} is subcubic and planar,
the constructed graph $G'$ is a subcubic planar graph (the connection
between two \thenetgs{} is by identifying two pendant vertices,
resulting in a single degree-2 vertex).
\end{proof}


%
%
%
%

\subsection{Generalized Location Constrained Shortest Path is \classNP-complete}

We show that \ProblemName{Generalized Location Constrained Shortest Path} is \classNP-complete.



\begin{proposition}
  \label{prop:glcsp-np-c}
  \ProblemName{Generalized Location Constrained Shortest Path} is
  \classNP-complete.
\end{proposition}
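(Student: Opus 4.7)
First, containment in \classNP{} is straightforward: given a candidate interior $s$-$t$-path $P$ of length at most $k$, one closes $P$ with the outer-face arc from $t$ back to $s$ into a simple cycle $C$ and checks in polynomial time that each face $F \in \mathcal{F}_A \cup \mathcal{F}_B$ lies on the prescribed side of $C$.

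For \classNP{}-hardness, my plan is to reduce from \probCutUncut{} on planar graphs, which is \classNP{}-complete by Gray et al.~\cite{GrayKLS12} via the hardness of \probTwoDCS{}. The reduction essentially reverses the GLCSP-to-\probCutUncut{} construction used in the proof of the preceding proposition. Given a planar \probCutUncut{} instance $(G, S, T, k)$ with $G$ $2$-connected, pick representatives $s_0 \in S$ and $t_0 \in T$ and, after preprocessing (see below), arrange that they share a face which is designated as the outer face of the embedding. Add the edge $e_0 = s_0 t_0$ drawn through that outer face to obtain $G'$; every minimum $S$-$T$ separating cut in $G'$ must contain $e_0$. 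Set $H := (G')^* - e_0^*$, declare the endpoints of $e_0^*$ as $s, t$ (both on the outer face of $H$ obtained by merging the two faces of $(G')^*$ previously incident to $e_0^*$), and put $\mathcal{F}_A := \{ f_v : v \in S \setminus \{s_0\} \}$, $\mathcal{F}_B := \{ f_v : v \in T \setminus \{t_0\} \}$ where $f_v$ denotes the face of $H$ corresponding to $v \in V(G')$. By the standard duality between minimal cuts in $G'$ and cycles in $(G')^*$, a minimal $S$-$T$-cut in $G$ of size $\ell$ corresponds bijectively to a cycle in $(G')^*$ of length $\ell + 1$ through $e_0^*$ separating the $S$-faces from the $T$-faces, and hence to an interior $s$-$t$-path of length $\ell$ in $H$ satisfying the face constraints; so the constructed GLCSP instance has a solution of length at most $k$ if and only if $(G, S, T, k)$ is a yes-instance.

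The hard part will be the preprocessing step: in a general planar \probCutUncut{} instance, no pair of representatives from $S$ and $T$ need share any face, and in that case we can neither planarly add $e_0$ nor place $s_0, t_0$ on a common outer face. My plan is to augment $G$ along a shortest dual path from the face containing $s_0$ to the face containing $t_0$ by subdividing each primal edge crossed with a new degree-$2$ vertex and then re-embedding along the resulting ``slit,'' so that $s_0$, $t_0$, and the new subdivision vertices all appear on a newly-created common face designated as the outer face. Since a degree-$2$ subdivision vertex not in $S \cup T$ can always be freely assigned to either of its neighbors' sides of any cut without changing the cut size, minimal separating cuts of the augmented graph are in size-preserving bijection with minimal separating cuts of $G$. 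As a cleaner alternative, I would inspect the Gray et al.\ reduction from \probTwoDCS{} to check whether it already produces \probCutUncut{} instances where representatives in $S$ and $T$ share a face, which would obviate the preprocessing entirely.
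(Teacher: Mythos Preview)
Your approach is different from the paper's, which gives a direct gadget reduction from \ProblemName{Directed Planar Hamiltonian $s$-$t$-Path}: each vertex $v$ is replaced by a $4$-vertex ``diamond'' with two internal faces, one placed in $\mathcal{F}_A$ and one in $\mathcal{F}_B$, forcing any feasible path to traverse the central edge of every gadget exactly once and in a fixed direction.

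Your idea of inverting the GLCSP-to-\probCutUncut{} reduction is natural, but as written it has two genuine gaps. First, the ``slit'' preprocessing is not a well-defined graph operation: subdividing primal edges along a dual path does not change the face structure at all, so it cannot place $s_0$ and $t_0$ on a common face. You would need an actual argument (or a verification of the Gray et al.\ instances) here, not a sketch.

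Second, and more seriously, you have not dealt with the \emph{interior path} constraint. In $H=(G')^*-e_0^*$, the outer face is the merged region corresponding to the vertices $s_0$ and $t_0$ of $G'$, so every face of $G'$ incident to $s_0$ or $t_0$ sits on the outer boundary of $H$. A minimal $S$-$T$-cut of $G'$ may well use additional edges incident to $s_0$ or $t_0$ besides $e_0$, and then the corresponding dual cycle passes through further faces incident to $s_0$ or $t_0$; the resulting $s$-$t$-path in $H$ is \emph{not} interior. A concrete failure: take $G=C_4$ with $S=\{v_1\}$, $T=\{v_3\}$, and add $e_0=v_1v_3$ through the outer face. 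Then $(G')^*$ has three vertices $I,A,B$, all of which are incident to $v_1$ and to $v_3$; in $H$ every vertex lies on the outer face, so no interior $s$-$t$-path exists, yet the \probCutUncut{} instance is a yes-instance with $k=2$. This can be repaired (e.g.\ by attaching $s_0,t_0$ as fresh pendant vertices so they have degree~$2$ in $G'$ and only $A,B$ are incident to them), but that still presupposes you have already solved the common-face issue, and neither fix is in your write-up.
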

\begin{proof}
  The decision version of the problem is clearly in \classNP.
  For showing \classNP-hardness, we reduce from \ProblemName{Directed Planar Hamiltonian $s$-$t$-Path}.
  Let $(D=(V,A), s,t)$ be the input instance.

  Our reduction simply replaces every vertex $v \in V \setminus \{s,t\}$ with the following gadget.
  See \Cref{fig:glcsp-reduction} for an illustration.
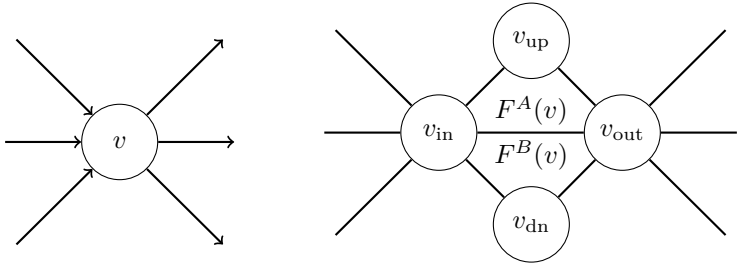
\begin{figure}
    \centering
\begin{tikzpicture}
  \tikzset{main node/.style={circle,draw,minimum size=1cm,inner sep=0pt}}
  \node[main node] (v) {$v$};
  \node[] (l1) [below left = 1cm and 1cm of v] {};
  \node[] (l2) [above left = 1cm and 1cm of v] {};
  \node[] (l3) [left = 1cm of v] {};

  \node[] (r1) [below right = 1cm and 1cm of v] {};
  \node[] (r2) [above right = 1cm and 1cm of v] {};
  \node[] (r3) [right = 1cm of v] {};

  \path[draw,thick,->] (l1) edge node {} (v);
  \path[draw,thick,->] (l2) edge node {} (v);
  \path[draw,thick,->] (l3) edge node {} (v);
  \path[draw,thick,->] (v) edge node {} (r1);
  \path[draw,thick,->] (v) edge node {} (r2);
  \path[draw,thick,->] (v) edge node {} (r3);
\end{tikzpicture}%
\qquad
\begin{tikzpicture}
  \tikzset{main node/.style={circle,draw,minimum size=1cm,inner sep=0pt}}
  \node[main node] (vi) {$\subin{v}$};
  \node[main node] (vu) [above right = 0.5cm and 0.5cm of vi]{$v_\text{up}$};
  \node[main node] (vd) [below right = 0.5cm and 0.5cm of vi]{$v_\text{dn}$};
  \node[main node] (vo) [right = 1.4cm of vi]{$\subout{v}$};
  \node[] (l1) [below left = 1cm and 1cm of vi] {};
  \node[] (l2) [above left = 1cm and 1cm of vi] {};
  \node[] (l3) [left = 1cm of vi] {};

  \node[] (r1) [below right = 1cm and 1cm of vo] {};
  \node[] (r2) [above right = 1cm and 1cm of vo] {};
  \node[] (r3) [right = 1cm of vo] {};

  \node[] [below = 0.1cm of vu] {$F^A(v)$};
  \node[] [above = 0.1cm of vd] {$F^B(v)$};

  \path[draw,thick] (vi) edge node {} (vu);
  \path[draw,thick] (vi) edge node {} (vd);
  \path[draw,thick] (vi) edge node {} (vo);
  \path[draw,thick] (vo) edge node {} (vu);
  \path[draw,thick] (vo) edge node {} (vd);

  \path[draw,thick] (l1) edge node {} (vi);
  \path[draw,thick] (l2) edge node {} (vi);
  \path[draw,thick] (l3) edge node {} (vi);
  \path[draw,thick] (vo) edge node {} (r1);
  \path[draw,thick] (vo) edge node {} (r2);
  \path[draw,thick] (vo) edge node {} (r3);
\end{tikzpicture}
\caption{Reduction from \ProblemName{Directed Planar Hamiltonian $s$-$t$-Path} to \ProblemName{Generalized Location Constrained Shortest Path}.}
\label{fig:glcsp-reduction}
\end{figure}%
  The gadget for $v$ consists of four vertices $\subin{v}$,
  $\subout{v}$, $v_\text{up}$, and $v_\text{dn}$ and the set of edges
  $$E_G(v) = \{
  \{\subin{v}, \subout{v}\},
  \{\subin{v}, v_\text{up}\},
  \{\subin{v}, v_\text{dn}\},
  \{v_\text{up}, \subout{v}\},
  \{v_\text{dn}, \subout{v}\}
  \}.$$
  We denote by~$F^A(v)$ the new face incident to~$v_\text{up}$ and by~$F^B(v)$ the new face incident to~$v_\text{dn}$.
  We embedd all gadgets in the plane in such a way that~$\{\subin{v},\subout{v}\}$ is a horizontal line with~$\subin{v}$ on the left side and~$F^A(v)$ above~$\{\subin{v},\subout{v}\}$.
  To finish the graph of the constructed instance of \ProblemName{Generalized Location Constrained Shortest Path}, we replace each arc~$(u,v)$ in the original graph with the undirected edge~$\{\subout{u},\subin{v}\}$.
  Finally, we complete the construction by setting~$\FA = \bigcup_{v \in V} F^A(v)$ and~$\FB = \bigcup_{v \in V} F^B(v)$.
  The vertices~$s$ and~$t$ remain the same in both instances and we assume without loss of generality that~$s$ has no incomming arcs and~$t$ has no outgoing arcs in the original instance.
 

	Since the reduction can clearly be computed in polynomial time, it only remains to show that $(D, s, t)$ is a yes-instance for \ProblemName{Directed Planar Hamiltonian $s$-$t$-Path} if and only if~$(G,s,t,\FA,\FB)$ is a yes-instance for \ProblemName{Generalized Location Constrained Shortest Path}.
	To this end, note that for any edge~$e$ with incident faces~$F_1$ and~$F_2$, if~$F_1 \in \FA$ and~$F_2 \in \FB$, then any solution has to traverse $e$ such that~$F_1$ on the left-hand side.

$\Rightarrow:$
Let $P$ be a directed Hamiltonian $s$-$t$-path in $D$.
We construct $P'$ by replacing every vertex $v$ in~$P$ (except for~$s$ and~$t$) with vertices~$\subin{v}$ and~$\subout{v}$ and with the edge~$\{\subin{v},\subout{v}\}$ and replace each arc~$(u,v)$ in~$P$ with the edge~$\{\subout{u},\subin{v}\}$.
Since $P$ is a simple path, so is $P'$.
Moreover, by construction of~$\FA$ and~$\FB$, $F^A(v)$ will always be on the left-hand side of $P'$ and $F^B(v)$ will always be on the right-hand side of~$P'$.

$\Leftarrow:$
Let~$P'$ be a shortest interior~$s$-$t$-path in the constructed graph.
Since~$P'$ separates each~$F^A(v)$ from its corresponding~$F^B(v)$, the edge~$\{\subin{v},\subout{v}\}$ must be contained in~$P'$.
Since~$s$ has no incomming arcs, the first edge in~$P'$ must be between~$s$ and some vertex~$\subin{v}$.
Since the edge~$\{\subin{v},\subout{v}\}$ is contained in~$P'$ and since~$P'$ is a simple path, this edge must be the second edge in~$P'$.
It is easy to prove via induction that the edges in~$P'$ now alternate between~$\{\subout{v},\subin{u}\}$ and~$\{\subin{u},\subout{u}\}$ for some vertices~$u,v \in V$.
Hence, we can construct a path from~$P'$ by replacing the two edges and the three incident vertices by the arc~$(v,u)$ and the vertex~$u$.
Since the path~$P'$ contains each edge~$\{\subin{v},\subout{v}\}$, starts in~$s$, ends in~$t$, and is a simple path, it follows that the constructed path is a directed Hamiltonian $s$-$t$-path in $D$.
\end{proof}


\end{document}